\definecolor{light-gray}{gray}{0.86}
\newcommand{\mgray}[1]{\colorbox{light-gray}{$#1$}}
\newtheorem{proposition}{Proposition}
\title{Representing Nonterminating Reductions in $\mathbf{F}_2^\mu$}
\titlerunning{Representing Nonterminating Rewriting with $\mathbf{F}_2^\mu$} 
\author[1]{Peng Fu}
\affil[1]{Dalhousie University \\
  \texttt{peng-fu@uiowa.edu}}
\authorrunning{P.\, Fu} 
\keywords{Nonterminating Rewriting, Typed Lambda Calculus, Hereditary Head Normalization, Corecursion, Second-order Type Checking}
\begin{document}

\maketitle

\begin{abstract}
We specify a second-order type system $\mathbf{F}_2^\mu$ that is tailored for representing nonterminations. The nonterminating trace of a term $t$ in a rewrite system $\mathcal{R}$ corresponds to a productive inhabitant $e$ such that $\Gamma_{\mathcal{R}} \vdash e : t$ in $\mathbf{F}_2^\mu$, where $\Gamma_{\mathcal{R}}$ is the environment representing the rewrite system. 
  We prove that the productivity checking in $\mathbf{F}_2^\mu$ is decidable via a mapping to the $\lambda$-Y calculus.
We develop a type checking algorithm for $\mathbf{F}_2^\mu$ based on second-order matching. We implement the type checking algorithm in a proof-of-concept type checker. 
  \end{abstract}

\section{Introduction}
Nontermination has been an active research area in the term rewriting community. Early studies
includes classifying nonterminations based on the concept of looping reduction \cite{dershowitz1987}, i.e.
a reduction of the shape $t \to^+ C[\sigma t]$ for some substitution $\sigma$. More recently,
many nontermination detection techniques are proposed and implemented. Emmes et. al.~\cite{emmes2012proving} considered a generalized notion of looping reduction, e.g. $\sigma_2 \sigma_1^n t \to^+ C[\sigma_3 \sigma_2 \sigma_1^{f(n)} t]$ for some substitutions $\sigma_1, \sigma_2, \sigma_3$ and some ascending linear function $f$. Endrullis and Zantema~\cite{EndrullisZ15} used
a SAT solver to search for a non-empty regular language of terms such that it is closed under reduction and 
does not contain normal forms.

The nonterminating reductions are usually described using 
 mathematical notations and abbreviations. In this paper, we consider a novel representation using a relatively simple type system. 
 In particular, a nonterminating reduction of a term will be encoded as a proof evidence in a type system called $\mathbf{F}_2^\mu$. Representing nonterminating reduction is closely related to proving nontermination, but they have some subtle differences. For proving nontermination, it is enough to exibit a nonterminating reduction for
a term, while a term can admit multiple nonterminating reduction traces, with each trace exibits
a different kind of reduction pattern.

\begin{example}
  \label{fib}
Consider the following two string rewriting rules: $A \to_a AB, B \to_b A$. It is nonterminating by the observation that it contains the rule $A \to_a AB$, which means
there is a nonterminating reduction of the form $A \to_a AB \to_a ABB \to_a ABBB \to_a ...$. 
We can also use a L-system\footnote{See \url{https://en.wikipedia.org/wiki/L-system}.} like parallel reduction strategy to reduce $A$, this gives rise
to the nonterminating reduction: 
  ${{A}} \Longrightarrow {{A}B} \Longrightarrow {{A}BA} \Longrightarrow {{A}BAAB} \Longrightarrow
{{A}BAABABA} \Longrightarrow {{A}BAABABAABAAB} \Longrightarrow ...$. Note that
all the redexes at each step are reduced simultaneously and each word in the sequence is
a concatenation of the previous two. The aforementioned two reduction sequences are fundamentally different. The first one exibits a regular property, i.e. each string at each step can be described by the regular expression $AB^*$. In the second reduction sequence, each string is called a Fibonacci word, and the set of all such words is known to be \textit{context-free free}, i.e. any infinite subset can not be described by a context-free language \cite{yu2000properties}. We will
show how to represent the second reduction sequence in Section \ref{heuristic}.
\end{example}


The main contributions of the paper are the following ones.
\begin{itemize}
\item Inspired by Leibniz equality, we represent a rewrite rule $l \to r$ 
  as a typing environment $\kappa : \forall p . \forall \underline{x} . p\ r \Rightarrow p\ l$, where the type variable $p$ of kind $* \Rightarrow *$ represents a reduction context, $\kappa$
  is a fresh constant evidence and $\underline{x}$ denotes the set of variables in $l$. A specialized kind system is used to ensure the type variable of kind $* \Rightarrow *$ represents a reduction context. 
  We call this representation of rewrite rule \textit{Leibniz representation} in Section \ref{leibniz-rep}.

  \item Nonterminating reductions would result in infinite proof evidence, we use
  the fixed point typing rule to represent the reductions finitely.
  Thus a nonterminating reduction of $t$ in $\mathcal{R}$ can be represented
  as $\Gamma_{\mathcal{R}} \vdash e : t$, where $e$ is an evidence containing a fixed point
  and $\Gamma_{\mathcal{R}}$ is the Leibniz representation of $\mathcal{R}$. 
  We called the resulting type system $\mathbf{F}_2^\mu$ (Section \ref{leibniz-rep}). 

\item We prove that if $\Gamma_{\mathcal{R}}\vdash e : t$ and $e$ is \textit{hereditary head normalizing}(HHN), then we
  can recover from the evidence $e$ a nonterminating reduction of $t$ (Section \ref{meta}). We also prove that the hereditary head normalization is decidable in $\mathbf{F}_2^\mu$. The decidability result is obtained via a mapping from $\mathbf{F}_2^\mu$ to $\lambda$-Y
  calculus, for which HHN is decidable. 
  
\item It is more convenient to write the unannotated proof evidence and let the type checker
  fill in the annotations. For this purpose we develop a second-order
  type checking algorithm in Section \ref{typecheck} and Section \ref{heuristic}. It simplifies the process of representing nonterminations
  in $\mathbf{F}_2^\mu$. We implement a prototype type checker\footnote{The prototype type checker is available at \url{https://github.com/Fermat/FCR}} based on this algorithm and
  give some nontrivial examples in the Appendix. 
\end{itemize}

All the examples and the missing proofs in this paper may be found in the Appendix. 

\section{The Main Idea}
\label{idea}
First, let us consider how to represent a rewrite system in a type system. 
We could model the rewrite rule $l \to r$ as a typing environment
$\kappa : l \Rightarrow r$, like many proof systems for rewriting (\cite{bezem2003term}, \cite{stump2005logical}). 
However, modeling the rewrite rule $l \to r$ as an implication type
$l \Rightarrow r$ will make it difficult to observe the proof evidence.
For example, suppose we have 
a set of ground rewrite rules $A_i \to A_{i+1}$  modelled by $\kappa_i : A_i \Rightarrow A_{i+1}$ for $0 \leq i \leq n$ for some $n$, where $\kappa_i$ is a constant. Then the evidence for
the reduction $A_0 \to^* A_n$ would be $\lambda \alpha . (\kappa_n\ ...\ (\kappa_0\ \alpha)\ ...) : A_0 \Rightarrow A_n$. Informally, we can see
that the evidence $\lambda \alpha . (\kappa_n\ ...\ (\kappa_0\ \alpha)\ ...)$ grows outward as the number $n$ gets larger. 
When the reduction is nonterminating, it would be difficult to observe the very first step of the reduction ($\kappa_0$). Fortunately, this difficulty can be overcome by representing $l \to r$ as $r \Rightarrow l$.
Thus we have the evidence $\lambda \alpha . (\kappa_0\ ...\ (\kappa_n\ \alpha)\ ...) : A_n \Rightarrow A_0$, with $\kappa_i : A_{i+1} \Rightarrow A_i$ for all $0 \leq i \leq n$.
So we can easily observe the first step of the reduction $\kappa_0$ at the outermost position.

Next, we need to model the reduction context in rewriting. Given a rewrite rule $l \to r$,
we have a one-step reduction $C[l] \to C[r]$ for any first-order term context $C$. 
Inspired by \textit{Leibniz equality}, 
we use the type $\forall p . p\ r \Rightarrow p\ l$ to model the rewrite rule $l \to r$. 
The intended reading for this type is that $l$ can be replaced by $r$ under \textit{any} 
first-order term context $p$. Note that $p$ is a second-order type variable of kind $* \Rightarrow *$. So we can
 obtain $C[r] \Rightarrow C[l]$ by instantiating $p$ with $\lambda x . C[x]$ in $\forall p . p\ r \Rightarrow p\ l$. This motivates our definition of \textit{Leibniz representation} for the rewrite rules in Section \ref{leibniz-rep} and the use of the type system $\mathbf{F}_2^\mu$, as its kind
 system 
 enforces that one can only instantiate type variable of kind $* \Rightarrow *$ with a type that represents a term context.

 Last but not least, we need a mechanism to handle the nonterminating reductions.
Consider the cycling rewrite rules: $A \to B$ and $B \to A$,
which are represented as two axioms $\Gamma = \kappa_A : B \Rightarrow A, \kappa_B : A \Rightarrow B$.
There is a cyclic reduction for $A$:  $A \to B \to A \to B \to ...$. Using
the Leibniz representation, the corresponding proof evidence for this
reduction would be an infinite proof evidence $\kappa_A \ (\kappa_B\ (\kappa_A\ (\kappa_B \ ...\ )))$. But we want to use a finite evidence $e$ to represent this nonterminating reduction. The solution here is to use a fixpoint operator. We can represent the infinite proof evidence
finitely as $\mu \alpha. \kappa_A\ (\kappa_B \ \alpha)$, where the $\mu$ is a fixpoint binder
with the operational meaning of $\mu \alpha.e \leadsto [\mu \alpha.e/\alpha]e$. 
This motivates the following fixed point typing rule for $\mathbf{F}_2^\mu$.

\begin{center}
  \begin{tabular}{l}
    \infer[\textit{Mu}]{\Gamma \vdash \mu \alpha .e : T}{\Gamma, \alpha : T \vdash e : T}
  \end{tabular}
\end{center}

\noindent So $\Gamma \vdash \mu \alpha. \kappa_A\ (\kappa_B \ \alpha) : A$ represents a nonterminating reduction of the shape $A \to B \to A \to B \to ...$, since the
unfolding of the evidence $\mu \alpha. \kappa_A\ (\kappa_B \ \alpha)$ gives the
sequence of rules that we are going to apply. Note that not all evidence of type $A$
are representing nonterminating reductions. For example, according
to the typing rule \textit{Mu}, we have $\Gamma \vdash \mu \alpha . \alpha : A$, but $\mu \alpha. \alpha$ does not give any information to reconstruct a nonterminating reduction. We
show in Section \ref{meta} that only the \textit{hereditary head normalizing} evidence are representing the nonterminating reductions.

We conclude this section by recasting our idea in the following example.

\begin{example}
  \label{gg}
 Consider the following rewrite rule.
  \begin{center} $F\ x \to G \ (F \ (G\ x))$
    \end{center}
The term $F \ x$ admits a reduction
sequence $F\ x \to G\ (F \ (G\ x)) \to G^2\ (F \ (G^2\ x)) \to G^3\ (F \ (G^3\ x)) \to ...$,
where $G^i\ x$ is a shorthand for $\underbrace{G\ (G\ ... (G}_i\ x)...)$ for any $i>1$. Using the Leibniz representation, the rewrite system is represented by the following $\mathbf{F}_2^\mu$ environments:

  \begin{center}
   {
     $ \Delta = F : * \Rightarrow *, G : * \Rightarrow *$ \\
     $\Gamma =  \kappa : \forall p . \forall x . p\ (G \ (F \ (G\ x))) \Rightarrow p \ (F\ x)$}
 \end{center}
\noindent  Note that $\kappa : \forall p . \forall x . p\ (G \ (F \ (G\ x))) \Rightarrow p \ (F\ x)$ corresponds to 
  the rewrite rule $F\ x \to G \ (F \ (G\ x))$, where $p$ of kind $* \Rightarrow *$ corresponds to a reduction context.

We will first construct a hereditary head normalizing (productive) evidence $e$ such that $\Gamma\vdash e : F \ x$. Then we will show how to check whether such $e$ is indeed representing the nonterminating reduction above. It is enough to derive $\Gamma \vdash e' : \forall p. \forall x. p\ (F \ x)$ for some $e'$. Consider the following judgement. 

\begin{center}
(1)  $\Gamma , \alpha : \forall p . \forall x. p \ (F \ x) \vdash \lambda
  p . \lambda x . \alpha\ (\lambda y . p\ (G\ y))\ (G\ x) : \forall p
  . \forall x . p\ (G \ (F \ (G\ x)))$

\end{center}

In (1), we instantiate the type of $\alpha$ as follows: 
 $p$ is instantiated by $\lambda y . p\ (G\ y)$ and $x$ is instantiated by $G\ x$. Since
 we know that $(\lambda y . p\ (G\ y))\ (F \ (G\ x)) = p\ (G\ (F\ (G\ x)))$, thus
 $\alpha\ (\lambda y . p\ (G\ y))\ (G\ x)$ has the type 
 $p\ (G\ (F\ (G\ x)))$. The lambda-abstractions $\lambda p.\lambda x. $ is used to quantify over $p$ and $x$ in the type of $\alpha\ (\lambda y . p\ (G\ y))\ (G\ x)$.

 From $\forall p . \forall x . p\ (G \ (F \ (G\ x))) \Rightarrow p \ (F\
   x)$ and $\forall p
  . \forall x . p\ (G \ (F \ (G\ x)))$, we can deduce the following. 

\begin{center}
(2)  $\Gamma , \alpha : \forall p . \forall x. p \ (F \ x) \vdash \lambda
  p . \lambda x . \kappa\ p\ x\ (\alpha\ (\lambda y . p\ (G\ y))\ (G\
  x)) : \forall p. \forall x . p \ (F \ x)$
\end{center}

\noindent We now can apply \textit{Mu} rule to (2) and obtain the following: 

\begin{center}
(3)  $\Gamma \vdash e' \equiv \mu \alpha . \lambda p . \lambda x . \kappa\ p\ x\
  (\alpha\ (\lambda y . p\ (G\ y))\ (G\ x)) : \forall p. \forall x . p
  \ (F \ x)$
\end{center}

\noindent Thus by instantiation we have $\Gamma \vdash e' \ (\lambda y.y)\ x : F\ x$.
Observe the following unfolding of $e' \ (\lambda y.y)\ x$ (we use beta-reduction and $\mu \alpha . e \leadsto [\mu \alpha . e/\alpha]e$ to perform reduction):

\begin{center}
\footnotesize{$e' \ (\lambda y.y)\ x  \leadsto^* \mgray{\kappa\ (\lambda y.y)\ x}\ (e'\
  (\lambda y . G\ y)\ (G\ x))  \leadsto^*
  \kappa\ (\lambda y.y)\ x \ (\mgray{\kappa\ (\lambda y . G\ y)\ (G\ x)} \ (e'\
  (\lambda y . G \ (G\ y))\ (G\ (G\ x)))) \leadsto^* ...$}
\end{center}

\noindent As $\kappa$ takes a reduction context and an instantiation as its first two arguments, the gray subterms $\kappa\ (\lambda y.y)\ x$ and $\kappa\ (\lambda y . G\ y)\ (G\ x)$ can be read as: the first step
of the reduction for $F\ x$ is under the empty context $\bullet$ using $\kappa$ with the instantation $[x/x]$. The second step is also using the $\kappa$ rule, reducing the redex under the term context $G\ \bullet$, with the instantiation $[G\ x/x]$. As $e'\ (\lambda y.y) \ x$ is
 hereditary head normalizing (productive), the exact reduction information for $F\ x$ can be obtained from the unfolding. 

With the help of the prototype type checker for $\mathbf{F}_2^\mu$, the construction of the fully annotated evidence $e' \ (\lambda y.y)\ x$ can be semi-automated. For this example, the user will need to provide the following. 

{\footnotesize
\begin{verbatim}
K : forall p x . p (G (F (G x))) => p (F x)
h : forall p x . p (F x)
h = K h
e : F x 
e = h 
\end{verbatim}
}

\noindent The corecursive equation \texttt{h = K h} can be viewed as a proof sketch for

\noindent {\texttt{forall p x . p (F x)}}, it reflects the observation that the rule \texttt{K} is repeatedly applied in 
the reduction for \texttt{F x}.
The declaration \texttt{e : F x = h}
means that in this case we are providing an evidence for the nonterminating reduction of the term
\texttt{F x} under the empty term context. The type checker will try to fill in the exact term contexts and instantiations using the type checking algorithm we developed. It gives the following output (no existing first-order type checking algorithm can type check the above code).

{\footnotesize
\begin{verbatim}
e : F x = h (\ x1' . x1') x
h : forall p x . p (F x) = 
     \ p0' x1'. K (\ m1' . p0' m1') x1'
                  (h (\ m1' . p0' (G m1')) (G x1'))
\end{verbatim}
}
\end{example}

\section{Modeling First-order Term Rewriting System in $\mathbf{F}_{2}^{\mu}$}
\label{leibniz-rep}
To model term rewriting, we define the type system $\mathbf{F}_{2}^{\mu}$, which restricts the type abstraction of
$\mathbf{F}_{\omega}$ \cite{Girard:72} to second-order. We define Leibniz representation of rewrite rules (Definition \ref{leibniz}) and show how
it can model rewriting via Theorem \ref{ade}. 

\begin{definition}[Syntax of $\mathbf{F}_{2}^{\mu}$]
  \label{syntax}

  \

  {\footnotesize
  \begin{tabular}{llll}
    \textit{Evidence} & $e$ & $::=$  & $\alpha \ | \ \kappa ~\mid~ \lambda \alpha . e ~\mid~ e \ e' ~\mid~
    \mu \alpha . e \ | \ e\ T \ | \ \lambda x. e$
    \\
   \textit{Term Kinds} & $K$ &  $::=$ & $ * \ | \ * \Rightarrow K $
   \\
    \textit{Kinds}& $k$ & $::=$ & $o \ | \ K $
    \\
    \textit{Types} & $T$ & $::=$ & $F \ | \ x \ | \ \lambda x . T \ | \  \forall x : K . T \ | \ T\ T' \ |\ T \Rightarrow T'$
    \\
    \textit{Environment} & $\Gamma$ & $::=$ & $\cdot ~\mid~ \alpha : T, \Gamma \ | \ \kappa : T, \Gamma$
   \\
    \textit{Type Environment} & $\Delta$ & $::=$ & $\cdot ~\mid~ x : K, \Delta \ | \ F : K, \Delta$

  \end{tabular}}
\end{definition}

Note that $\kappa$ denotes an evidence constant and is used to label rewrite rules (see Definition \ref{leibniz}). The letters such as $F, G$ is used to denote constant types. 
We use letters such as $\alpha, \beta$ to denote evidence variables, and $x, y$ to denote type
variables.
We use $\lambda x.e$ to denote type-abstraction on the evidence. Fixed point abstraction $\mu$ in $\mu \alpha . e$ binds the variable $\alpha$ in $e$. Operationally, $\mu \alpha . e$ behaves in the same was as the lambda term
$\mathbf{Y}\ (\lambda \alpha .e)$, where $\mathbf{Y}$ is a fixpoint combinator. In our paper $\mu f . \lambda \alpha_1 .... \lambda \alpha_n . e$
is also represented by the corecursive equation $f \ \alpha_1 \ ... \ \alpha_n = e$.
We use $\forall \underline{x}.T$ as a shorthand for $\forall x_1. ... \forall x_n . T$, and $e\ \underline{e'}$ for $e \ e_1'\ ... \ e_n'$, where the number $n$ is not important.  

 We distinguish two notions of kinds: kind $o$ is intended to classify types that are of formula nature, while kind $K$ is intended to classify types that are of first-order term nature.
Observe that we only allow quantification over the variables of kind $K$ for a type. 
We use $*^n \Rightarrow *$ as a shorthand for $\underbrace{* \Rightarrow ... \Rightarrow *}_n \Rightarrow * $. 


Comparing to $\mathbf{F}_{\omega}$, 
the following kinding rules of $\mathbf{F}_2^\mu$ restrict the level of type abstraction
 to second-order, and stratify the types into two kinds.

\begin{definition}[Kinding Rules]
\label{kindsystem}
\fbox{$\Delta \vdash T : k$}
\

{\footnotesize
\begin{tabular}{lll}
\\
\infer{\Delta \vdash x | F : K}{(x | F :  K) \in \Delta}    
&

\infer{\Delta \vdash T_2\ T_1 : K}{\Delta \vdash T_1 : * & \Delta \vdash T_2 : * \Rightarrow K}
&

\infer{\Delta \vdash \lambda x. T : * \Rightarrow K}{\Delta, x : * \vdash T : K & x \in \mathrm{FV}(T)}

\\
\\

\infer{\Delta \vdash  \forall x : K . T : o }{\Delta, x : K \vdash T : o | * } 

&

\infer{\Delta \vdash  T \Rightarrow T' : o }{\Delta \vdash T : o | * & \Delta \vdash T' : o | *}

  \end{tabular}  
}

\end{definition}
We use $(x | F :  K) \in \Delta$ to abbreviate $x : K \in \Delta$ or $F : K \in \Delta$.
And $\Delta \vdash T : o | *$ means $\Delta \vdash T : o$ or $\Delta \vdash T : *$.
The kinding rule for $\lambda x.T$ is \textit{relevant}, i.e. the lambda bound variable $x$ must
be used in $T$. We have this requirement is because we want types of kind $* \Rightarrow *$ to represent a first-order term context with at least a hole, as the proof of Theorem \ref{faithfulness} needs this.
Given an environment $\Delta$, it is decidable whether a type $T$ is well-kinded. Given
a type $T$, it is also decidable to check if there is a $\Delta$ such that $\Delta \vdash T : k$ for some kind $k$. We use $\forall x . T$ instead of $\forall x : K . T$ in our examples.  
The kind system allows us to separate two different kinds of types in $\mathbf{F}_{2}^{\mu}$:  types that will be used to represent first-order 
terms and types that allow variable instantiation and modus ponens.

\begin{definition}
  We define a reduction relation $T \to_o T'$ on types, it is the compatible closure of type level beta reduction
  $(\lambda x . T)\ T' \to_o [T'/x]T$. 
\end{definition}
\begin{proposition}
  If $\Delta \vdash T : k$, then $T$ is strongly normalizing with respect to $\to_o$, and $\to_o$
  is confluent. 
\end{proposition}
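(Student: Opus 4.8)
The plan is to reduce both assertions to standard facts about the simply-typed $\lambda$-calculus (STLC). The key observation is that the only redexes of $\to_o$ are $\beta$-redexes, and the remaining type constructors — $\forall x:K.\,{\cdot}$, ${\cdot}\Rightarrow{\cdot}$, and the constants $F$ — are inert: they neither create nor destroy a redex. So a well-kinded type is, up to bookkeeping, an STLC term over the kinds read as simple types. Concretely I would fix one base type $\iota$, read each kind $k$ (both a term kind $K = *^n \Rightarrow *$ and the formula kind $o$) as the simple type $\hat{k}$ obtained by replacing every $*$ and every $o$ by $\iota$, and translate a type $T$ with $\Delta \vdash T : k$ to an STLC term $|T|$ of type $\hat{k}$ by: $|x| = x$; $|F| = c_F$ for a fresh constant $c_F$ of type $\hat{K}$ when $F:K \in \Delta$; $|\lambda x.T| = \lambda x.|T|$; $|T\,T'| = |T|\,|T'|$; $|T \Rightarrow T'| = c_{\Rightarrow}\,|T|\,|T'|$ for a constant $c_{\Rightarrow} : \iota \to \iota \to \iota$; and $|\forall x:K.T| = c^K_{\forall}\,(\lambda x.|T|)$ for a constant $c^K_{\forall} : (\hat{K} \to \iota) \to \iota$. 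A routine induction on the kinding derivation shows $|T|$ is well typed in STLC with type $\hat{k}$; the relevance side condition $x \in \mathrm{FV}(T)$ in the rule for $\lambda x.T$ plays no role here.

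Next I would check that the translation commutes with substitution, $|[T'/x]T| = [|T'|/x]|T|$, by a straightforward induction, using that $|\cdot|$ is compositional and neither erases nor duplicates subterms. Consequently a single step $T \to_o T'$ is mapped to a single step $|T| \to_\beta |T'|$: a redex $(\lambda x.T)\,T' \to_o [T'/x]T$ becomes $(\lambda x.|T|)\,|T'| \to_\beta [|T'|/x]|T| = |[T'/x]T|$, and the compatible closure is preserved because $|\cdot|$ is a homomorphism on all constructors. An infinite $\to_o$-reduction from a well-kinded $T$ would therefore induce an infinite $\to_\beta$-reduction from the well-typed STLC term $|T|$, contradicting strong normalization of STLC (Tait's reducibility method). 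Hence every well-kinded $T$ is strongly normalizing for $\to_o$.

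For confluence the cleanest route is the Tait–Martin-L\"of parallel-reduction argument: define parallel reduction on types (contracting a chosen set of $\beta$-redexes simultaneously, and traversing $\forall$, $\Rightarrow$, and the constants congruently), verify that it has the diamond property — the proof is verbatim the one for the pure $\lambda$-calculus, since none of the extra constructors yields a redex — and conclude that $\to_o$, whose reflexive–transitive closure coincides with that of parallel reduction, is confluent; this works for all types, well kinded or not. Alternatively, having already proved strong normalization, one may appeal to Newman's lemma: $\to_o$ is locally confluent because the only overlaps of $\beta$-redexes occur when one redex lies in the body or the argument of another, and these diverging steps are joined via the substitution lemma; strong normalization plus local confluence then yields confluence on well-kinded types.

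The argument is essentially bookkeeping, so there is no deep obstacle; the step that deserves genuine care is the translation — verifying that $|\cdot|$ really lands in STLC, so that the kinding hypothesis is actually used (note $\to_o$ is \emph{not} strongly normalizing on arbitrary ill-kinded ``types'' such as $(\lambda x.\,x\,x)(\lambda x.\,x\,x)$), and checking that each $\to_o$ step maps to a nonempty $\to_\beta$ reduction so that strong normalization genuinely transfers back. Everything else is standard.
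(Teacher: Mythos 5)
Your proposal is correct. The paper states this proposition without proof (it is not among the results proved in the appendix), implicitly treating it as a standard consequence of the fact that the kinding rules impose a simple-type discipline on the type-level language, whose only redexes are $\beta$-redexes; your reduction to STLC via an inert-constant translation, followed by strong normalization of STLC and a Tait--Martin-L\"of (or Newman) argument for confluence, is exactly the standard justification one would supply, and the details you flag (substitution commutation, one step mapping to one step, the irrelevance of the $x \in \mathrm{FV}(T)$ side condition) are the right ones to check.
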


Let $\mathrm{FV}(T)$ denote the set of free variables occuring in $T$. The following theorem shows that the kind system satisfies the subject reduction property and the set of free type variables is invariant
under the $\to_o$-reduction.

\begin{theorem}[Subject Reduction for Kinding]
  \label{sub:kind}
 If $\Delta \vdash T : k$ and $T \to_o T'$, then $\mathrm{FV}(T) = \mathrm{FV}(T')$ and $\Delta \vdash T' : k$. 
\end{theorem}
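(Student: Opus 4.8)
The plan is to proceed by induction on the derivation of $\Delta \vdash T : k$, using a one-step-then-iterate strategy for the reduction: since $\to_o$ is the compatible closure of the single-step rule $(\lambda x.T)\ T' \to_o [T'/x]T$, it suffices to prove the statement for a single compatible-closure step $T \to_o T'$ and then chain it along $\to_o^*$; the general statement follows because $\mathrm{FV}$-equality and kinding are both closed under transitive composition. So I fix a single reduction step and ask where the redex $(\lambda x.S)\ S'$ sits inside $T$.

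First I would handle the base case, where the redex is all of $T$, i.e. $T = (\lambda x.S)\ S' \to_o [S'/x]S = T'$. Here I invert the kinding derivation: the application rule forces $\Delta \vdash S' : *$ and $\Delta \vdash \lambda x.S : * \Rightarrow K$ with $k = K$, and the $\lambda$-rule forces $\Delta, x : * \vdash S : K$ together with the relevance side-condition $x \in \mathrm{FV}(S)$. Then a standard substitution lemma for kinding — $\Delta, x : * \vdash S : K$ and $\Delta \vdash S' : *$ imply $\Delta \vdash [S'/x]S : K$ — gives the kinding half; if this substitution lemma is not yet available in the text I would state and prove it by a routine auxiliary induction. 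For the free-variable half, I need $\mathrm{FV}((\lambda x.S)\ S') = \mathrm{FV}([S'/x]S)$, i.e. $(\mathrm{FV}(S)\setminus\{x\}) \cup \mathrm{FV}(S') = \mathrm{FV}([S'/x]S)$; the inclusion $\supseteq$ of $\mathrm{FV}(S')$ into the right-hand side is exactly where the relevance condition $x \in \mathrm{FV}(S)$ is essential — without it a vacuous substitution would drop $\mathrm{FV}(S')$ and the free-variable set could strictly shrink. This is, as the authors hint, the reason the $\lambda$-rule is relevant, and it is the one spot that needs genuine care rather than bookkeeping.

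Then I would handle the congruence cases, where the redex lies strictly inside $T$: $T = \lambda x.S$, $T = \forall x:K.S$, $T = S \Rightarrow S''$, or $T = S_2\ S_1$, and the reduction happens in one immediate subterm. Each follows mechanically from the induction hypothesis applied to that subterm — the subterm's kind is preserved and its free-variable set is unchanged by IH, so re-applying the same kinding rule (whose premises and side-conditions depend only on these data) reconstructs the derivation for $T'$ with the same kind, and $\mathrm{FV}(T') = \mathrm{FV}(T)$ since free variables are computed compositionally from the subterms. The only subtlety is in the $\lambda x.S$ case, where I must check the relevance side-condition still holds for the reduct: $\mathrm{FV}(S) = \mathrm{FV}(S')$ by IH, so $x \in \mathrm{FV}(S)$ iff $x \in \mathrm{FV}(S')$, and the condition transfers; capture-avoidance conventions ensure the bound $x$ does not interfere with the redex substitution inside $S$.

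I expect the main obstacle to be entirely concentrated in the base case — specifically, making the relevance condition do its job in the free-variable computation and confirming that it is preserved under the congruence steps so the induction goes through. Everything else is a routine simultaneous induction, and the $\to_o^*$ closure is immediate once the one-step case is in hand.
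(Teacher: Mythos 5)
Your proposal is correct and follows essentially the same route as the paper: induction on the kinding derivation, with the root-redex case handled by inversion plus a substitution lemma for kinding, and with the relevance side-condition $x \in \mathrm{FV}(T_2)$ doing exactly the work you identify — guaranteeing $\mathrm{FV}([T_1/x]T_2) = (\mathrm{FV}(T_2)\setminus\{x\}) \cup \mathrm{FV}(T_1) = \mathrm{FV}((\lambda x.T_2)\,T_1)$ — while the congruence cases transfer the side-condition via the IH's free-variable equality. The only cosmetic difference is that you make the substitution lemma explicit where the paper applies it silently, which is if anything an improvement.
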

\begin{definition}[Second-order Types]
  A type $T$ is \textit{flat} iff it is one of the following forms: (1) $T \equiv x$ or $T \equiv F$. (2) $T \equiv T_1 \ T_2$, where $T_1, T_2$ are flat. We say a type $T$ is \textit{second-order} if $T$ is flat or $T \equiv \lambda x_1. ...\lambda x_n . T'$, where $T'$ is flat and $x_i \in \mathrm{FV}(T')$ forall $x_i \in \{ x_1, ..., x_n\}$.
\end{definition}

Note that types such as $\lambda x . F \ x \ x$, $\lambda x. \lambda y. F \ x \ y, \lambda x.x$ are second-order, but $\lambda x .\lambda y. F\ x \Rightarrow F\ y$ are not second-order. We use second-order types to model both first-order term contexts and terms. The following theorem shows that the kind system stratifies types into two kinds. 

\begin{theorem}[Properties of Kinding]
  \label{prop:kinding}
  \
  
  \begin{enumerate}
  \item If $\Delta \vdash T : o$, then $T$ is of the form $\forall x. T'$ or $T_1 \Rightarrow T_2$.
    \item If $\Delta \vdash T : *^n \Rightarrow *$, then the $\to_o$-normal form of $T$ is second-order.
  \end{enumerate}
\end{theorem}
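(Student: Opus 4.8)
The plan is to prove both statements by induction on the structure of the kinding derivation, using the shape of the kinding rules in Definition \ref{kindsystem}. For part (1), I would inspect which rules can conclude a judgment $\Delta \vdash T : o$. Only two rules produce kind $o$ on the right: the rule for $\forall x : K . T$ and the rule for $T \Rightarrow T'$. (The variable/constant rule gives kind $K$, the application rule gives kind $K$, the $\lambda$-abstraction rule gives kind $* \Rightarrow K$ — none of these is $o$, recalling that $K$ ranges over term kinds $* \mid * \Rightarrow K$ and is syntactically disjoint from $o$.) Hence any $T$ with $\Delta \vdash T : o$ must already be of the form $\forall x . T'$ or $T_1 \Rightarrow T_2$, and part (1) follows immediately by case analysis on the last rule, with no induction actually needed.

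For part (2), I would argue by induction on the derivation of $\Delta \vdash T : *^n \Rightarrow *$, i.e. $\Delta \vdash T : K$ with $K = *^n \Rightarrow *$. The relevant cases are the three rules that can conclude a term kind $K$: the axiom rule (where $T$ is a variable $x$ or constant $F$, which is flat, hence already second-order and in $\to_o$-normal form); the application rule $\Delta \vdash T_2\ T_1 : K$ from $\Delta \vdash T_1 : *$ and $\Delta \vdash T_2 : * \Rightarrow K$; and the $\lambda$-abstraction rule $\Delta \vdash \lambda x . T : * \Rightarrow K$ from $\Delta, x : * \vdash T : K$ and $x \in \mathrm{FV}(T)$. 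For the $\lambda$-case, the induction hypothesis gives that the $\to_o$-normal form $T^\downarrow$ of $T$ is second-order, and since $x \in \mathrm{FV}(T) = \mathrm{FV}(T^\downarrow)$ by Theorem \ref{sub:kind}, the normal form of $\lambda x . T$, namely $\lambda x . T^\downarrow$, is again second-order (a $\lambda$ over a second-order body that still uses the bound variable). For the application case I would apply the induction hypothesis to $T_1$ and $T_2$: the normal form of $T_1$ is second-order of kind $*$ (so $n=0$ there: it is flat or a $\lambda$-abstraction), and the normal form of $T_2$ is second-order of kind $* \Rightarrow K$.

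The main obstacle is the application case: when $T_2$'s normal form is a $\lambda$-abstraction $\lambda y . S$ with $S$ flat, the term $T_2^\downarrow\ T_1^\downarrow$ is not itself normal — a further $\to_o$-step fires, substituting $T_1^\downarrow$ for $y$ in $S$. I need to check that the result $[T_1^\downarrow/y]S$ is still second-order (flat or a $\lambda$ over a flat body that uses its binders). Here is where the two cases for a second-order type of kind $*$ matter: $T_1^\downarrow$ is either flat or of the form $\lambda z_1 \ldots \lambda z_m . T_1'$ with $T_1'$ flat; but if $T_1^\downarrow$ has kind $*$ it can have no leading $\lambda$'s (a $\lambda$ forces a kind $* \Rightarrow \cdots$), so $T_1^\downarrow$ is in fact flat. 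Substituting a flat term for a single variable inside a flat term yields a flat term, so $[T_1^\downarrow/y]S$ is flat, hence second-order; and the $\lambda$-prefix of $T_2^\downarrow$ beyond $y$ (if any) is preserved with its variables still free, by the same free-variable-invariance argument. I would also need the auxiliary fact that $\to_o$ is confluent and strongly normalizing on well-kinded types (the Proposition preceding Theorem \ref{sub:kind}), so that "the $\to_o$-normal form" is well-defined, and Theorem \ref{sub:kind} to transport well-kindedness and free-variable sets along reductions throughout. A clean way to package the argument is to prove the slightly stronger statement "if $\Delta \vdash T : K$ then $T$ is strongly normalizing and its normal form is second-order of the evident shape dictated by $K$" and induct on that, which makes the substitution lemma for flat types the only real content.
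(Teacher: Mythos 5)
Your proposal is correct and follows essentially the same route as the paper's proof: part (1) by inspection of which rules can conclude kind $o$, and part (2) by induction on the kinding derivation, with the application case split on whether the normal form of $T_2$ is a variable/constant or a $\lambda$-abstraction, in which latter case one observes that substituting a flat type (forced by kind $*$) into a flat body yields a flat, hence second-order, normal form. Your treatment is somewhat more explicit than the paper's about why $T_1$'s normal form has no leading $\lambda$'s and about preserving the remaining $\lambda$-prefix via free-variable invariance (Theorem~\ref{sub:kind}), but these are exactly the ingredients the paper uses.
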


We define reduction rules for the evidence in the following.

\begin{definition}[Evidence Reduction]
  \label{headred} 
\

{\footnotesize
\textit{Head reduction context}  $\mathcal{H} \  ::=  \ \bullet \ | \ \mathcal{H}\ e \ | \ \lambda \alpha. \mathcal{H} \ | \ \lambda x . \mathcal{H}$

\textit{General reduction context} 
$\mathcal{C} \  ::=  \ \bullet \ | \ \mathcal{C}\ e \ | \ \mathcal{C}\ T  \ | \ \lambda \alpha. \mathcal{C} \ | \ \lambda x . \mathcal{C} \ | \ e \ \mathcal{C} \ | \ \mu \alpha. \mathcal{C} $
}

\

{\footnotesize
$\mathcal{H}[\mu \alpha . e] \leadsto_h \mathcal{H}[[\mu \alpha . e/\alpha]e]
\quad
\mathcal{H}[(\lambda \alpha . e) \ e'] \leadsto_h \mathcal{H}[[e'/\alpha]e]
\quad
\mathcal{C}[(\lambda x . e) \ T] \leadsto_\tau \mathcal{C}[[T/x] e]$

\

$\mathcal{C}[\mu \alpha . e] \leadsto_\mu \mathcal{C}[[\mu \alpha . e/\alpha]e]
\quad \mathcal{C}[(\lambda \alpha . e) \ e'] \leadsto_\beta \mathcal{C}[[e'/\alpha] e]$
\quad $\mathcal{C}[T] \leadsto_o \mathcal{C}[T']$ if $T \to_o T'$

}

\end{definition}

We call the one step reduction $\leadsto_h \cup \leadsto_\tau \cup \leadsto_o$ a one step \textit{head reduction}\footnote{This definition is following Barendregt \cite{barendregt1984lambda}, Page 173.}, denoted by $\leadsto_{h\tau o}$. The head reduction is lazy, i.e., $\mu \alpha . \kappa \ \alpha$ is normalizing with head reduction. We call an 
evidence a \textit{head normal form} if it can not be one step reduced by $\leadsto_{h\tau o}$.

\begin{theorem}
  \label{confluence}
 $\leadsto_{\beta\mu\tau o}$ and
    $\leadsto_{h\tau o}$ are confluent, and $\leadsto_\tau$ is strongly normalizing.  
\end{theorem}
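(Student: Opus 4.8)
The plan is to establish the three claims separately, reusing standard rewriting machinery. For strong normalization of $\leadsto_\tau$: the type-substitution reduction $\mathcal{C}[(\lambda x.e)\ T] \leadsto_\tau \mathcal{C}[[T/x]e]$ only ever destroys type-redexes and creates no new evidence-level $\mu$- or $\beta$-redexes, so I would define a measure on evidences — e.g. the multiset of heights of subterms of the form $(\lambda x.e)\ T$, or more simply the total number of type-application nodes $e\ T$ in the evidence together with the type-redex count inside the embedded types — and show each $\leadsto_\tau$ step strictly decreases it. Here I would invoke the Proposition on strong normalization and confluence of $\to_o$ on well-kinded types to handle the $\leadsto_o$ interaction when arguing that $[T/x]e$ does not reintroduce type-redexes that were not already "charged" in $e$ and $T$. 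Since this is the component that other parts of the paper lean on, I would be careful to state the measure explicitly, but the argument is routine.

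For confluence of $\leadsto_{\beta\mu\tau o}$ (the full, non-lazy system), the natural route is the Tait–Martin-Löf parallel-reduction technique: define a parallel reduction $\Rrightarrow$ that in one step may contract any set of $\beta$-, $\mu$-, $\tau$-, and $o$-redexes simultaneously (with the usual recursive clauses, and the clause for $\mu\alpha.e$ allowing both a copy-into-body step and plain congruence), show $\leadsto_{\beta\mu\tau o} \subseteq {\Rrightarrow} \subseteq \leadsto_{\beta\mu\tau o}^{*}$, and then prove $\Rrightarrow$ has the diamond property via a complete-development / maximal-contraction term $e^{*}$ satisfying $e \Rrightarrow d \implies d \Rrightarrow e^{*}$. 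Confluence of $\leadsto_{\beta\mu\tau o}$ follows by the standard diagram chase. The substitution lemmas I will need — that $e \Rrightarrow e'$ and $u \Rrightarrow u'$ imply $[u/\alpha]e \Rrightarrow [u'/\alpha]e'$, and likewise for type substitution with $\to_o$ — are proved by induction on $e$; the $\to_o$-confluence Proposition and Theorem \ref{sub:kind} cover the type-level side so that substituting reduced types is compatible.

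For confluence of the lazy head reduction $\leadsto_{h\tau o}$, I would argue it is essentially deterministic up to the $\leadsto_o$ and $\leadsto_\tau$ choices: the head-reduction context $\mathcal{H}$ determines a unique outermost position at which an $h$-redex (a head $\mu$ or a head $\beta$) can fire, so the only genuine nondeterminism is between firing that head redex and performing $\leadsto_\tau/\leadsto_o$ steps elsewhere (these live in the general context $\mathcal{C}$, not $\mathcal{H}$). I would then show local confluence of $\leadsto_{h\tau o}$ by case analysis on the overlap of two redexes — the $h$/$h$ case is trivial because the head position is unique, the $h$/$\tau$ and $h$/$o$ and $\tau$/$\tau$ and $\tau$/$o$ and $o$/$o$ cases close in one or two steps using the corresponding parallel/substitution lemmas — and combine local confluence with termination of the $\leadsto_\tau \cup \leadsto_o$ fragment (SN of $\leadsto_\tau$ just proved, SN of $\leadsto_o$ on well-kinded types from the Proposition, plus commutation of the two) via Newman's lemma restricted appropriately; alternatively, I can again sandwich $\leadsto_{h\tau o}$ between a suitable parallel reduction and run the diamond argument.

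The main obstacle I anticipate is the $\mu$-rule in the parallel-reduction proof for $\leadsto_{\beta\mu\tau o}$: because $\mathcal{C}[\mu\alpha.e] \leadsto_\mu \mathcal{C}[[\mu\alpha.e/\alpha]e]$ copies the whole fixpoint into the body, a single parallel step can duplicate redexes nested inside $e$, and one must be careful that the complete-development term $e^{*}$ is still well-defined and that the substitution lemma interacts correctly with the self-referential substitution $[\mu\alpha.e/\alpha](\cdot)$ — this is the familiar subtlety with $\mathbf{Y}$/letrec-style confluence proofs, and handling it cleanly (by treating $\mu\alpha.e$ as a value that parallel-reduces componentwise and only "unfolds" at the explicit $\mu$-clause, mirroring the treatment of $\mathbf{Y}$) is where the real work lies.
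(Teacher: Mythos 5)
Your treatment of $\leadsto_\tau$ (strong normalization by counting type-application redexes, which substitution $[T/x]e$ cannot recreate since types contain no evidence) matches the paper's one-line argument, and your Tait--Martin-L\"of parallel-reduction plan for $\leadsto_{\beta\mu\tau o}$ is a viable, more self-contained alternative to the paper's route, which instead decomposes the union Hindley--Rosen style into pairwise-commuting pieces and simply cites the literature (Ariola--Klop) for confluence of $\leadsto_{\beta\mu}$. You also correctly flag the $\mu$-duplication subtlety in the complete-development argument, and the resolution you sketch (unfold only at the explicit $\mu$-clause, close the diamond via the substitution lemma applied to $[\mu\alpha.e'/\alpha]e'$) is the standard one.

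The genuine gap is in your argument for $\leadsto_{h\tau o}$. You claim that the head context $\mathcal{H}$ determines a \emph{unique} position where an $h$-redex can fire, so the $h$/$h$ overlap case is trivial. This is false for the grammar $\mathcal{H} ::= \bullet \mid \mathcal{H}\ e \mid \lambda\alpha.\mathcal{H} \mid \lambda x.\mathcal{H}$: because the hole may sit under a $\lambda$ that is itself in applied position, the term $\mathcal{H}_1[(\lambda\alpha.\mathcal{H}_2[\mu\beta.e'])\ e]$ admits \emph{two} distinct head redexes --- the outer $\beta$-redex $(\lambda\alpha.\cdots)\ e$ via $\mathcal{H}_1$, and the inner $\mu\beta.e'$ via the head context $\mathcal{H}_1[(\lambda\alpha.\mathcal{H}_2)\ e]$. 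This is precisely the one critical pair the paper's proof checks, and resolving it requires the substitution-composition computation $[e/\alpha][\mu\beta.e'/\beta]e' = [\mu\beta.[e/\alpha]e'/\beta][e/\alpha]e'$ to see that both orders of firing rejoin in one step each; dismissing the case as trivial skips the only nontrivial content of this part of the theorem. Your fallback via Newman's lemma is also unavailable here, since $\leadsto_{h\tau o}$ is not terminating (e.g.\ $\mu\alpha.\alpha$ head-reduces to itself forever), so local confluence alone would not suffice; you need either the direct diamond-style critical-pair analysis above or the parallel-reduction sandwich you mention only in passing.
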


We specify the typing rules for $\mathbf{F}_{2}^{\mu}$ in the following.

\begin{definition}[Typing of $\mathbf{F}_{2}^{\mu}$]
\label{proofsystem}
\

{\footnotesize
\begin{tabular}{lll}
\\
\infer{\Gamma \vdash \alpha | \kappa : T}{(\alpha | \kappa :  T) \in \Gamma}    
&

\infer[(\textsc{App})]{\Gamma \vdash e_2\ e_1 : T}{\Gamma \vdash e_1 : T' & \Gamma \vdash e_2 : T' \Rightarrow T}
&
\infer[(\textsc{Lam})]{\Gamma \vdash \lambda \alpha.e : T' \Rightarrow T}{\Gamma, \alpha : T' \vdash e : T}
\\
\\
\infer[(\textsc{Mu})]{\Gamma \vdash \mu \alpha . e : T}{\Gamma, \alpha : T \vdash e : T}
&
\infer[(\textsc{Inst})]{\Gamma \vdash e \ T': [T'/x]T}{\Gamma \vdash e : \forall x : K . T}

&
\infer[(\textsc{Abs})]{\Gamma \vdash \lambda x . e: \forall x : K . T}{\Gamma \vdash e :  T & x \notin \mathrm{FV}(\Gamma)}

\\
\\

\infer[(\textsc{Conv})]{\Gamma \vdash e : T'}{\Gamma \vdash e : T & T \leftrightarrow^*_o T'}

&

  \end{tabular}  
}
\end{definition}
In the \textsc{Abs} rule, only the types of kind $K$ are quantified. We use $\mathrm{FV}(\Gamma)$
to denote the set of free type variables occurs in $\Gamma$.
We require that all the types are well-kinded. Since $\to_o$ is strongly normalizing and confluent,
we will work with types in $\to_o$-normal form in this paper. The rule \textsc{Conv} is used implicitly. 

The followings theorems shows that the type system $\mathbf{F}_{2}^{\mu}$ has the usual inversion
and subject reduction properties.

\begin{theorem}[Selected Inversion Theorems]
\label{thm:inversion}
\

  \begin{enumerate}

\item  If $\Gamma \vdash e\ e' : T$, then $\Gamma \vdash e : T_1 \Rightarrow T_2$, $\Gamma \vdash e' : T_1$
  and $T_2 \leftrightarrow_o^* T$.


\item  If $\Gamma \vdash e\ T_1 : T$, then $\Gamma \vdash e : \forall x : K. T'$ and $[T_1/x]T' \leftrightarrow_o^* T$.

  \end{enumerate}
\end{theorem}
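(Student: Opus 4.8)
The plan is to prove both statements simultaneously by a straightforward induction on the \emph{height} of the typing derivation, the only point requiring care being the interaction with the conversion rule \textsc{Conv}. The key observation is that the evidence syntax distinguishes evidence application $e\ e'$ from type instantiation $e\ T$, so that for a judgement $\Gamma \vdash e\ e' : T$ the last rule of its derivation can only be \textsc{App} or \textsc{Conv}, and for $\Gamma \vdash e\ T_1 : T$ it can only be \textsc{Inst} or \textsc{Conv}; every other rule (\textsc{Var}, \textsc{Lam}, \textsc{Mu}, \textsc{Abs}) has a conclusion of a syntactically different shape, so none of them can be the final rule here. This immediately reduces each part to a two-case analysis.

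For part (1): if the last rule is \textsc{App}, then from its premises we get directly $\Gamma \vdash e : T' \Rightarrow T$ and $\Gamma \vdash e' : T'$, so taking $T_1 \equiv T'$ and $T_2 \equiv T$ the claim holds, with $T_2 \leftrightarrow_o^* T$ by reflexivity of conversion. If the last rule is \textsc{Conv}, then $\Gamma \vdash e\ e' : T''$ for some $T''$ with $T'' \leftrightarrow_o^* T$, and this subderivation has strictly smaller height; applying the induction hypothesis yields $\Gamma \vdash e : T_1 \Rightarrow T_2$, $\Gamma \vdash e' : T_1$, and $T_2 \leftrightarrow_o^* T''$, whence $T_2 \leftrightarrow_o^* T$ by transitivity of $\leftrightarrow_o^*$. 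Part (2) is entirely analogous: in the \textsc{Inst} case its premise gives $\Gamma \vdash e : \forall x\!:\!K.T'$ with $T \equiv [T_1/x]T'$, so $[T_1/x]T' \leftrightarrow_o^* T$ trivially; in the \textsc{Conv} case we push the conversion through the induction hypothesis and close by transitivity exactly as above.

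Since $\leftrightarrow_o^*$ is, by definition, the reflexive–symmetric–transitive closure of $\to_o$, the reflexivity and transitivity facts invoked above are immediate, and the argument needs neither confluence (the Proposition) nor subject reduction. I do not anticipate any substantial obstacle; the only subtlety worth flagging to the reader is precisely why one must induct on derivation height rather than on the size of the evidence — a \textsc{Conv} step leaves the evidence unchanged while strictly decreasing the height — and, relatedly, that \textsc{Conv} may be applied an arbitrary number of times after \textsc{App} (resp. \textsc{Inst}), which is exactly the reason the conclusion asserts equality only up to $\leftrightarrow_o^*$ rather than as syntactic identity. (If one instead insists, as elsewhere in the paper, on carrying only $\to_o$-normal types in derivations, then unique normal forms collapse $\leftrightarrow_o^*$ to syntactic equality and the inversion becomes exact; I will state the more general version matching the theorem as written.)
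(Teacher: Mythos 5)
Your proof is correct and follows essentially the same route as the paper, whose Appendix proof of the (fuller) inversion theorem is simply ``by induction on the derivation''; your write-up merely makes explicit the case split between \textsc{App}/\textsc{Inst} and \textsc{Conv} and the use of transitivity of $\leftrightarrow_o^*$ to absorb trailing conversions.
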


\begin{theorem}[Subject Reduction]
\label{sub-red}
  If $\Gamma \vdash e : T$ and $e \leadsto_{h\tau o} e'$, then $\Gamma \vdash e' : T$. 
\end{theorem}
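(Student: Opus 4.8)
The plan is to prove Theorem~\ref{sub-red} (Subject Reduction for $\leadsto_{h\tau o}$) by case analysis on the one-step head reduction $e \leadsto_{h\tau o} e'$, which by Definition~\ref{headred} means $e$ has the shape $\mathcal{H}[r]$ for a head reduction context $\mathcal{H}$ and a redex $r$, and $e' = \mathcal{H}[r']$ where $r \to r'$ is one of the three contraction rules: $\mu\alpha.d \leadsto_h [\mu\alpha.d/\alpha]d$, $(\lambda\alpha.d)\,d' \leadsto_h [d'/\alpha]d$, or $(\lambda x.d)\,T \leadsto_\tau [T/x]d$. The first step is to reduce the problem to the redex itself: I would prove a \emph{replacement} (or context) lemma stating that if $\Gamma \vdash \mathcal{H}[r] : T$ then there is a subenvironment $\Gamma' \supseteq \Gamma$ (extended only with the evidence variables bound by the $\lambda\alpha$ and $\lambda x$ binders of $\mathcal{H}$ along the hole path) and a type $S$ with $\Gamma' \vdash r : S$, and moreover $\Gamma' \vdash r' : S$ implies $\Gamma \vdash \mathcal{H}[r'] : T$. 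This lemma follows by induction on the structure of $\mathcal{H}$ using the inversion theorems (Theorem~\ref{thm:inversion}) for \textsc{App}, \textsc{Lam}, and \textsc{Abs}, together with the observation that \textsc{Conv} only changes the type by $\leftrightarrow_o^*$, so the derived type of the plugged subterm is determined up to $\to_o$-equivalence, which is harmless by \textsc{Conv}. Note the $\mu\alpha.\mathcal{C}$ case does not appear in $\mathcal{H}$, so we never need to thread a fixpoint binding through the context.

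Having isolated the redex, I would handle the three contraction cases. For $(\lambda\alpha.d)\,d' \leadsto_h [d'/\alpha]d$: by inversion (Theorem~\ref{thm:inversion}(1)), $\Gamma' \vdash \lambda\alpha.d : U \Rightarrow V$ with $V \leftrightarrow_o^* S$ and $\Gamma' \vdash d' : U$; by inversion on \textsc{Lam} (plus \textsc{Conv}), $\Gamma', \alpha : U' \vdash d : V'$ with $U' \Rightarrow V' \leftrightarrow_o^* U \Rightarrow V$, hence $U \leftrightarrow_o^* U'$ and $V \leftrightarrow_o^* V'$ by confluence of $\to_o$ (the Proposition and Theorem~\ref{confluence}); then a substitution lemma for evidence variables gives $\Gamma' \vdash [d'/\alpha]d : V'$, and \textsc{Conv} yields the type $S$. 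For $\mu\alpha.d \leadsto_h [\mu\alpha.d/\alpha]d$: by inversion on \textsc{Mu}, $\Gamma', \alpha : S' \vdash d : S'$ with $S' \leftrightarrow_o^* S$; since $\Gamma' \vdash \mu\alpha.d : S'$ as well (re-applying \textsc{Mu}), the same evidence substitution lemma gives $\Gamma' \vdash [\mu\alpha.d/\alpha]d : S'$, and \textsc{Conv} closes the case. For $(\lambda x.d)\,T \leadsto_\tau [T/x]d$: by inversion (Theorem~\ref{thm:inversion}(2)), $\Gamma' \vdash \lambda x.d : \forall x:K.S''$ with $[T/x]S'' \leftrightarrow_o^* S$, and by inversion on \textsc{Abs}, $\Gamma' \vdash d : S''$ with $x \notin \mathrm{FV}(\Gamma')$; a type-substitution lemma (preserving kinding by Theorem~\ref{sub:kind}, and typing) gives $\Gamma' \vdash [T/x]d : [T/x]S''$, and \textsc{Conv} adjusts to $S$.

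The last branch of the one-step relation is $\leadsto_o$ applied inside $\mathcal{H}$ to some type occurrence; but $\leadsto_{h\tau o}$ is defined as $\leadsto_h \cup \leadsto_\tau \cup \leadsto_o$, and a $\leadsto_o$ step changes only a type subterm $T \to_o T'$. Here I would invoke Theorem~\ref{sub:kind} (types stay well-kinded and preserve free variables) together with \textsc{Conv} and the fact that all types in the derivation are identified up to $\leftrightarrow_o^*$: replacing $T$ by $T'$ anywhere in $e$ changes the overall type, if at all, only by a $\to_o$-step, which \textsc{Conv} absorbs. Making this precise requires a small lemma that typing is closed under $\to_o$ in type annotations, again proved by induction on the typing derivation.

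The main obstacle I anticipate is the substitution lemmas and their interaction with \textsc{Conv}. The evidence-variable substitution lemma ``$\Gamma, \alpha : U \vdash d : V$ and $\Gamma \vdash d' : U$ imply $\Gamma \vdash [d'/\alpha]d : V$'' is routine by induction, but I must be careful that in the $\mu$ case the substituted term $\mu\alpha.d$ has \emph{exactly} the type recorded for $\alpha$, which it does by \textsc{Mu}. The type-substitution lemma is the subtler one: substituting a type $T$ of kind $K$ for a variable $x$ of kind $K$ into both the evidence and its type must commute correctly with the \textsc{Inst}/\textsc{Abs} rules and with \textsc{Conv}, and one needs $\to_o$ to be confluent and strongly normalizing (already given) so that ``$\leftrightarrow_o^*$'' can be replaced by ``has a common reduct'' whenever two types need to be matched up after inversion. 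Finally, I should double-check the context lemma in the $\lambda x.\mathcal{H}$ case: extending $\Gamma$ with $x : K$ must respect the side condition $x \notin \mathrm{FV}(\Gamma)$ from \textsc{Abs}, which holds because that side condition was already verified in the original derivation of $\Gamma \vdash \mathcal{H}[r] : T$.
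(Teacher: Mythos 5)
Your proposal is correct in substance and rests on exactly the ingredients the paper's proof uses: the inversion theorems, an evidence-substitution lemma and a type-substitution lemma, and confluence of $\to_o$ to reconcile the types produced by inversion via \textsc{Conv}. The redex-level arguments for the three contractions ($\beta$ for evidence, unfolding of $\mu$, and type instantiation) coincide with the paper's almost line for line, including the use of \textsc{Mu} to retype $\mu\alpha.d$ at the type recorded for $\alpha$ before substituting. The organizational difference is that the paper proves the theorem directly by induction on the typing derivation $\Gamma \vdash e : T$ -- so the congruence cases are the inductive cases (dismissed as ``easy'') and the redex cases carry the content -- whereas you factor out an explicit replacement lemma proved by induction on the context. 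Both routes are standard and equivalent; yours makes the context bookkeeping explicit, the paper's avoids stating the replacement lemma at all.

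One inaccuracy you should repair: you treat every $\leadsto_{h\tau o}$ step as occurring under a \emph{head} context $\mathcal{H}$, but by Definition~\ref{headred} only the $\leadsto_h$ contractions are restricted to $\mathcal{H}$; the $\leadsto_\tau$ and $\leadsto_o$ steps are closed under the \emph{general} context $\mathcal{C}$, which does contain the productions $\mu\alpha.\mathcal{C}$, $e\ \mathcal{C}$ and $\mathcal{C}\ T$. Consequently your replacement lemma must be stated for $\mathcal{C}$ when handling those two relations, and your remark that ``the $\mu\alpha.\mathcal{C}$ case does not appear in $\mathcal{H}$, so we never need to thread a fixpoint binding through the context'' is true only for the $\leadsto_h$ part. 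The extra cases are harmless -- for $\mu\alpha.\mathcal{C}$ one extends $\Gamma$ with $\alpha:T$ where $T$ is the type of the whole fixpoint, and for $e\ \mathcal{C}$ and $\mathcal{C}\ T$ one uses inversion on \textsc{App} and \textsc{Inst} -- but they must be covered for the $\tau$ and $o$ cases to close. With that adjustment the argument is complete and matches the paper's.
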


Due to \textsc{Mu} rule, $\mathbf{F}_{2}^{\mu}$ allows diverging evidence with respect to $\leadsto_{\beta\mu}$. We will focus on the \textit{hereditary head normalizing} evidence (Definition \ref{here}),
which will be discussed in Section~\ref{meta}. 

\begin{definition}[Terms and Contexts]
  \label{fo-term}
\

{
\textit{First-order term} $t, l, r \ ::= \ x \ | \ F^{n} \ t_1 \ ...\ t_n$

\textit{Term context} $C \ ::= \ \bullet \ | \ x \ |\ F^{n} \ C_1 \ ...\ C_n$
}
\end{definition}

Note that the term context can contains multiple $\bullet$ and we use the 
the notation $C[t_1,...,t_n]$ to denote the result of replacing $\bullet$ from left to
right in $C$ by $t_1,..., t_n$. A special case is $C[t]$, it means there is exactly one
$\bullet$ in $C$, which is replaced by $t$. 
The function symbol $F$ of arity $n$ is denoted by $F^n$.
We work with applicative first-order terms in this paper, and we assume all function symbols are fully applied, thus we often write
$F \ t_1 \ ...\ t_n$ instead of $F^n \ t_1 \ ...\ t_n$.
We reuse $\mathrm{FV}(t)$ to mean the set of free variables in $t$. 
\begin{definition}[Rewrite Rules]
Suppose $l$ and $r$ are first-order terms, where $l$ is
 not a variable and $\mathrm{FV}(r) \subseteq \mathrm{FV}(l)$, then  $l \to r$  is a first-order rewrite rule. 
A rewriting system is a set  $\mathcal{R}$ of rewrite rules.
 We write $C[t] \to C[t']$ if
  there exists $l \to r \in \mathcal{R}$ such that $\sigma l \equiv t$ and $\sigma r \equiv t'$ for some substitution $\sigma$.
\end{definition}

\noindent \textbf{Important Notation Convention}. We use
the notation $t$ to
 denote a first-order type in
 $\mathbf{F}_{2}^{\mu}$ that represents the first-order term $t$. The term context $C$ containing
 one $\bullet$ can be represented as $\lambda x . C[x]$, a second-order type of kind $* \Rightarrow *$ 
 in $\mathbf{F}_{2}^{\mu}$. 
We use letters $F, G, D, S, Z$ to denote type
constants as well as function symbols. Note that for any first-order term $t$, it is always a well-kinded first-order type, since for any function
symbol $F^n$ in $t$, we can assign the kind $*^n \Rightarrow *$ for
$F$ and first-order term variable is of kind $*$. The following definition illustrates our use of this notation convention.
 
\begin{definition}[Leibniz representation]
  \label{leibniz}
Given a set of  rewrite rules $\mathcal{R}$,  we define the \textit{Leibniz representation} of $\mathcal{R}$ as $\mathbf{F}_{2}^{\mu}$-environments $\Gamma_{\mathcal{R}}, \Delta_{\mathcal{R}}$, as follows: 
\begin{itemize}
\item $\kappa : \forall p . \forall \underline{x} . p \ r \Rightarrow p \ l \in \Gamma_{\mathcal{R}}$ whenever $l \to r \in \mathcal{R}$, and where $\kappa$ is a fresh evidence constant and $\underline{x}$ are the free variables in $l$. 
\item $F :   *^n \Rightarrow * \in \Delta_{\mathcal{R}}$ if $F^n$ is a function symbol in $\mathcal{R}$. 
\end{itemize}
\end{definition}

Leibniz representation allows us to represent a first-order term rewriting system as a typing environment in $\mathbf{F}_2^\mu$, together with the typing rules, finite reductions can be represented by a typing judgement in $\mathbf{F}_2^\mu$. 

\begin{theorem}
\label{ade}
Let $\mathcal{R}$ be a set of rewrite rules. 
  \begin{enumerate}
  \item If $C[t] \to C[t']$ by $l \to r \in \mathcal{R}$, then
    $\Gamma_{\mathcal{R}} \vdash e : C[t'] \Rightarrow C[t]$ for some
     $e$.
   \item If $t_1 \to t_2 \to t_3$ is a reduction using $\mathcal{R}$, then $\Gamma_{\mathcal{R}} \vdash e : t_3 \Rightarrow t_1$ for some $e$. 
  \end{enumerate}
\end{theorem}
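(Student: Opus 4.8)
The plan is to prove both parts by exhibiting the evidence explicitly, using the Leibniz representation constant $\kappa$ together with the $\textsc{Inst}$ rule to specialize the context and the variable instantiation. For part (1), suppose $C[t] \to C[t']$ using $l \to r \in \mathcal{R}$ with $\sigma l \equiv t$ and $\sigma r \equiv t'$. By Definition~\ref{leibniz} we have $\kappa : \forall p . \forall \underline{x} . p\ r \Rightarrow p\ l \in \Gamma_{\mathcal{R}}$, where $\underline{x} = \mathrm{FV}(l)$. First I would instantiate the second-order variable $p$ with the type $\lambda z . C[z]$, which is a well-kinded type of kind $* \Rightarrow *$ by the notation convention (and because $C$ contains exactly one hole, so the relevance condition on the $\lambda$-kinding rule is met). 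Then I would instantiate each $x_i \in \underline{x}$ with the first-order type $\sigma(x_i)$. After these applications of $\textsc{Inst}$, and one step of $\leadsto_o$ (type-level $\beta$) absorbed by $\textsc{Conv}$, we get $\Gamma_{\mathcal{R}} \vdash \kappa\ (\lambda z. C[z])\ \underline{\sigma(x)} : (\lambda z. C[z])\ (\sigma r) \Rightarrow (\lambda z. C[z])\ (\sigma l)$, and since $(\lambda z. C[z])\ (\sigma r) \to_o C[\sigma r] \equiv C[t']$ and similarly $(\lambda z.C[z])(\sigma l) \to_o C[t]$, the \textsc{Conv} rule yields $\Gamma_{\mathcal{R}} \vdash e : C[t'] \Rightarrow C[t]$ with $e \equiv \kappa\ (\lambda z. C[z])\ \underline{\sigma(x)}$.

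For part (2), I would first observe that a single rewrite step $t_1 \to t_2$ is by definition of the form $C[s] \to C[s']$ for some context $C$ and rule $l \to r$, so part (1) gives $\Gamma_{\mathcal{R}} \vdash e_1 : t_2 \Rightarrow t_1$. Applying this again to $t_2 \to t_3$ gives $\Gamma_{\mathcal{R}} \vdash e_2 : t_3 \Rightarrow t_2$. Then composing with the $\textsc{App}$ rule, from $\Gamma_{\mathcal{R}} \vdash e_2 : t_3 \Rightarrow t_2$ and $\Gamma_{\mathcal{R}} \vdash \lambda\alpha. e_1\ \alpha : (t_2 \Rightarrow t_1) \ldots$ — more directly, I would take $e \equiv \lambda \alpha . e_1\ (e_2\ \alpha)$: under $\Gamma_{\mathcal{R}}, \alpha : t_3$ we have $e_2\ \alpha : t_2$ by $\textsc{App}$, then $e_1\ (e_2\ \alpha) : t_1$ by $\textsc{App}$, then $\lambda\alpha. e_1\ (e_2\ \alpha) : t_3 \Rightarrow t_1$ by $\textsc{Lam}$. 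This matches the informal discussion in Section~\ref{idea} where the outermost evidence constant corresponds to the first reduction step. (Equivalently, one can note that $t_3 \Rightarrow t_2$ and $t_2 \Rightarrow t_1$ give $t_3 \Rightarrow t_1$ since $\Rightarrow$ composes, and iterate to get the general finite-reduction statement.)

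The main obstacle I expect is the bookkeeping around well-kindedness and the $\to_o$-normal form conventions: I must check that $\lambda z . C[z]$ is genuinely a legal type of kind $* \Rightarrow *$ in the sense of Definition~\ref{kindsystem}, in particular that the relevance side-condition $z \in \mathrm{FV}(C[z])$ holds — which is exactly why the paper restricted to contexts with at least one hole — and that the substituted first-order terms $\sigma(x_i)$ are well-kinded first-order types (guaranteed by the notation convention, assigning arity kinds to function symbols). I also need to be careful that applying $\textsc{Inst}$ requires the type to literally be a $\forall$; since the paper works with $\to_o$-normal forms, the intermediate types $(\lambda z. C[z])\ (\sigma r)$ must be normalized via $\textsc{Conv}$ before or after each step, which is routine given Theorem~\ref{sub:kind} and confluence. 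Everything else — the structure of the evidence term and the typing derivation — is a direct assembly of the $\textsc{Inst}$, $\textsc{App}$, $\textsc{Lam}$, and $\textsc{Conv}$ rules.
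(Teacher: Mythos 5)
Your proposal is correct and follows essentially the same route as the paper's proof: part (1) instantiates $p$ with $\lambda y.C[y]$ and the variables $\underline{x}$ with the codomain of $\sigma$ via \textsc{Inst} and \textsc{Conv}, and part (2) composes the two one-step evidences as $\lambda\alpha.\,e_1\,(e_2\,\alpha)$ using \textsc{App} and \textsc{Lam}. Your additional remarks on the relevance side-condition for kinding $\lambda z.C[z]$ and on normalizing intermediate types are sound bookkeeping that the paper leaves implicit.
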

\begin{proof}
  \begin{enumerate}
  \item By Definition \ref{leibniz}, we have $\kappa : \forall p. \forall
    \underline{x}. p \ r \Rightarrow p\ l \in
    \Gamma_{\mathcal{R}}$. We instantiate $p$ with 
    $\lambda y . C[y]$, by rule \textsc{Conv}, we get $\Gamma_{\mathcal{R}} \vdash \kappa \ (\lambda y . C[y]) : 
    \forall \underline{x}.C[r] \Rightarrow C[l]$.
    Since $\sigma l \equiv t, \sigma r \equiv
    t'$, let $\underline{t}$ be the codomain of $\sigma$, we have
    $\Gamma_{\mathcal{R}} \vdash \kappa\ \ (\lambda y
    . C[y])\ \underline{t} : C[t'] \Rightarrow C[t]$.
   \item By (1), we have $\Gamma_{\mathcal{R}} \vdash e_1 : t_2 \Rightarrow t_1$ and $\Gamma_{\mathcal{R}} \vdash e_2 : t_3 \Rightarrow t_2$, so $\Gamma_{\mathcal{R}} \vdash \lambda \alpha . e_1\ (e_2 \ \alpha) : t_3 \Rightarrow t_1$.
  \end{enumerate}
\end{proof}

\section{Hereditary Head Normalization and Faithfulness}
\label{meta}

In this section we define the \textit{hereditary head normalization} for an evidence  (Definition \ref{here}). The role of hereditary head normalization is
similar to \textit{productivity}, i.e. a hereditary head normalizing evidence can be associated with a computational tree (B\"ohm tree without bottom~\cite{barendregt1984lambda}). In $\mathbf{F}_2^\mu$, hereditary head normalization implies \textit{faithfulness}. Informally, an evidence is \textit{faithful} if we can recover a nonterminating reduction from it. 

To define hereditary head normalization, we first define an erasure that maps $\mathbf{F}_{2}^{\mu}$-evidence to pure lambda term with fixed point operator. 
\begin{definition}[Erasure]
\label{erasure}
  We define erasure $|\cdot |$ on evidence as follows. 

{\small
$ |\alpha| = \alpha
\quad 
 |\kappa| = \kappa
\quad 
 |\lambda \alpha . e| = \lambda \alpha . |e|
\quad 
|\mu \alpha . e| = \mu \alpha . |e|
\quad
|e\ e'| = |e|\ |e'|
\quad
\mgray{|\lambda x . e| = |e|}
\quad \mgray{|e\ T| = |e|} $
}
\end{definition}
We call the erased evidence $|e|$ \textit{Curry-style} evidence.
The following definition 
follows the same formulation by Raffalli 
\cite{raffalli1994} and Tatsuta \cite{tatsuta2008}. 
\begin{definition}[Hereditary Head Normalization]
\label{here}
  Let $\Lambda$ be the set of Curry-style evidence. \noindent We say $e$ is \textit{hereditary head normalizing} (denoted by $e \in \mathrm{HHN}$) iff $|e| \in \mathrm{HN}_{n}$ for all $n \geq 0$.
We define $\mathrm{HN}_n$ as follows. 
  \begin{itemize}
  \item $e \in \mathrm{HN}_0 $ iff $e \in \Lambda$. 
  \item $e \in \mathrm{HN}_{n+1}$ iff $e \leadsto_{\beta\mu}^* \lambda \underline{\alpha}. e' \ e_1 \ ...\ e_m$, where $e'$ is a variable or a constant and $e_i \in \mathrm{HN}_{n}$ for all $i$.
  \end{itemize}

\end{definition}

We are going to show in Theorem \ref{faithfulness} that if $\Gamma_{\mathcal{R}} \vdash e : t$ in $\mathbf{F}_{2}^{\mu}$ and $e$ is hereditary head normalizing, then we can reconstruct a nonterminating reduction of $t$ by following the unfolding of $e$.
First we define the notion of \textit{trace}. The position of a trace is described as follows: 
Let $o$ denote the 
origin of a trace and $s\cdot m$ denote the next position after $m$. For a trace $\mathcal{T}$, we 
use $\mathcal{T}_m$ to refer to the node at position $m$ in the trace. 
The following formalization of \textit{evidence trace} is a degenerate case of B\"ohm tree (\cite{bohm1968alcune}, \cite[\S 10]{barendregt1984lambda}).  
\begin{definition}[Evidence Trace]
\label{trace}
Suppose $e \leadsto_{h\tau o}^* \kappa\ T_1 ... \ T_n \ e'$, with $T_1, ..., T_n$ in $\to_o$-normal form. The evidence trace of $e$, denoted by $[e]$, is defined as:
\begin{itemize}
\item $[e]_{o} = \kappa\ T_1 ... \ T_n$.
\item $[e]_{s \cdot m} = [e']_m$. 
\end{itemize}
\end{definition}

In the above definition, since $\kappa\ T_1 ... \ T_n \ e'$ is a head normal form, by the confluence of $\leadsto_{h\tau o}$ (Theorem \ref{confluence}), we know that $[e]$ is referring to at most one trace. When $e \not \leadsto_{h\tau o}^* \kappa\ T_1 ... \ T_n \ e'$, we say
$[e]$ is undefined. 
For an example of finite evidence trace, consider $e \equiv \kappa\ (\lambda y.y) \ (\kappa'\ (\lambda y.y))$, in this
case $[e]_o = \kappa\ (\lambda y.y), [e]_{s\cdot o} = \kappa' \ (\lambda y.y)$. For an example of an infinite evidence trace, consider
$e \equiv \mu \alpha . \kappa\ (\lambda y.y)\ \alpha$, we have $[e]_m = \kappa \ (\lambda y.y)$ for any position $m$. 

Intuitively, an evidence trace can be viewed as a sequence of instructions (in the form of evidence constants)  that we are going to follow in order to rewrite a term. The following definitions of \textit{action} and \textit{faithful action} on a first-order term reflects this intuition. 
Suppose $C[\sigma l, ... ,\sigma l ] \to^* C[\sigma r, ..., \sigma r]$ by $l \to r \in \mathcal{R}$.
We record the term context and the instantiation information along the reduction, i.e. $C[\sigma l, ..., \sigma l] \to_{(\kappa, C,\sigma)}^* C[\sigma r, ..., \sigma r]$. 
 
\begin{definition}[Action on First-Order Term]
\label{action}
Suppose $[e]_m = \kappa\ (\lambda x . C[x,...,x])\ t_1 ... \ t_n$ for some position $m$
and $\kappa : \forall p . \forall \underline{x} . p\ r \Rightarrow p\ l$. 
An \textit{action} of $[e]_m$ on the first-order term $t$ (denoted by $[e]_m(t)$) is defined as follows.
\begin{itemize}
\item $[e]_m(t) = t'$ if $t \to_{(\kappa, C,\sigma)}^* t'$, where $\sigma = [t_1/x_1,..., t_n/x_n]$. 
\item otherwise $[e]_m(t)$ is undefined. 
\end{itemize}
\end{definition}

Note that we write $t \to^* [e]_m(t)$ when $[e]_m(t)$ is defined. The following definition of 
\textit{faithful action} shows how one follows a potentially infinite evidence trace to
reduce a term. 

\begin{definition}[Faithful Action]
\label{faithful}
  The evidence trace $[e]$ acts on $t$ \textit{faithfully}, if we have a reduction sequence $t \to^* [e]_o(t) \to^* [e]_{s\cdot o}([e]_o(t)) \to^* [e]_{s\cdot s \cdot o}([e]_{s\cdot o}([e]_o(t))) \to^* ... \to^* [e]_m(...[e]_o(t)...)$ for any position $m$. 
\end{definition}


\begin{example}
  To illustrate the intuition behind Definitions \ref{trace}, \ref{action}, \ref{faithful}, let us consider the one rule rewriting system: $F\ x \to G\ (F \ (G\ x))$ in Example \ref{gg}. The Leibniz representation is $\Delta = F : * \Rightarrow *, G : * \Rightarrow *, \Gamma =  \kappa : \forall p . \forall x . p \ (G\ (F\ \ (G \ x))) \Rightarrow p \ (F\ x)$. Recall that we had the following judgement. 

  \begin{center}
  {\footnotesize    
    
    (1)  $\Gamma \vdash e' \equiv \mu \alpha . \lambda p . \lambda x . \kappa\ p\ x\
  (\alpha\ (\lambda y . p\ (G\ y))\ (G\ x)) : \forall p. \forall x . p
    \ (F \ x)$

    \
    
    (2)  $\Gamma \vdash e' \ (\lambda y. y)\ x  : F \ x$
  }





  \end{center}
  We observed the following unfolding of  $e' \ (\lambda y. y)\ x $
(below $\mathcal{C} \equiv \kappa \ (\lambda y . y) \ x \ (\kappa \ (\lambda y . G \ y)\ (G\ x)\ \bullet)$):

\begin{center}
{\footnotesize  $e' \ (\lambda y. y)\ x \leadsto_{\beta\mu\tau o}^* \mgray{\kappa \ (\lambda y . y) \ x}\ (e'\ (\lambda y . G\ y)\ (G\ x))
  \leadsto_{\beta\mu\tau o}^* \kappa \ (\lambda y . y) \ x \ (\mgray{\kappa \ (\lambda y . G \ y)\ (G\ x)}\
  (e'\ (\lambda y . G \ (G\ y))) \ (G\ (G\ x))) \leadsto_{\beta\mu\tau o}^* \mathcal{C}[(\mgray{\kappa \ (\lambda y . G \ (G \
  y)) \ (G \ (G \ x))}\ (e'\ (\lambda y . G \ (G \ (G\ y))))\ (G\ (G\ (G \ x))))] \leadsto_{\beta\mu\tau o}^* ... $}
\end{center}
 
\noindent It gives rise to the following evidence trace: $[e]_o = \kappa \ (\lambda y . y)\ x$, $[e]_{s \cdot o} = \kappa \ (\lambda y . G \ y)\ (G\ x)$,
$[e]_{s \cdot s \cdot o} = \kappa \ (\lambda y . G\ (G \ y))\ (G \ (G \ x))$, etc.
Moreover $[e]$ acts faithfully on $F\ x$  (by Theorem \ref{faithfulness}). For example, we observe that $F\ x \to [e]_o(F\ x) \to [e]_{s \cdot o}([e]_{o}(F\ x)) \to [e]_{s \cdot s \cdot o}([e]_{s \cdot o}([e]_{o}(F\ x)))$, which is the following reduction trace. 
\begin{center}
  $F\ x \to_{(\kappa, \bullet, [x/x])} G\ (F\ (G \ x)) \to_{(\kappa, G\ \bullet, [(G \ x)/x])} G\ (G\ (F\  (G \ (G\ x)))) \to_{(\kappa, G \ (G\ \bullet), [(G\ (G\ x))/x])} G \ (G\ (G\ (F\ (G\ (G\ (G\ x))))))$
\end{center}

\end{example}

\begin{lemma}
\label{inversion}
Suppose $\Gamma_{\mathcal{R}} \vdash e : t$ for some first-order term $t$ and $e$ is head normalizing. We have $e \leadsto_{h\tau o}^* \kappa \ (\lambda x . C[x,..., x]) \ t_1 ...\ t_n\  e'$ for some $\kappa :  \forall p . \forall \underline{x}. p \ r \Rightarrow p\ l \in \Gamma_{\mathcal{R}}$.
Furthermore, we have $\Gamma_{\mathcal{R}} \vdash e' :  C[\sigma r, ..., \sigma r]$ and $C[\sigma l, ..., \sigma l] = t$, where $\mathrm{codom}(\sigma) = \{t_1,..., t_n\}$ and $\mathrm{dom}(\sigma) = \mathrm{FV}(l)$. 
\end{lemma}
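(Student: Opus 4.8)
The plan is to proceed by analyzing the structure of a head normal form of $e$ and using the inversion theorems for kinding and typing. Since $e$ is head normalizing, $e \leadsto_{h\tau o}^* e_0$ for some head normal form $e_0$, and by subject reduction (Theorem \ref{sub-red}) we have $\Gamma_{\mathcal{R}} \vdash e_0 : t$. The first step is to argue that $e_0$ cannot be a $\lambda$-abstraction ($\lambda\alpha.e_1$ or $\lambda x.e_1$): by the inversion properties of typing, such an evidence would have a type of the form $T'\Rightarrow T$ or $\forall x:K.T$, hence of kind $o$; but $t$ is a first-order term, hence (viewing it as a type) it is flat and well-kinded of kind $*$, not $o$, contradicting Theorem \ref{prop:kinding}(1). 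Similarly $e_0$ cannot be a $\mu$-abstraction or a redex of any kind, since $e_0$ is a $\leadsto_{h\tau o}$-normal form. Therefore $e_0$ must be of applicative shape $h\ a_1\ \cdots\ a_k$ where $h$ is a variable or a constant and each $a_i$ is either an evidence or a type argument. By the inversion theorems, the head $h$ must have a type that can be instantiated/applied down to $t$; since $t$ has kind $*$ and is not of the form $T_1 \Rightarrow T_2$ or $\forall$, the head cannot be an evidence variable $\alpha$ whose type is a first-order term type (those are not arrow/forall types, so they cannot be applied). Hence $h$ is forced to be one of the constants $\kappa$ in $\Gamma_{\mathcal{R}}$, whose type is $\forall p.\forall\underline{x}.p\ r \Rightarrow p\ l$.

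The second step is to read off the shape of the arguments from the type $\forall p.\forall\underline{x}.p\ r\Rightarrow p\ l$ of $\kappa$. Applying Theorem \ref{thm:inversion}(2) repeatedly to strip off the type instantiations $p \mapsto P$, $\underline{x}\mapsto \underline{S}$, and then Theorem \ref{thm:inversion}(1) to strip off the single evidence argument, we learn that $e_0 = \kappa\ P\ S_1\cdots S_n\ e'$ where, after $\to_o$-normalization, $[S_i/x_i, P/p](p\ l) \leftrightarrow_o^* t$ and $\Gamma_{\mathcal{R}}\vdash e' : [S_i/x_i,P/p](p\ r)$. Since $\kappa$ has kind derived from $p : *\Rightarrow *$ and $\underline{x} : *$, the argument $P$ must be well-kinded of kind $*\Rightarrow *$, so by Theorem \ref{prop:kinding}(2) its $\to_o$-normal form is second-order, i.e. of the form $\lambda x.C[x,\ldots,x]$ for a term context $C$ with at least one hole (the relevance condition on the $\lambda$-kinding rule guarantees the hole is present); similarly each $S_i$ is a first-order term $t_i$. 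Writing $\sigma = [t_1/x_1,\ldots,t_n/x_n]$ with $\mathrm{dom}(\sigma) = \mathrm{FV}(l)$, $\to_o$-normalizing $(\lambda x.C[x,\ldots,x])\,(\sigma l)$ gives $C[\sigma l,\ldots,\sigma l]$, and this equals $t$ (both being in $\to_o$-normal form, using confluence); likewise $e'$ has type $C[\sigma r,\ldots,\sigma r]$ up to $\to_o$. This is exactly the claimed conclusion.

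I expect the main obstacle to be the bookkeeping in the second step: carefully tracking how the type instantiations and the single evidence application interact with $\to_o$-conversion, and in particular justifying that one may assume the arguments $P, S_1,\ldots,S_n$ are already in $\to_o$-normal form (so that $P$ literally has the form $\lambda x.C[x,\ldots,x]$ and the $S_i$ are literally first-order terms) rather than merely $\to_o$-convertible to such. This is handled by the remark after Definition \ref{proofsystem} that one works with types in $\to_o$-normal form, together with Theorem \ref{sub:kind} (subject reduction and invariance of free variables under $\to_o$) and Theorem \ref{prop:kinding}(2). A secondary subtlety is confirming that $\kappa\ P\ S_1\cdots S_n\ e'$ is genuinely the head normal form reached — i.e. that the last argument $e'$ is present and not further consumed — which follows because after supplying $P$ and $\underline{S}$ and one evidence argument the resulting type $p\ l$ instantiated is a first-order term type, hence neither an arrow nor a $\forall$, so no further application or instantiation is well-typed, and the erasure-level head reduction $\leadsto_{h\tau o}$ has nothing left to fire at the head.
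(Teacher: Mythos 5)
Your proposal is correct and follows essentially the same route as the paper's proof: take the head normal form, invoke subject reduction, apply the inversion theorems (Theorem \ref{thm:inversion}) to extract the instantiation substitution, and use the kinding properties (Theorem \ref{prop:kinding}) plus confluence of $\to_o$ to conclude that the context argument normalizes to $\lambda x.C[x,\ldots,x]$ and that $C[\sigma l,\ldots,\sigma l]=t$. Your first paragraph merely spells out in more detail why the head normal form must have the shape $\kappa\ T\ T_1\cdots T_n\ e'$, a step the paper asserts without elaboration.
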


\begin{theorem}[Faithfulness of Corecursive Evidence]
\label{faithfulness}
Suppose $\Gamma_{\mathcal{R}} \vdash e : t$ in $\mathbf{F}_2^\mu$ and $e \in \mathrm{HHN}$. We have $t \to^* [e]_o(t) \to^* ...\to^* [e]_m(...[e]_o(t)...)$ for any position $m$, i.e. $e$ acts faithfully on $t$. 
\end{theorem}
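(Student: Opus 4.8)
### Proof Proposal

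The plan is to prove Theorem~\ref{faithfulness} by induction on the position $m$, using Lemma~\ref{inversion} as the engine for the induction step, and exploiting the hypothesis $e \in \mathrm{HHN}$ to guarantee that the evidence trace $[e]$ is everywhere defined. The overall picture is that each ``layer'' of the hereditary head normal form of $e$ corresponds to exactly one node of the trace $[e]$, and that node, via Definition~\ref{action}, performs one segment $t \to^* [e]_o(t)$ of the reduction; the residual evidence $e'$ then continues the story for the tail of the trace.

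First I would establish the base case ($m = o$): since $e \in \mathrm{HHN}$, in particular $|e| \in \mathrm{HN}_1$, so $e$ is head normalizing, and Lemma~\ref{inversion} applies. It tells us $e \leadsto_{h\tau o}^* \kappa\,(\lambda x.C[x,\dots,x])\,t_1\cdots t_n\,e'$ with $\kappa : \forall p.\forall\underline{x}.\,p\,r \Rightarrow p\,l \in \Gamma_{\mathcal{R}}$, that $C[\sigma l,\dots,\sigma l] = t$ where $\sigma = [t_1/x_1,\dots,t_n/x_n]$, and that $\Gamma_{\mathcal{R}} \vdash e' : C[\sigma r,\dots,\sigma r]$. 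By Definition~\ref{trace}, $[e]_o = \kappa\,(\lambda x.C[x,\dots,x])\,t_1\cdots t_n$, and by Definition~\ref{action} (using that $C[\sigma l,\dots,\sigma l] \to^*_{(\kappa,C,\sigma)} C[\sigma r,\dots,\sigma r]$ is a legitimate parallel application of $l\to r$ under context $C$), $[e]_o(t)$ is defined and equals $C[\sigma r,\dots,\sigma r]$, giving $t \to^* [e]_o(t)$.

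For the induction step, I need two things. The structural point is that $[e]_{s\cdot m} = [e']_m$ by Definition~\ref{trace}, and the residual evidence $e'$ again satisfies the hypotheses of the theorem: it is typed by $\Gamma_{\mathcal{R}}$ at the first-order type $C[\sigma r,\dots,\sigma r]$ (by Lemma~\ref{inversion}), and it lies in $\mathrm{HHN}$ — this last fact needs an argument, namely that $e \in \mathrm{HHN}$ is inherited by the head-reduct's residual subterm $e'$. The cleanest route is: $e \leadsto^*_{h\tau o} \kappa\,T_1\cdots T_n\,e'$ implies $|e| \leadsto^*_{\beta\mu} \kappa\,|T_1|\cdots|e'|$ wait---more carefully, $|e| \leadsto^*_{\beta\mu} |\kappa\,(\lambda x.C)\,t_1\cdots t_n\,e'| = \kappa\,|e'|$ after erasure absorbs all the type arguments; since $|e| \in \mathrm{HN}_{n+1}$ for all $n$ and the head normal form has head constant $\kappa$ with sole argument $|e'|$, confluence of $\leadsto_{\beta\mu}$ forces $|e'| \in \mathrm{HN}_n$ for all $n$, i.e. $e' \in \mathrm{HHN}$. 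Then, applying the induction hypothesis to $e'$ at the term $t_1' := C[\sigma r,\dots,\sigma r] = [e]_o(t)$, we get $t_1' \to^* [e']_o(t_1') \to^* \dots \to^* [e']_m(\dots[e']_o(t_1')\dots)$. Reindexing via $[e']_k = [e]_{s\cdot k}$ and prepending the base-case step $t \to^* t_1' = [e]_o(t)$ yields exactly the reduction sequence required by Definition~\ref{faithful} up to position $s\cdot m$, closing the induction.

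The main obstacle I anticipate is the bookkeeping in the induction step: making sure the residual $e'$ genuinely satisfies \emph{both} hypotheses simultaneously (first-order typing and $\mathrm{HHN}$), and that the erasure argument for inheriting $\mathrm{HHN}$ is airtight despite $\leadsto_{h\tau o}$ and $\leadsto_{\beta\mu}$ being different reduction relations — one must invoke the compatibility of erasure with reduction and the confluence of $\leadsto_{\beta\mu}$ (Theorem~\ref{confluence}) to relate the lazy head normal form of $e$ to the various $\mathrm{HN}_n$-witnessing reducts of $|e|$. A secondary subtlety is verifying that the parallel-rewrite step recorded in Definition~\ref{action} — reducing all the $\sigma l$ occurrences in $C$ at once — is indeed an instance of the standard $\to^*$ relation on terms, so that the final sequence is a bona fide $\mathcal{R}$-reduction of $t$; this is essentially immediate since each single occurrence reduces by $l \to r$, but it should be stated. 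Everything else (the base case, the reindexing) is routine given Lemma~\ref{inversion}.
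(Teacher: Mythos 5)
Your proposal is correct and follows essentially the same route as the paper's own proof: induction on the position $m$, with Lemma~\ref{inversion} supplying the head normal form, the action at the origin giving $t \to^* [e]_o(t)$, and the reindexing $[e']_a = [e]_{s\cdot a}$ closing the inductive step. You are in fact slightly more careful than the paper, which asserts $e' \in \mathrm{HHN}$ without argument, whereas you supply the erasure-compatibility and confluence reasoning needed to propagate hereditary head normalization to the residual evidence.
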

\begin{proof}
  By Lemma \ref{inversion}, we know that $e \leadsto_{h\tau o}^* \kappa \ (\lambda x . C[x,..., x]) \ t_1 ...\ t_n\  e'$ for some $\kappa :  \forall p . \forall \underline{x}. p \ r \Rightarrow p\ l$, $\Gamma_{\mathcal{R}} \vdash e' :  C[\sigma r, ..., \sigma r]$, $C[\sigma l, ..., \sigma l] = t$, where $\mathrm{codom}(\sigma) = \{t_1,...,t_n\}$ and $\mathrm{dom}(\sigma) = \mathrm{FV}(l)$.
  Thus $t = C[\sigma l,..., \sigma l] \to_{(\kappa, C, \sigma)}^* \ C[\sigma r,..., \sigma r]$.  We prove the theorem by induction on $m$. 
  \begin{itemize}
  \item $m = o$. We have $[e]_o = \kappa \ (\lambda x . C[x,..., x]) \ t_1 ...\ t_n$, since $t \to_{(\kappa, C, \sigma)}^* \ C[\sigma r,..., \sigma r]$, so $t \to^* [e]_o(t)$.
   \item $m = s \cdot m'$. We need to show $t \to^* [e]_o(t) \to^* ... \to^* [e]_{s\cdot m'}(... [e]_o(t) ...)$. 
Since $\Gamma_{\mathcal{R}} \vdash e' :  C[\sigma r,..., \sigma r]$ and $e' \in \mathrm{HHN}$, by IH, we have $C[\sigma r,..., \sigma r] \to^* [e']_o(C[\sigma r,..., \sigma r]) \to^* ... \to^* [e']_{m'}(... [e']_o(C[\sigma r]) ...)$. Thus $t \to^* [e]_o(t) = C[\sigma r,..., \sigma r] \to^* [e']_o([e]_o(t)) \to^* ... \to^* [e']_{m'}(... [e']_o([e]_o(t)) ...)$. Since $[e']_{a} = [e]_{s\cdot a}$ for any position $a$, we have $t \to^* [e]_o(t) \to^* [e]_{s \cdot o}([e]_o(t))\to^* ... \to^* [e]_{s \cdot m'}(... [e]_{s \cdot o}([e]_o(t)) ...)$.  
  \end{itemize}
\end{proof}

Now we are going to show the hereditary head normalization for $\mathbf{F}_{2}^{\mu}$ is decidable
by mapping a typable evidence in $\mathbf{F}_{2}^{\mu}$ to a typable evidence in $\lambda$-Y caculus (simply typed lambda calculus with fixpoint typing rule \cite{Statman04})\footnote{Please see Appendix \ref{lambda-y} for full details.}. 

\begin{definition}
  We define a function $\theta$ that maps $\mathbf{F}_{2}^\mu$ types to $\lambda$-Y types.
  \
  
{\small
\noindent  $\theta(x | F) = B$
  \quad
  $\theta(\lambda x. T) = \theta(T)$
\quad
  $\theta(T\ T') = \theta(T)$ \quad
  $\theta(T \Rightarrow T') = \theta(T) \Rightarrow \theta(T')$
\quad
  $\theta(\forall x.T) = \theta(T)$}
\end{definition}
We write $\theta(\Gamma)$ to mean applying the function $\theta$ to all the types in $\Gamma$.
Type $B$ is the based type in $\lambda$-Y. 

\begin{theorem}
  \label{2toy}
  If $\Gamma \vdash e : T$ and $\Delta \vdash T : *|o $ in $\mathbf{F}_{2}^{\mu}$, then
  $\theta(\Gamma)\vdash |e| : \theta(T)$ in $\lambda$-Y.
\end{theorem}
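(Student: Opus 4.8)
The plan is to proceed by induction on the derivation of $\Gamma \vdash e : T$ in $\mathbf{F}_{2}^{\mu}$, showing simultaneously that whenever the derived type $T$ is of kind $*$ or $o$, the erasure $|e|$ is typable at $\theta(T)$ in $\lambda$-Y under the translated context $\theta(\Gamma)$. The key observation driving the whole argument is that $\theta$ collapses the entire ``term'' layer of kind $K$ to the single base type $B$, and is a homomorphism on the ``formula'' layer: $\theta$ commutes with $\Rightarrow$ and is invariant under $\forall$-quantification, type application, and type abstraction. In particular $\theta([T'/x]T) = \theta(T)$ whenever $x$ has kind $K$, since the substituted $T'$ has kind $K$ and gets mapped to $B$ regardless; this fact is what makes the \textsc{Inst} and \textsc{Abs} cases go through after erasure, because $|e\ T'| = |e|$ and $|\lambda x.e| = |e|$, so the $\lambda$-Y typing is literally unchanged. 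Similarly, \textsc{Conv} is handled by Theorem~\ref{sub:kind} together with the fact that $T \leftrightarrow^*_o T'$ implies $\theta(T) = \theta(T')$ (type-level $\beta$-reduction happens entirely inside arguments of kind $K$, which $\theta$ discards).

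First I would dispatch the structural cases. The variable/constant axiom case is immediate: $\theta(\Gamma)$ assigns $|e| = e$ the type $\theta(T)$ by construction. For \textsc{App}, from $\Gamma \vdash e_1 : T'$ and $\Gamma \vdash e_2 : T' \Rightarrow T$ with $T$ of kind $*|o$, I need that $T'$ is also of kind $*|o$ — this follows from the kinding rule for $\Rightarrow$, which forces both premises to have kind $o$ or $*$ — so the IH applies to both subderivations, giving $\theta(\Gamma) \vdash |e_1| : \theta(T')$ and $\theta(\Gamma) \vdash |e_2| : \theta(T') \Rightarrow \theta(T)$; then $\lambda$-Y application yields $\theta(\Gamma) \vdash |e_2|\,|e_1| : \theta(T)$. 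The \textsc{Lam} and \textsc{Mu} cases are analogous, using that the premise type has the same kind as the conclusion (for \textsc{Mu}) or that $T'\Rightarrow T$ of kind $o$ forces $T'$ of kind $*|o$ (for \textsc{Lam}), and that $\lambda$-Y has both an abstraction rule and a fixpoint rule; note the $\mu$ of $\mathbf{F}_{2}^{\mu}$ maps to the $\mathbf{Y}$-style fixpoint of $\lambda$-Y, which is exactly why we map into $\lambda$-Y rather than plain simply-typed $\lambda$-calculus. The \textsc{Inst}, \textsc{Abs}, and \textsc{Conv} cases are then handled by the homomorphism/invariance properties of $\theta$ sketched above, with the erasure doing no work.

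The main obstacle, and the point that needs the most care, is the bookkeeping around \emph{which} kinds the intermediate types carry — i.e. ensuring that the inductive hypothesis is always invoked at a type of kind $*|o$, and never at a type of kind $K$. This requires a short auxiliary observation: in any $\mathbf{F}_{2}^{\mu}$ derivation, every type appearing as the ``type of an evidence'' already has kind $*$ or $o$ (evidence never inhabits a type of term-nature kind $K$), which follows from inspecting the typing rules together with Theorem~\ref{prop:kinding}. Once that invariant is established, the subderivations fed to the IH automatically satisfy the hypothesis, and one also needs the companion fact that $\theta$ is well-defined on exactly these types — in particular that $\theta(T)$ is a genuine $\lambda$-Y type (built from $B$ and $\Rightarrow$) whenever $\Delta \vdash T : *|o$, which is a straightforward induction on the kinding derivation using Theorem~\ref{prop:kinding}(1) to see that a type of kind $o$ is an arrow or a quantification. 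With these two lemmas in hand the induction is routine, so I would state them explicitly as small claims before the main induction and then let each typing-rule case fall out mechanically.
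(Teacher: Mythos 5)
Your proposal is correct and follows essentially the same route as the paper: induction on the typing derivation, using that $\theta$ sends every kind-$K$ type to $B$ (hence $\theta([T'/x]T)\equiv\theta(T)$, the paper's Lemmas~\ref{term} and~\ref{y:subst}) and that $\leftrightarrow_o^*$-convertible well-kinded types have equal images under $\theta$ (Lemma~\ref{y:oe}), so that \textsc{Inst}, \textsc{Abs}, and \textsc{Conv} become invisible after erasure while the structural rules map homomorphically to their $\lambda$-Y counterparts. Your extra care about the invariant that evidence types always have kind $*|o$ is a reasonable explicit bookkeeping step that the paper leaves implicit in its well-kindedness convention.
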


Theorem \ref{2toy} implies that the hereditary head normalization for $\mathbf{F}_2^\mu$ is
decidable, since it is well-known that hereditary head normalization for $\lambda$-Y is decidable (\cite{broadbent2010recursion}, \cite{serre2015playing}, \cite{grellois2016semantics}).

 \section{Type Checking $\mathbf{F}_{2}^{\mu}$ Based on Resolution with Second-order Matching}
 \label{typecheck}
Modeling first-order term contexts is one of the reasons we use
second-order types. Quantification over second-order type variables also enables us to represent some \emph{nonlooping} nonterminations in $\mathbf{F}_{2}^{\mu}$. 

\begin{example}
  \label{tom}

Consider the following rewrite rules \cite{fu2016}. 
\begin{center}
  {\footnotesize
  $ D\ (S\ x)\ y \to_a D\ x\ (S\ y)$

  $ D\ Z\ y \to_b D\ (S\ y)\ Z$}
\end{center}
\noindent The term $D\ Z\ Z$ will give rise to the following nonlooping nonterminating 
reduction, where no cycle or loop can be observed: 

\begin{center}
{\footnotesize  $D \ Z \ Z \to_b D \ (S\ Z) \ Z \to_a D \ Z \ (S\ Z) \to_b D \ (S \ (S\ Z)) \ Z \to_a D \ (S\ Z) \ (S \ Z) 
  \to_a D \ Z\ (S \ (S\ Z))  \to_b  D \ (S \ (S \ (S\ Z)))\ Z \to_a D \ (S \ (S\ Z))\ (S\ Z) \to_a 
D \ (S\ Z)\ (S\ (S\ Z)) \to_a D \ Z\ (S\ (S \ (S\ Z))) \to ...$
}
\end{center}
The rule sequence for this reduction exhibits the pattern: ``$ba, baa, baaa,...$'', which can be represented
by the corecursive function $f\ \alpha\ \beta = (\beta \cdot \alpha) \ (f\ \alpha\ (\beta \cdot \alpha))$(here $\cdot$ denotes functional composition), as $f\ a\ b$ would give rise to the following reduction (we omit the compositional symbols): 

\begin{center}
  \noindent {$f \ a \ b \leadsto (b a) (f\ a\ (b a))
    \leadsto (b a b a a) (f \ a\ (b a a))\leadsto (b a b a a b a a a) (f \
    a\ (b a a a)) $}
\end{center}

\noindent Let the Leibniz representation of the rewriting system be as follows: 

\begin{center}
  {\footnotesize $\Delta = D : *^2 \Rightarrow *, Z : * , S
    : * \Rightarrow *$ \\
    $\Gamma = \kappa_a : \forall p . \forall x . \forall y
    . p\ (D\ x\ (S\ y)) \Rightarrow p \ (D\ (S\ x)\ y), \kappa_b :
    \forall p . \forall y. p\ (D\ (S\ y)\ Z) \Rightarrow p \ (D\ Z\
    y)$}
\end{center}

\noindent We would like to provide a type annotation for $f$ such that
 $\Gamma \vdash f\ \kappa_a \ \kappa_b : D \ Z \ Z$.
But it is not obvious as we cannot type check $f\ \kappa_a \ \kappa_b$ with $D \ Z\ Z$ using any first-order type checking algorithm (e.g. the one in Haskell).
We will show how to type check $f$ using the type checking algorithm we introduce in this section.
\end{example}

By \textit{type checking}, we
mean the following problem: given an environment $\Gamma$, a Curry-style evidence $e$ and a type $T$, construct a fully annotated evidence $e'$ such that $\Gamma \vdash e' : T$ and $|e'| = e$. We use the terminology 
\textit{proof checking} to mean the following: given an environment $\Gamma$, a fully annotated evidence $e$ and a type $T$,
check if $\Gamma \vdash e : T$. 
 The type checking problem for Curry-style System $\mathbf{F}$ and $\mathbf{F}_\omega$ are well-known to be undecidable (\cite{wells1999}, \cite{urzyczyn1997}).
 The type system $\mathbf{F}_2^\mu$ appears to be a much weaker system compared
 to System $\mathbf{F}$ and $\mathbf{F}_\omega$ (HHN is decidable in $\mathbf{F}_2^\mu$), we will show a type checking algorithm for $\mathbf{F}_2^\mu$ inspired by SLD-resolution~\cite{nilsson1990logic}. We will work on types 
 that are kindable by our decidable kind system (Definition \ref{kindsystem}).
 Moreover, we will consider the following reformulation of type
 $T$ from Definition~\ref{syntax}:
 $$T ::=  A  \ | \  \forall \underline{x}. T \Rightarrow ... \Rightarrow T \Rightarrow A$$
 Here $A$ is
 of kind $*$. We use $T_1,..., T_n \Rightarrow A$ as a shorthand for $T_1 \Rightarrow ... \Rightarrow T_n \Rightarrow A$ and we call $A$ the \textit{head} of $T_1,..., T_n \Rightarrow A$.
 These types are a generalized version of Horn formulas, called \textit{hereditary Harrop formula} in the literature \cite{miller1991uniform}. 

 In this section we use $A, B$ to denote a type of kind $*$, and we use $a, b$ to denote a type
 variable or a type constant. The following definition of second-order matching 
 follows Dowek's treatment \cite{dowek2001} of Huet's algorithm \cite{gerard1976}. 
 \begin{definition}[Second-order Matching]
   \label{som}
Let $E$ be a set of second-order matching problems $\{A_1 \mapsto B_1,..., A_n \mapsto B_n\}$. The following 
rules (intended to apply top-down) show how to transform $E$. 

\

{\footnotesize

  \begin{tabular}{lll}
 \infer{\bot}{\{F \ A_1\ ...\ A_n \mapsto G \ B_1 \ ... \ B_m, E \}}

    & &
    
    \infer{\{A_1 \mapsto B_1,..., A_n \mapsto B_n, E\} }
          {\{a \ A_1\ ...\ A_n \mapsto a \ B_1 \ ... \ B_n, E\} }
          
          \\  \\
          
          \infer[\textsc{Proj}]{[(\lambda x_1. .... \lambda x_n . x_i/y]E' }{E' \equiv \{y \ A_1\ ...\ A_n \mapsto a \ B_1 \ ... \ B_m, E\}}

  & &

            \infer[\textsc{Imi}]{[(\lambda x_1. .... \lambda x_n . a \ (y_1 \ \underline{x}) ...\ (y_m \ \underline{x}))/y]E' }{E' \equiv \{y \ A_1\ ...\ A_n \mapsto a \ B_1 \ ... \ B_m, E\}}
  \end{tabular}
     
}          
 \end{definition}

 Note that $\bot$ denotes a failure in matching. In the \textsc{Imi} rule, the variables $y_1,..., y_m$ are fresh type variables. 
 The \textsc{Proj} and \textsc{Imi} rules introduce
  nondeterminism, so there may be multiple matchers for a matching problem $A \mapsto B$. We
 write $A \mapsto_\sigma B$ to mean there is a derivation from $A \mapsto B$ to $\emptyset$ 
 using rules in the above definition with a second-order matcher $\sigma$. 
The second-order matching is decidable (all derivations are finite using Definition \ref{som}) and all the resulted matchers are finite, but second-order unification is not decidable \cite{goldfarb1981}.

 The standard second-order matching algorithm usually generates
 many vacuous substitutions, we can
 exclude them by kinding, as we work with kindable types.
For example, when we match
$d \ Z \ Z$ against $D \ Z \ (S\ Z)$, the second-order matching algorithm would
 generate matchers such as $[\lambda x. \lambda y. D\ Z \ (S\ Z)/d]$ and $[\lambda x. \lambda y. D\ y \ (S\ y)/d]$, which are not kindable. 


 Let $T = \forall x_1. ...\forall x_m. T_1, ... , T_n \Rightarrow A$,
 the set of variables $\{x_i \ | \ x_i \notin \mathrm{FV}(A), 1 \leq i \leq m\} \cup \mathrm{FV}(T)$ are called \emph{existential variables}. In this section,  we work with types that do not have any existential variables, we will show how to handle existential variables in the next section.
We use $\Phi$ to
denote a set of tuples of the form $(\Gamma, e, T)$. 
We define \textit{resolution by second-order matching} as a transition system from 
$\Phi$ to $\Phi'$ as follows:

\begin{definition}[Resolution by Second-order Matching (RSM)]
\label{rsm}
  \fbox{$\Phi \longrightarrow \Phi'$}

  \begin{enumerate}
  \item  $\{(\Gamma, (\kappa | \alpha) \ e_1 \ ... \ e_n, A), \Phi \} \longrightarrow_a \{(\Gamma,e_1, \sigma T_1), ..., (\Gamma,e_n, \sigma T_n), \Phi\}$
    if $\kappa | \alpha : \forall \underline{x}. T_1,..., T_n \Rightarrow B \in \Gamma$ with $B \mapsto_\sigma A$.
  \item $\{(\Gamma, \lambda \alpha_1. ... \lambda \alpha_n. e, T_1, ..., T_n \Rightarrow A), \Phi\} \longrightarrow_i \{([\Gamma,\alpha_1 : T_1, ..., \alpha_n : T_n], e, A), \Phi \}$.

      \item $\{(\Gamma, e, \forall x_1 ... \forall x_n. T), \Phi\} \longrightarrow_{\forall} \{(\Gamma, e, T), \Phi \}$.

  \item $\{(\Gamma, \mu \alpha. e, T), \Phi\} \longrightarrow_c \{([\Gamma, \alpha : T], e, T), \Phi \}$.

  
  \end{enumerate}
\end{definition}

As before,  $\kappa | \alpha$ means ``$\kappa$ or $\alpha$''.
The rule (1) allow the the size of $\{e_1,..., e_n\}$ to be zero. 
We require the sizes of $\{\alpha_1,..., \alpha_n\}$ and $\{x_1,..., x_n\}$ both to be nonzero for rules (2) and (3). Rule (3) also introduces fresh \textit{eigenvariables} $\{x_1,..., x_n\}$ for $T$, they behave the same as constants during RSM. In rule (1), when perform matching $B \mapsto_\sigma A$, we rename the bound variables $\underline{x}$ in $T_1,..., T_n, B$ to fresh variables. 
The $T$ in the tuple $(\Gamma, e, T)$ intuitively corresponds to the current goal for the
resolution and $e$ is a Curry-style evidence that
 can be understood as a list of instructions for the resolution algorithm.  
The resolution is defined by case analysis on the Curry-style evidence and the current goal $T$ and it is terminating. If it terminates with the empty
set, then we say the resolution succeeds, otherwise it fails. The following theorem shows that if the resolution succeeds, then the type checking succeeds, i.e. we can obtain the corresponding fully annotated evidence.

\begin{theorem}[Soundness of RSM]
\label{thm:sound}
If $\{(\Gamma, e, T)\} \longrightarrow^* \emptyset$, then there exists an evidence $e'$ such that $\Gamma \vdash e' : T$ in $\mathbf{F}_{2}^{\mu}$ and $|e'| = e$.
  \end{theorem}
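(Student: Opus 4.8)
The plan is to induct on the length of the RSM derivation $\{(\Gamma, e, T)\} \longrightarrow^* \emptyset$, generalizing the statement to sets $\Phi$: if $\Phi = \{(\Gamma_1, e_1, T_1), \dots, (\Gamma_k, e_k, T_k)\} \longrightarrow^* \emptyset$, then for each $j$ there is an annotated $e_j'$ with $\Gamma_j \vdash e_j' : T_j$ and $|e_j'| = e_j$. The base case $\Phi = \emptyset$ is vacuous. For the inductive step I case-split on which RSM rule fires on the chosen tuple in the first step; the other tuples in $\Phi$ are handled directly by the induction hypothesis since they persist through the step.

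The four cases mirror the four RSM rules. For rule (2) ($\longrightarrow_i$), the sub-derivation produces $e' $ with $[\Gamma, \alpha_1:T_1,\dots,\alpha_n:T_n] \vdash e' : A$; applying \textsc{Lam} $n$ times gives $\Gamma \vdash \lambda\alpha_1.\cdots\lambda\alpha_n.e' : T_1,\dots,T_n \Rightarrow A$, and erasure is preserved since $|\lambda\alpha.e'| = \lambda\alpha.|e'|$. For rule (3) ($\longrightarrow_\forall$), the IH gives $\Gamma \vdash e' : T$ (the eigenvariables $x_i$ are fresh, hence $x_i \notin \mathrm{FV}(\Gamma)$), and $n$ applications of \textsc{Abs} yield $\Gamma \vdash \lambda x_1.\cdots\lambda x_n. e' : \forall x_1\dots\forall x_n. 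T$; erasure drops the type abstractions, so $|\lambda x_1.\cdots\lambda x_n.e'| = |e'| = e$. For rule (4) ($\longrightarrow_c$), the IH gives $[\Gamma,\alpha:T]\vdash e' : T$, and \textsc{Mu} gives $\Gamma \vdash \mu\alpha.e' : T$ with $|\mu\alpha.e'| = \mu\alpha.|e'|$. For rule (1) ($\longrightarrow_a$), we have $\kappa|\alpha : \forall\underline{x}.\,T_1,\dots,T_n \Rightarrow B \in \Gamma$ with $B \mapsto_\sigma A$; the IH applied to the $n$ new tuples gives $e_i'$ with $\Gamma \vdash e_i' : \sigma T_i$ and $|e_i'| = e_i$. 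The candidate annotated evidence is $(\kappa|\alpha)\,\sigma(x_1)\,\cdots\,\sigma(x_m)\,e_1'\,\cdots\,e_n'$: starting from the axiom rule, $n$ applications of \textsc{Inst} instantiate the quantifiers, then $n$ applications of \textsc{App} consume the arguments, using \textsc{Conv} to absorb the $\to_o$-reduction of the instantiated head into $A$.

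The main obstacle is the soundness of the second-order matching step inside case (1): I must argue that $B \mapsto_\sigma A$ really produces a substitution $\sigma$ such that $\sigma B \leftrightarrow_o^* A$ (so that \textsc{Conv} applies after instantiating with $\sigma(x_1),\dots,\sigma(x_m)$), and that the resulting $\sigma T_i$ are well-kinded so the recursive calls are meaningful. This requires a separate lemma establishing soundness of the matching rules in Definition~\ref{som} — essentially that if $A \mapsto_\sigma B$ then $\sigma A \to_o^* B$ — proved by induction on the matching derivation, checking each of the four transformation rules (decomposition, \textsc{Proj}, \textsc{Imi}, and the failure rule contributing vacuously). The remaining bookkeeping — that the eigenvariables and fresh matching variables stay disjoint from existing free variables, so that the side conditions of \textsc{Abs} hold and capture is avoided — is routine but must be stated. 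A minor point worth noting is that this theorem only guarantees existence of some annotated $e'$; it does not claim $e'$ is unique, which is consistent with the nondeterminism of \textsc{Proj}/\textsc{Imi}.
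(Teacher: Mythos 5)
Your proposal matches the paper's proof essentially step for step: the paper also generalizes to sets of tuples (its Lemma~\ref{res:sound}), inducts on the length of the derivation to $\emptyset$, and handles the four RSM rules with \textsc{Inst}/\textsc{App}/\textsc{Conv}, \textsc{Lam}, \textsc{Abs}, and \textsc{Mu} respectively. The only difference is that you explicitly flag the soundness of second-order matching ($B \mapsto_\sigma A$ implies $\sigma B \leftrightarrow_o^* A$) as a needed lemma, which the paper uses implicitly in its \textsc{Conv} step without stating it separately.
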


The proof of Theorem \ref{thm:sound} gives us an algorithm to compute the annotated 
evidence $e'$. This algorithm is implemented in our prototype. 

\begin{example}\label{ex:tom2}
  Continuing the Example~\ref{tom}, let us illustrate how to use RSM to type check the function $f$. Consider the long form of $f$, namely, $f   =  \mu f. \lambda \alpha . \lambda \beta. \beta (\alpha \ (f\ (\lambda \alpha'. \alpha \ \alpha')\ (\lambda \alpha'. \beta( \alpha \ \alpha')))$ and the Leibniz representation:
  
  \begin{center}
    {\footnotesize $\Delta = D : *^2 \Rightarrow *, Z : * , S
      : * \Rightarrow *$

      $\Gamma = \kappa_a : \forall p . \forall x . \forall y
      . p\ (D\ x\ (S\ y)) \Rightarrow p \ (D\ (S\ x)\ y), \kappa_b :
      \forall p . \forall y. p\ (D\ (S\ y)\ Z) \Rightarrow p \ (D\ Z\
      y)$.}
  \end{center}
  As we want $\Gamma \vdash f \ \kappa_a \ \kappa_b : D \ Z \ Z$, the most intuitive type that we can assign to $f$ is the following. 

\begin{center} {\small $ T\equiv (\forall p . \forall x . \forall y
    . p \ (D \ x \ (S\ y)) \Rightarrow p \ (D\ (S\ x) \ y))
    \Rightarrow (\forall p . \forall y. p\ (D\ (S\ y)\ Z) \Rightarrow p \
    (D\ Z\ y)) \Rightarrow D\ Z\ Z$}
\end{center}
But $f$ can not be type checked with $T$ by RSM.  
The solution is abstracting $D$ to a second-order variable $d$ and assigning the following type to $f$: 

\begin{center} {\small $T' \equiv \forall d. \underbrace{(\forall p . \forall x . \forall y
    . p \ (d \ x \ (S\ y)) \Rightarrow p \ (d\ (S\ x) \ y))
    \Rightarrow (\forall p . \forall y. p\ (d\ (S\ y)\ Z) \Rightarrow p \
    (d\ Z\ y)) \Rightarrow d\ Z\ Z}_{T''}$}
\end{center}
This change yields the following successful RSM resolution trace. 

\begin{center}
  {
    $\{(\Gamma,\mu f. \lambda \alpha . \lambda \beta. \beta (\alpha \ (f\ (\lambda \alpha'. \alpha \ \alpha')\ (\lambda \alpha'. \beta( \alpha \ \alpha')))), T')\} \longrightarrow_c$

    $\{([\Gamma, f : T'], \lambda \alpha . \lambda \beta. \beta (\alpha \ (f\ (\lambda \alpha'. \alpha \ \alpha')\ (\lambda \alpha'. \beta( \alpha \ \alpha')))), T')\} \longrightarrow_{\forall}$
    
    $\{([\Gamma, f : T'], \lambda \alpha . \lambda \beta. \beta (\alpha \ (f\ (\lambda \alpha'. \alpha \ \alpha')\ (\lambda \alpha'. \beta( \alpha \ \alpha')))), [d_1/d]T'')\} \longrightarrow_{i}
    \{(\Gamma'', \beta (\alpha \ (f\ (\lambda \alpha'. \alpha \ \alpha')\ (\lambda \alpha'. \beta( \alpha \ \alpha')))), d_1\ Z \ Z)\} \longrightarrow_a \{(\Gamma'', \alpha \ (f\ (\lambda \alpha'. \alpha \ \alpha')\ (\lambda \alpha'. \beta( \alpha \ \alpha'))), d_1\ (S\ Z) \ Z)\} \longrightarrow_a
    \{(\Gamma'', f\ (\lambda \alpha'. \alpha \ \alpha')\ (\lambda \alpha'. \beta( \alpha \ \alpha')), d_1\ Z \ (S \ Z))\} \mgray{\longrightarrow_{a}}$

    $\{(\Gamma'', \lambda \alpha'. \alpha \ \alpha', \forall p . \forall x . \forall y
    . p \ (d_1 \ x \ (S\ (S \ y))) \Rightarrow p \ (d_1\ (S\ x) \ (S\ y))), \Phi_1 \equiv (\Gamma'', \lambda \alpha'. \beta ( \alpha \ \alpha'), \forall p . \forall y. p\ (d_1\ (S\ y)\ (S \ Z)) \Rightarrow p \ (d_1\ Z\ (S\ y)))\} \longrightarrow_{\forall}$

        $\{(\Gamma'', \lambda \alpha'. \alpha \ \alpha', 
     p_1 \ (d_1 \ x_1 \ (S\ (S \ y_1))) \Rightarrow p_1 \ (d_1\ (S\ x_1) \ (S\ y_1))), \Phi_1\}
     \longrightarrow_i$
     
    $\{([\Gamma'', \alpha' : p_1 \ (d_1 \ x_1 \ (S\ (S\ y_1)))], \alpha \ \alpha', 
     p_1 \ (d_1\ (S\ x_1) \ (S\ y_1))), \Phi_1\}
     \longrightarrow_a$ 

     $\{([\Gamma'', \alpha' : p_1 \ (d_1 \ x_1 \ (S\ (S\ y_1)))], \alpha', 
     p_1 \ (d_1\ x_1 \ (S\ (S\ y_1)))), \Phi_1\}
     \longrightarrow_a
     \{(\Gamma'', \lambda \alpha'. \beta ( \alpha \ \alpha'), \forall p . \forall y. p\ (d_1\ (S\ y)\ (S \ Z)) \Rightarrow p \ (d_1\ Z\ (S\ y)))\}
     \longrightarrow_{\forall}$

     $\{(\Gamma'', \lambda \alpha'. \beta ( \alpha \ \alpha'), p_2\ (d_1\ (S\ y_2)\ (S \ Z)) \Rightarrow p_2 \ (d_1\ Z\ (S\ y_2)))\} \longrightarrow_{i}$

      $\{([\Gamma'', \alpha' : p_2\ (d_1\ (S\ y_2)\ (S \ Z))], \beta ( \alpha \ \alpha'),  p_2 \ (d_1\ Z\ (S\ y_2)))\} \longrightarrow_a$ 


      $\{([\Gamma'', \alpha' : p_2\ (d_1\ (S\ y_2)\ (S \ Z))], \alpha \ \alpha',  p_2 \ (d_1\ (S\ (S\ y_2)) \ Z))\} \longrightarrow_a$ 

      $\{([\Gamma'', \alpha' : p_2\ (d_1\ (S\ y_2)\ (S \ Z))], \alpha',  p_2 \ (d_1\ (S\ y_2) \ (S \ Z)))\} \longrightarrow_a \emptyset
    $}
\end{center}
Note that $\Gamma'' = \Gamma, f : T', \alpha : \forall p . \forall x . \forall y . p \ (d_1 \ x \ (S\ y)) \Rightarrow p \ (d_1\ (S\ x) \ y), \beta : \forall p . \forall y. p\ (d_1\ (S\ y)\ Z) \Rightarrow p \ (d_1\ Z\ y)$. At the third $\longrightarrow_a$-step, by second-order matching, we instantiate the $d$ in the type of $f$ to $\lambda x.\lambda y. d_1 \ x \ (S\ y)$. Now that $f$ is typable with $T'$, we have $\Gamma \vdash f\ D\ \kappa_a \ \kappa_b : D \ Z\ Z$. Since the rewriting system is non-overlapping and $f$ is hereditary head normalizing, by
Theorem \ref{faithfulness} we know $f\ D\ \kappa_a \ \kappa_b$ represents the nonterminating reduction of $D\ Z\ Z$.  
\end{example}

Representing nonterminations in general follows the same method as the above example:
one first writes down a corecursive function that 
represents the rule sequence in a nonterminating reduction, and then provides the
proper type signature for such function. Once the function is type checked,  
a finite representation can be obtained. We illustrate how the prototype works for
this example and some other challenging examples in the Appendix \ref{app:expaper}, \ref{app:rewritingex}.

\section{RSM Algorithm with Existential Variables}
\label{heuristic}

The RSM algorithm in Definition \ref{rsm} fails to type check some judgements in presence of existential variables.
In this section, we extend RSM to cope with existential variables. 
As a result, the nontermination reduction in the Example \ref{fib} can also be type checked.



We consider the following sequential reduction that simulates the parallel reduction sequence in
the Example \ref{fib}.
 At each reduction step, we underline the chosen redex.

 \begin{center}
   {\footnotesize
  $\mgray{\underline{A}} \to_a
  \mgray{\underline{A}B} \to_a AB\underline{B} \to_b
  \mgray{\underline{A}BA} \to_a AB\underline{B}A \to_b
  ABA\underline{A} \to_a \mgray{\underline{A}BAAB} \to_a
  AB\underline{B}AAB \to_b ABA\underline{A}AB \to_a
  ABAAB\underline{A}B \to_a ABAABAB\underline{B} \to_b
  \mgray{\underline{A}BAABABA} \to_a AB\underline{B}AABABA \to_b
  ABA\underline{A}ABABA \to_a ABAAB\underline{A}BABA \to_a
  ABAABAB\underline{B}ABA \to_b ABAABABA\underline{A}BA \to_a
  ABAABABAAB\underline{B}A \to_b ABAABABAABA\underline{A} \to_a
  \mgray{\underline{A}BAABABAABAAB} \to ...$}
\end{center}
Observe that the length of the gray strings grows according to the Fibonacci sequence, and each gray string is a result of concatenation of the previous two. 

The rule sequence in the above reduction is ``$a, ab, aba, abaab, abaababa$'' (each word in the rule sequence is a concatenation of the previous two).
We can use the corecursive function $f \alpha\ \beta = \alpha \ (f \ (\alpha \cdot \beta) \ \alpha)$ to generate such sequences. 

\begin{center}
  $f\ a \ b \leadsto a (f \ (ab)\ a) \leadsto 
  (a ab) (f\ (aba)\ (a b)) \leadsto (a ab aba) (f\ (abaab)\ (aba))$
\end{center}

We can use a standard method~\cite{bezem2003term} to represent string rewriting systems
as first-order term rewriting systems. In this case, the corresponding rules would be $A\ x \to_a A\ (B\ x)$ and $B\ x \to_b A\ x$. The reduction would begin with $A\ x$.
The Leibniz representation for this rewrite system is the following:  

\begin{center}
  {\footnotesize
  $\Delta = A : * \Rightarrow *, B : * \Rightarrow *$ \\
  
  $\Gamma = \kappa_a : \forall p . \forall x . p \ (A\ (B\ x))\Rightarrow p\ (A\ x) , \kappa_b : \forall p . \forall x . p \ (A\ x)\Rightarrow p\ (B\ x)$}
\end{center}
\noindent To represent the rewriting sequence above, we need to give a type to the function $f$ 
such that $\Gamma \vdash f \ \kappa_a \ \kappa_b : A\ x$. 
The most intuitive type we can assign to the corecursive function $f \alpha \ \beta = \alpha \ (f \ (\alpha \cdot \beta) \ \alpha)$ is the following: 

\begin{center}
(I)  $ \forall x. (\forall p_2. \forall y_2. p_2\ (A\ (B\ y_2))
  \Rightarrow p_2\ (A\ y_2)) \Rightarrow
  (\forall p_1. \forall y_1. p_1\ (A\ y_1) \Rightarrow p_1\ (B\ y_1))
  \Rightarrow A\ x$
\end{center}

\noindent Then we would have $\Gamma \vdash f \ x \ \kappa_a \ \kappa_b : A\ x$. Unfortunately this
will not be type checked by RSM (the resolution will fail). We need to perform abstraction on type (I), here we abstract the function symbol $B$ to a functional variable $b : * \Rightarrow *$,
and $A$ to a functional variable $a : * \Rightarrow *$, obtaining the following type for $f$.  

\begin{center}
(II) {\small $T \equiv \forall {a}. \forall b. \forall x.
  \underbrace{(\forall p. \forall y. p\ ({a}\ (\mgray{b}\ y))
  \Rightarrow p\ ({a}\ y)) \Rightarrow (\forall p. \forall y. p\ ({a}\ y) \Rightarrow p\ (\mgray{b}\ y))\Rightarrow a\ x}_{T'}$}
\end{center}

\noindent Note that the quantified variable $b$ in (II) is an existential variable. If $f$ is typable with (II), 
then we know that $\Gamma \vdash f \ A \ B \ x \ \kappa_a \ \kappa_b : A \ x$, which encodes the nonterminating
reduction starting from $A\ x$.
But RSM will fail again in this case, due to the appearance of the existential variable $b$. 

Ideally, the best way to deal with existential variables is by unification, we would need to replace rule (1) in RSM with the following:

\begin{center}
{  $\{(\Gamma, (\kappa | \alpha) \ e_1 \ ... \ e_n, A), \Phi \} \longrightarrow_a \{(\sigma \Gamma,e_1, \sigma T_1), ..., (\sigma \Gamma,e_1, \sigma T_n), \sigma \Phi\}$ if $\kappa | \alpha : \forall \underline{x}. T_1,..., T_n \Rightarrow B \in \Gamma$ with $B \sim_\sigma A$}
\end{center}

\noindent Here $B \sim_\sigma A$ means $A$ and $B$ are second-orderly unifiable by $\sigma$. And $\sigma \Gamma, \sigma \Phi$ means applying the substitution $\sigma$ to all the types in $\Gamma, \Phi$. But second-order unification is not decidable and we need a finite set of unifiers.
Thus we replace $B \sim_\sigma A$ with $B \mapsto_\sigma A$. 

\begin{definition}[Existential RSM (ERSM)]
\label{existential}
\

\noindent We replace (1) in Definition \ref{rsm} to the following (Keeping rules (2), (3), (4) unchanged):

(1') $\{(\Gamma, (\kappa | \alpha) \ e_1 \ ... \ e_n, A), \Phi \} \longrightarrow_a \{(\sigma \Gamma,e_1, \sigma T_1), ..., (\sigma \Gamma, e_n, \sigma T_n), \sigma \Phi\}$

if $\kappa | \alpha : \forall x_1. ... \forall x_m. T_1,..., T_n \Rightarrow B \in \Gamma$ with $B \mapsto_{\sigma} A$. 



  
\end{definition}

Note that the formula $\forall x_1. ... \forall x_m. T_1,..., T_n \Rightarrow B$ in rule (1') may contain existential variables. 
The idea of this change is that by reordering the $(\Gamma, e, T)$ pairs, we give priority to
resolve the pair $(\Gamma, e, T)$ where the head of $T$ does not contain any existential
variables. If the $A$ in (1') does not contain existential variables, we can use rule (1') to eliminate the existential variables in $\forall x_1. ... \forall x_m. T_1,..., T_n \Rightarrow B$. This extension allows us to avoid using the undecidable second-order unification, and it is good enough to handle all of our examples involving existential variables\footnote{There is a well-known scope problem 
 \cite[Section 5]{dowek2001}, we show how to solve it for ERSM and prove the soundness of ERSM in Appendix \ref{scope}.}.


With the Definition \ref{existential}, we can obtain the following successful ERSM reduction,
where $\mu f . \lambda \alpha . \lambda \beta .\alpha \ (f \ (\lambda \alpha' . (\alpha \ (\beta\ \alpha'))) \ (\lambda \alpha' . \alpha \ \alpha'))$ is the long form of $f\ \alpha \ \beta = \alpha \ (f(\alpha \cdot \beta)\ \alpha)$.

\begin{center}
{
  $\{(\Gamma, \mu f . \lambda \alpha . \lambda \beta . \alpha \ (f \ (\lambda \alpha' . (\alpha \ (\beta\ \alpha')))\ (\lambda \alpha' . \alpha \ \alpha')) , T \} \longrightarrow_c$

  $\{([\Gamma, f : T], \lambda \alpha . \lambda \beta . \alpha \ (f \ (\lambda \alpha' . (\alpha \ (\beta\ \alpha'))) \ (\lambda \alpha' . \alpha \ \alpha')) , T )\} \longrightarrow_{\forall}$

  $\{([\Gamma, f : T], \lambda \alpha . \lambda \beta . \alpha \ (f \ (\lambda \alpha' . (\alpha \ (\beta\ \alpha'))) \ (\lambda \alpha' . \alpha \ \alpha')), [a_1/a, b_1/b, x_1/x]T' )\} \longrightarrow_i \{(\Gamma', \alpha \ (f \ (\lambda \alpha' . (\alpha \ (\beta\ \alpha'))) \ (\lambda \alpha' . \alpha \ \alpha')) , a_1\ x_1)\} \longrightarrow_a \{(\Gamma', f \ (\lambda \alpha' . (\alpha \ (\beta\ \alpha'))) \ (\lambda \alpha' . \alpha \ \alpha'), a_1\ (b_1 \ x_1)\} \mgray{\longrightarrow_a}$

  $\{(\Gamma', \lambda \alpha' . \alpha \ (\beta\ \alpha'), \forall p. \forall y. p\ (a_1\ (b_1\ (\mgray{b_2}\ y))) \Rightarrow p\ (a_1\ (b_1\ y))), \Phi \equiv (\Gamma', \lambda \alpha' . \alpha \ \alpha', (\forall p. \forall y. p\ (a_1\ (b_1\ y)) \Rightarrow p\ (\mgray{b_2}\ y)))\} \longrightarrow_{\forall}$

  $\{(\Gamma', \lambda \alpha' . \alpha \ (\beta\ \alpha'), p_2\ (a_1\ (b_1\ (\mgray{b_2}\ y_2))) \Rightarrow p_2\ (a_1\ (b_1\ y_2))), \Phi\} \longrightarrow_i $
  
  $\{([\Gamma', \alpha' : p_2\ (a_1\ (b_1\ (\mgray{b_2}\ y_2)))], \alpha \ (\beta\ \alpha'),  p_2\ (a_1\ (b_1 \ y_2)))), \Phi\} \longrightarrow_a$ 

  $\{([\Gamma', \alpha' : p_2\ (a_1\ (b_1\ (\mgray{b_2}\ y_2)))], \beta\ \alpha',  p_2\ (a_1\ (b_1 \ (b_1 \ y_2)))), \Phi\}\longrightarrow_a$

  $\{([\Gamma', \alpha' : p_2\ (a_1\ (b_1\ (\mgray{b_2}\ y_2)))], \alpha',  p_2\ (a_1\ (b_1 \ (a_1 \ y_2)))), \Phi\} \mgray{\longrightarrow_a}$
  
  $[(\lambda y. a_1 \ y)/\mgray{b_2}]\Phi \equiv \{(\Gamma', \lambda \alpha' . \alpha \ \alpha', \forall p. \forall y. p\ (a_1\ (b_1\ y)) \Rightarrow p\ (a_1\ y)) \} \longrightarrow_{\forall} $

    $\{(\Gamma', \lambda \alpha' . \alpha \ \alpha', p_3\ (a_1\ (b_1\ y_3)) \Rightarrow p_3\ (a_1\ y_3)) \} \longrightarrow_i $

  $\{([\Gamma', \alpha' : p_3\ (a_1 \ (b_1\ y_3))], \alpha\ \alpha', p_3\ (a_1\ y_3))\}  \longrightarrow_a $

  $\{([\Gamma', \alpha' : p_3\ (a_1 \ (b_1\ y_3))], \alpha', p_3\ (a_1\ (b_1\ y_3)))\} \longrightarrow_a \emptyset$

}
  
\end{center}
Note that $\Gamma' = \Gamma, f : T,  \alpha : \forall p. \forall y. p\ (a_1\ (b_1\ y)) \Rightarrow p\ (a_1\ y), \beta : \forall p. \forall y. p\ (a_1\ y) \Rightarrow p\ (b_1\ y)$. At the second $\longrightarrow_a$-step, by second-order
matching, variable $a$ is instantiated with $\lambda y. a_1\ (b_1\ y)$ for the type of $f$ and
the existential variable $b$ is instantiated with fresh variable $b_2$. At the fifth $\longrightarrow_a$-step, the existential variable $b_2$ is
instantiated with $\lambda y. a_1 \ y$, and there is a substitution for $b_2$ applying to $\Phi$. But RSM will not perform this substitution, as a result, RSM cannot resolve $\Phi$ to $\emptyset$.

\section{Conclusion and Future Work}

\label{future}

We present a novel method to represent nonterminating reductions in $\mathbf{F}_2^\mu$, where the rewrite rules and first-order terms are modeled by types, and 
the nonterminations are modeled by the hereditary head normalizing evidence. We prove that the hereditary head normalizing evidence for a first-order term is faithful, i.e. it represents a nonterminating reduction. We also prove the hereditary head normalization property for $\mathbf{F}_2^\mu$ is decidable. To ease the representation process, we develop a type checking algorithm based on second-order matching, where fully annotated evidence can be generated from Curry-style evidence with only top-level type annotations. 


\textbf{Future work}. 
We would like to investigate the nonterminating reductions that are currently 
outside the scope of $\mathbf{F}_2^\mu$ and study the expressitivity of $\mathbf{F}_2^\mu$
in terms of representing nonterminations. 
The RSM/ERSM type checking algorithm is not very flexible. For example
the Curry style evidence currently has to be in long form. We plan to
 relax this restriction.


\section*{Acknowledgement}
I would like to thank Tom Schrijvers for coming up with Example \ref{tom} and showing me a solution in Haskell using type family (See Fu et. al. \cite{fu2016}), at a time when I thought this whole thing is impossible. I also like to thank Ekaterina Komendantskaya for many helpful
discussions, which leads me to consider the automation aspect, eventually I discover that quantification over higher-order variables leads to another solution for Example \ref{tom} without using type family, hence this paper. Reviewer 1 from FSCD 2016 discovered an error
in an ealier version of the paper, which leads to a more rigid formulation of $\mathbf{F}_2^\mu$.
Reviewer A from POPL 2017 suggests a possible simplification of productivity checking by mapping
$\mathbf{F}_2^\mu$ to $\lambda$-Y, which I carried out in this paper, and it greatly simplifies and strengthens the paper. Leibniz representation in this paper is inspired by Stump and Sch{\"u}rmann \cite{stump2005logical}'s treatment on rewriting and
Girard's recent criticism about Leibniz equality \footnote{J.Y. Girard, \textit{Transcendental syntax III: equality}}. I would also like to thank the School of Computing at University of Dundee, and my mother Chen Xingzhen for generously providing a working space for me when I was in
transitions between Postdocs. 



\bibliography{nonterm}




\newpage

\appendix

\section{Proof of Theorem \ref{sub:kind}}

\begin{theorem}
 If $\Delta \vdash T : k$ and $T \to_o T'$, then $\mathrm{FV}(T) = \mathrm{FV}(T')$ and $\Delta \vdash T' : k$  
\end{theorem}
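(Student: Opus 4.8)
## Proof Proposal for Theorem \ref{sub:kind}

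The plan is to prove both claims --- invariance of free variables, $\mathrm{FV}(T) = \mathrm{FV}(T')$, and subject reduction, $\Delta \vdash T' : k$ --- simultaneously. Since $\to_o$ is the compatible closure of the single-step contraction $(\lambda x . S)\ U \to_o [U/x]S$, I would first reduce to the base case by induction on the structure of the derivation $\Delta \vdash T : k$: all the kinding rules except the variable/constant rule have the form ``build $T$ from immediate subterms $T_1, T_2, \dots$,'' so if the redex contracted in $T \to_o T'$ lies entirely within one subterm, the result follows by the induction hypothesis applied to that subterm together with re-application of the same kinding rule (using that $\mathrm{FV}$ of the subterm is unchanged, hence side conditions like $x \in \mathrm{FV}(T)$ in the $\lambda$-rule and $x \notin \mathrm{FV}$-type conditions are preserved). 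The only genuinely new content is the head case where $T \equiv (\lambda x . S)\ U$ and $T' \equiv [U/x]S$.

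For the head case, I would invoke the relevant kinding rules. If $\Delta \vdash (\lambda x . S)\ U : k$, then by inversion on the application rule $k = K$ is a term kind, $\Delta \vdash U : *$, and $\Delta \vdash \lambda x . S : * \Rightarrow K$; inverting the $\lambda$-rule gives $\Delta, x : * \vdash S : K$ together with $x \in \mathrm{FV}(S)$. I then need a substitution lemma: if $\Delta, x : * \vdash S : K$ and $\Delta \vdash U : *$, then $\Delta \vdash [U/x]S : K$, and moreover $\mathrm{FV}([U/x]S) = (\mathrm{FV}(S) \setminus \{x\}) \cup \mathrm{FV}(U)$. The free-variable equality uses precisely the relevance condition $x \in \mathrm{FV}(S)$: without it, substituting could drop $\mathrm{FV}(U)$ and break the equation; with it, $x$ really occurs in $S$ so $U$ (and hence $\mathrm{FV}(U)$) genuinely appears in $[U/x]S$. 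For the whole term $T$, $\mathrm{FV}(T) = (\mathrm{FV}(S) \setminus \{x\}) \cup \mathrm{FV}(U) = \mathrm{FV}(T')$, as desired.

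The substitution lemma itself I would prove by induction on the derivation of $\Delta, x : * \vdash S : K$, which is the routine part --- each case just pushes the substitution inward, and the only subtle point is keeping track of the relevance side conditions on inner $\lambda$-abstractions (bound variables are chosen fresh for $U$ so capture is not an issue, and an inner relevance condition $y \in \mathrm{FV}(S')$ transfers to $y \in \mathrm{FV}([U/x]S')$ because substitution for $x \neq y$ never removes $y$). I expect the main obstacle to be stating and maintaining the free-variable bookkeeping cleanly: the relevance restriction on $\lambda$ is exactly what makes $\mathrm{FV}$ an invariant rather than merely monotone, so the proof has to be careful that every sub-case where an abstraction or a subterm is rebuilt re-checks that its free-variable set has not changed, and in particular that no $\lambda x . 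T$ becomes ill-kinded because $x$ was eliminated from its body by the reduction --- which cannot happen precisely because $x$ was required to occur in the body and $\to_o$ preserves that occurrence by the inductive hypothesis.
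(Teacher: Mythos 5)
Your proposal is correct and follows essentially the same route as the paper: induction on the kinding derivation, with the congruence cases handled by the induction hypothesis (using the free-variable invariance to preserve the relevance side condition on $\lambda$), and the head case $(\lambda x.S)\ U \to_o [U/x]S$ handled by inversion plus a substitution lemma, where the relevance condition $x \in \mathrm{FV}(S)$ is exactly what yields $\mathrm{FV}(T) = \mathrm{FV}(T')$. The only difference is that you state the substitution lemma and its free-variable bookkeeping explicitly, whereas the paper leaves that step implicit.
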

\begin{proof}
  By induction on the derivation of $\Delta \vdash T : k$. 

  \noindent \textbf{Case}. 
  
\

 \infer{\Delta \vdash x | F : K}{(x | F :  K) \in \Delta}    

\

Obvious.
  
  \noindent \textbf{Case}. 
  

  \
  
\infer{\Delta \vdash \lambda x. T : * \Rightarrow K}{\Delta, x : * \vdash T : K & x \in \mathrm{FV}(T)}

\

\noindent We have $T \to_o T'$. By IH, we have $\Delta, x : * \vdash T' : K$ and 
$\mathrm{FV}(T) = \mathrm{FV}(T')$. Thus $x \in \mathrm{FV}(T')$. So $\Delta \vdash \lambda x. T' : * \Rightarrow K$.

\noindent \textbf{Case}. 

\

\infer{\Delta \vdash (\lambda x.T_2)\ T_1 : K}{\Delta \vdash T_1 : * & \Delta \vdash \lambda x.T_2 : * \Rightarrow K}

\

\noindent We have $(\lambda x.T_2)\ T_1 \to_o [T_1/x]T_2$. Since $\Delta \vdash \lambda x.T_2 : * \Rightarrow K$, by inversion we know that $\Delta, x : * \vdash T_2 : K$ and $x \in \mathrm{FV}(T_2)$. 
So $\mathrm{FV}((\lambda x.T_2)\ T_1) = \mathrm{FV}([T_1/x]T_2)$ and $\Delta \vdash [T_1/x]T_2 : K$. 

\

\noindent \textbf{Case}. 

\

\infer{\Delta \vdash  \forall x . T : o }{\Delta, x : K \vdash T :  o | *}

\

Suppose $\forall x . T\to_o \forall x . T'$ by $T\to_o T'$. By IH, $\Delta, x : K \vdash T' :  o | * $ and $\mathrm{FV}(T) = \mathrm{FV}(T')$. Thus $\Delta \vdash  \forall x . T' : o$ and $\mathrm{FV}(\forall x . T) = \mathrm{FV}(\forall x .T')$.

\

All the other cases are similar. 

\end{proof}

\section{Proof of Theorem \ref{prop:kinding}}
\begin{theorem}
  \
  
  \begin{enumerate}
  \item If $\Delta \vdash T : o$, then $T$ is of the form $\forall x. T'$ or $T_1 \Rightarrow T_2$.
    \item If $\Delta \vdash T : *^n \Rightarrow *$, then the normal form of $T$ is second-order.
  \end{enumerate}
\end{theorem}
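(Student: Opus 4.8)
The plan is to treat the two parts separately; part (1) is essentially an inversion, and part (2) needs a normalization argument followed by a structural induction.

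For part (1) I would simply inspect the kinding rules of Definition \ref{kindsystem}: the only two rules whose conclusion carries the kind $o$ are the rule for $\forall x : K . T$ and the rule for $T \Rightarrow T'$. Hence a case analysis on the last rule used in the derivation of $\Delta \vdash T : o$ forces $T$ to have the form $\forall x . T'$ or $T_1 \Rightarrow T_2$; nothing more is required.

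For part (2) I would first reduce to the case where $T$ is already in $\to_o$-normal form. Since $\to_o$ is strongly normalizing and confluent (the normalization proposition preceding Theorem \ref{sub:kind}), $T$ has a unique normal form $T^\ast$, and by Theorem \ref{sub:kind} (subject reduction for kinding) $\Delta \vdash T^\ast : *^n \Rightarrow *$. So it suffices to show: if $\Delta \vdash T : *^n \Rightarrow *$ and $T$ is in $\to_o$-normal form, then $T$ is second-order. I would prove this by induction on the structure of $T$ (equivalently on the kinding derivation), using the key observation that the rules for $\forall$ and $\Rightarrow$ produce only kind $o$, so a type of kind $K$ is built solely from variables and constants, applications, and $\lambda$-abstractions. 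If $T$ is a variable or constant it is flat, hence second-order. If $T = T_2\, T_1$, then by inversion $\Delta \vdash T_2 : * \Rightarrow (*^n \Rightarrow *)$ and $\Delta \vdash T_1 : *$; both are subterms of a normal form, hence normal, so by the induction hypothesis each is second-order; $T_2$ cannot be a $\lambda$-abstraction (that would make $T_2\, T_1$ a $\to_o$-redex), so $T_2$ is flat, and a second-order type of kind $*$ must be flat as well (a leading $\lambda$ forces a kind of the shape $* \Rightarrow \cdots$), so $T_1$ is flat; therefore $T = T_2\, T_1$ is flat, hence second-order. If $T = \lambda x . T'$ (necessarily $n \geq 1$), then by inversion $\Delta, x : * \vdash T' : *^{n-1} \Rightarrow *$ with $x \in \mathrm{FV}(T')$, and $T'$ is normal; by the induction hypothesis $T' \equiv \lambda y_1 . \ldots \lambda y_m . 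T''$ with $T''$ flat and every $y_j \in \mathrm{FV}(T'')$ (the case $m = 0$ allowed); then $T \equiv \lambda x . \lambda y_1 . \ldots \lambda y_m . T''$ has flat body $T''$, and since $x$ is distinct from the bound $y_j$ the hypothesis $x \in \mathrm{FV}(T')$ gives $x \in \mathrm{FV}(T'')$, so all the bound variables of $T$ occur in $T''$ and $T$ is second-order.

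The main obstacle is the bookkeeping inside the induction for part (2): one must use the $\to_o$-normal-form hypothesis precisely to rule out a $\lambda$-abstraction in head position of an application (this is what upgrades the induction hypothesis from ``second-order'' to the ``flat'' needed to recombine the pieces), and one must carry the relevance side-condition $x \in \mathrm{FV}(\cdot)$ through the nested $\lambda$-prefix so the result still matches the definition of a second-order type. Once these points are set up, the remaining cases are routine.
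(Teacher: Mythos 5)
Your proof is correct. Part (1) is exactly the paper's argument (the paper just writes ``obvious''): only the $\forall$ and $\Rightarrow$ rules of Definition~\ref{kindsystem} conclude with kind $o$. For part (2) your route differs from the paper's in a small but genuine way. The paper performs the induction on the kinding derivation of the \emph{original}, possibly non-normal $T$, and in the application case it must deal with the possibility that the (normal form of the) head $T_2$ is a $\lambda$-abstraction; it then argues that the $\to_o$-contractum $[T_1/x]T'$ is second-order, which implicitly relies on a substitution fact (substituting a flat type for a variable in a second-order type preserves second-orderness and the relevance condition) that the paper does not spell out. You instead normalize first --- invoking strong normalization, confluence, and subject reduction for kinding (Theorem~\ref{sub:kind}) to get a well-kinded normal form --- and then do a structural induction in which the normal-form hypothesis outlaws a $\lambda$ in head position, so the redex case never arises and the substitution fact is not needed. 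The price you pay is the extra observation that a second-order type of kind $*$ cannot begin with $\lambda$ (since the $\lambda$-rule only produces kinds of shape $* \Rightarrow K$), which you supply; in exchange you get a cleaner case analysis. Your handling of the relevance condition $x \in \mathrm{FV}(T'')$ through the nested $\lambda$-prefix matches what the paper needs (and uses Theorem~\ref{sub:kind} for the same purpose in its $\lambda$-case). Both arguments are sound; yours is the more self-contained of the two.
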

\begin{proof}

  \noindent (1) Obvious. 
  
  \noindent (2). By induction on the derivation of $\Delta \vdash T : *^n \Rightarrow *$. 
  
\noindent \textbf{Case}.

\

  \infer{\Delta \vdash x | F : K}{(x | F :  K) \in \Delta}    

  \
  
  \noindent Obvious. 

  \noindent \textbf{Case}.
  
  \
  
\infer{\Delta \vdash T_2 \ T_1 : K}{\Delta \vdash T_1 : * & \Delta \vdash T_2 : * \Rightarrow K}

\

\noindent We need to show the normal form of $T_2 \ T_1$ is second-order. By IH, we know 
the normal form of $T_1, T_2$ are second-order, moreover, $T_1$ is flat since $\Delta \vdash T_1 : *$. Suppose $T_2 \equiv F$ or $T_2 \equiv x$, then
by definition we know $T_2\ T_1$ is second-order. Suppose $T_2 \equiv \lambda x.T'$, where 
$x \in \mathrm{FV}(T')$ and $T'$ is second-order. Then $(\lambda x.T')\ T_1 \to_o [T_1/x]T'$ and $[T_1/x]T'$
is second-order.

  \noindent \textbf{Case}.
  
  \

\infer{\Delta \vdash \lambda x. T : * \Rightarrow K}{\Delta, x : * \vdash T : K & x \in \mathrm{FV}(T)}

\

\noindent Let  $[T]$ be the normal form of $T$. By IH, we know that $[T]$ is second-order. By Theorem \ref{sub:kind}, 
we know that $x \in \mathrm{FV}([T])$. Thus $\lambda x. [T]$ is second-order. 
\end{proof}

\section{Proof of Theorem \ref{confluence}}

\begin{theorem}

 $\leadsto_{\beta\mu\tau o}$ and
    $\leadsto_{h\tau o}$ are confluent, and $\leadsto_\tau$ is strongly normalizing.  
\end{theorem}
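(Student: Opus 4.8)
The plan is to treat the three claims separately, starting with the easiest. For strong normalization of $\leadsto_\tau$, I would observe that $\leadsto_\tau$ only ever substitutes a type $T$ for a type variable $x$ in an evidence term, and each such step strictly decreases the number of syntactic occurrences of the type-application constructor $e\ T$ that still have a $\lambda x.e'$ in redex position; more cleanly, I would define a measure on evidence (e.g.\ the multiset of sizes of subterms of the form $(\lambda x.e)\ T$, or simply the count of $\lambda x$-abstractions that are applied to an argument) and show it strictly decreases under $\leadsto_\tau$. Since type substitution does not create new $\lambda x$-binders inside $e$ beyond those already present, termination follows by a straightforward induction. This is the routine part.

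For confluence, the standard route is the Tait--Martin-L\"of / Takahashi parallel-reduction technique. First I would handle $\leadsto_o$ alone: by the Proposition stated earlier, $\to_o$ is confluent on well-kinded types, and since $\leadsto_o$ on evidence is just the compatible closure of $\to_o$ applied to the type-annotations occurring in an evidence term, confluence of $\leadsto_o$ on (well-kinded) evidence follows from confluence of $\to_o$ componentwise. Next I would treat the combined relations by defining a parallel-reduction relation $\Rrightarrow$ that in one step contracts any set of $h$-, $\tau$-, $o$- (and for the $\beta\mu$-case, $\beta$-, $\mu$-) redexes simultaneously, prove the diamond property for $\Rrightarrow$ via a ``complete development'' / maximal-parallel-reduct argument (Takahashi's $e^*$), and conclude confluence of the transitive closure, which coincides with $\leadsto^*$. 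The $\mu$-rule $\mathcal{C}[\mu\alpha.e]\leadsto\mathcal{C}[[\mu\alpha.e/\alpha]e]$ is handled exactly as ordinary $\beta$ in this framework, since $\mu\alpha.e$ behaves like $\mathbf{Y}(\lambda\alpha.e)$; the only care needed is that unfolding duplicates the $\mu$-term, so in the parallel step one must allow the residual copies to be further contracted, which the complete-development construction accommodates.

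For the head-reduction relation $\leadsto_{h\tau o}$, confluence is actually easier because head reduction is essentially deterministic up to the commutation of $\leadsto_o$-steps (which rewrite type annotations and never interfere with the choice of head redex) and the $\leadsto_\tau$-steps (which I have already shown terminating). Concretely I would argue that at each evidence term there is at most one $\leadsto_h$-redex available in head position, so $\leadsto_h$ is deterministic; then show $\leadsto_o$ and $\leadsto_\tau$ each commute with $\leadsto_h$ (a local-commutation diagram plus, for $\leadsto_\tau$, Newman/Hindley-style lemmas using its strong normalization), and finally combine these via the Hindley--Rosen lemma: a union of pairwise-commuting confluent relations is confluent.

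The main obstacle I expect is the bookkeeping in the parallel-reduction diamond proof for $\leadsto_{\beta\mu\tau o}$, specifically the interaction of the $\mu$-unfolding rule with substitution: because $\mu\alpha.e \mapsto [\mu\alpha.e/\alpha]e$ copies the whole fixpoint, the substitution lemma for parallel reduction ($e \Rrightarrow e'$ and $d \Rrightarrow d'$ imply $[d/\alpha]e \Rrightarrow [d'/\alpha]e'$) must be stated and applied carefully so that the duplicated residuals of $\mu\alpha.e$ can be reduced in lockstep; mixing in the type-level $\leadsto_o$ and $\leadsto_\tau$ steps at the same time (so that the substituted types are themselves in the middle of reducing) is where the case analysis gets delicate. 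Everything else is routine adaptation of textbook confluence arguments (Barendregt, and Takahashi's proof).
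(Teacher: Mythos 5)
Your overall skeleton for $\leadsto_{h\tau o}$ --- commute $\leadsto_\tau$ and $\leadsto_o$ with $\leadsto_h$ and combine by Hindley--Rosen --- is exactly the decomposition the paper uses, and your strong-normalization argument for $\leadsto_\tau$ (the number of $\leadsto_\tau$-redexes strictly decreases) matches the paper's one-line proof. For $\leadsto_{\beta\mu\tau o}$ you propose a full Tait--Martin-L\"of/Takahashi parallel-reduction development for the whole union, whereas the paper again factors through commutation and simply cites the literature (Ariola et al.) for confluence of $\leadsto_{\beta\mu}$; your route is heavier but self-contained, and the care you flag about duplicated residuals of $\mu\alpha.e$ under the substitution lemma is the right concern.

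There is, however, one genuine error: your claim that ``at each evidence term there is at most one $\leadsto_h$-redex available in head position, so $\leadsto_h$ is deterministic'' is false. The head-context grammar $\mathcal{H} ::= \bullet \mid \mathcal{H}\,e \mid \lambda\alpha.\mathcal{H} \mid \lambda x.\mathcal{H}$ allows descending under a $\lambda$ that is itself applied to an argument, so a term of the form $\mathcal{H}_1[(\lambda\alpha.\mathcal{H}_2[\mu\beta.e'])\;e]$ has two distinct head redexes: the outer $\beta$-redex and the inner $\mu$-redex sitting inside the abstraction body. This is precisely the critical pair the paper identifies and closes by hand (both sides join at $\mathcal{H}_1[([e/\alpha]\mathcal{H}_2)[[\mu\beta.[e/\alpha]e'/\beta][e/\alpha]e']]$). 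Your determinism argument therefore does not establish confluence of $\leadsto_h$; you need to replace it with this local-confluence check (the overlap does resolve, so the gap is repairable, but the step as you stated it would fail).
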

\begin{proof}
  Note that $\leadsto_\tau$ commutes with $\leadsto_o$, $\leadsto_h$ and $\leadsto_{\beta\mu}$. Also $\leadsto_o$ commutes with $\leadsto_h$ and $\leadsto_{\beta\mu}$. 
  Thus it is enough
to show that $\leadsto_h$ and $\leadsto_{\beta\mu}$ are confluent. For $\leadsto_{h}$, we just need to check 
$\mathcal{H}_1[(\lambda x . (\mathcal{H}_2[\mu \alpha. e']))\ e]$, as it is the only critical pair. We know that:

\begin{center}
{\footnotesize
  \noindent $\mathcal{H}_1[(\lambda \alpha . (\mathcal{H}_2[\mu
  \beta. e']))\ e] \leadsto_h
  \mathcal{H}_1[([e/\alpha]\mathcal{H}_2)[\mu \beta. [e/\alpha]e']]
  \leadsto_h \mathcal{H}_1[([e/\alpha]\mathcal{H}_2)[[\mu
  \beta. [e/\alpha]e'/\beta][e/\alpha] e']] $

\

  \noindent $\mathcal{H}_1[(\lambda \alpha . (\mathcal{H}_2[\mu
  \beta. e']))\ e] \leadsto_h \mathcal{H}_1[(\lambda \alpha
  . \mathcal{H}_2[[\mu \beta. e'/\beta]e'])\ e] \leadsto_h
  \mathcal{H}_1[([e/\alpha]\mathcal{H}_2)[[\mu \beta. [e/\alpha]
  e'/\beta] [e/\alpha]e']]$
}
\end{center}
Thus $\leadsto_h$ is confluent. For the confluence of $\leadsto_{\beta\mu}$, we refer
to the existing literature (e.g. \cite[\S 7.1]{ariola1997}). Finally, $\leadsto_\tau$ 
is strongly normalizing because the number of $\leadsto_\tau$-redex is strictly decreasing.
\end{proof}

\section{Proof of Theorem \ref{sub-red}}

\begin{theorem}[Inversion]
  \label{app:inversion}
\

  \begin{enumerate}
  \item If $\Gamma \vdash x : T$, then there exists $(x : T') \in \Gamma$ and $T \leftrightarrow_o^* T'$.
  \item If $\Gamma \vdash \kappa : T$, then there exists $(\kappa : T') \in \Gamma$ and $T \leftrightarrow_o^* T'$.
\item If $\Gamma \vdash \lambda \alpha . e : T$, then $\Gamma, \alpha : T_1 \vdash e : T_2$ and $T_1 \Rightarrow T_2 \leftrightarrow_o^* T$.

\item  If $\Gamma \vdash e\ e' : T$, then $\Gamma \vdash e : T_1 \Rightarrow T_2$, $\Gamma \vdash e' : T_1$
  and $T_2 \leftrightarrow_o^* T$.

\item If $\Gamma \vdash \lambda x . e : T$, then $\Gamma \vdash e : T'$, $x \notin \mathrm{FV}(\Gamma)$ and $\forall x. T' \leftrightarrow_o^* T$.

\item  If $\Gamma \vdash e\ T_1 : T$, then $\Gamma \vdash e : \forall x . T'$
  and $[T_1/x]T' \leftrightarrow_o^* T$.

\item If $\Gamma \vdash \mu \alpha . e : T$, then $\Gamma, \alpha : T' \vdash e : T'$ and $ T' \leftrightarrow_o^* T$.
  \end{enumerate}
\end{theorem}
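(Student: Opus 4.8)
The plan is to prove all seven clauses at once by a single induction on the derivation of the typing judgement appearing in the hypothesis of the clause under consideration. The crucial structural fact is that, apart from \textsc{Conv}, every typing rule of $\mathbf{F}_2^\mu$ is syntax-directed in the evidence: the shape of $e$ determines the rule. Hence, for a given evidence form, the last rule in a derivation of $\Gamma \vdash e : T$ is either the matching structural rule or \textsc{Conv}, and this dichotomy drives the whole argument.

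In the structural cases there is nothing to do beyond reading off the premises. For instance, if $\Gamma \vdash e_2\ e_1 : T$ ends in \textsc{App}, its premises are exactly $\Gamma \vdash e_1 : T_1$ and $\Gamma \vdash e_2 : T_1 \Rightarrow T$, so clause (4) holds with $T_2 \equiv T$ and $T_2 \leftrightarrow_o^* T$ by reflexivity; likewise the variable and constant axioms give clauses (1) and (2) with a reflexive conversion, \textsc{Lam} gives (3), \textsc{Abs} gives (5) (carrying along the side condition $x \notin \mathrm{FV}(\Gamma)$ verbatim), \textsc{Inst} gives (6), and \textsc{Mu} gives (7). The \textsc{Conv} case is handled uniformly: if $\Gamma \vdash e : T$ comes from $\Gamma \vdash e : T_0$ and $T_0 \leftrightarrow_o^* T$, then since the evidence is unchanged the induction hypothesis applies to the subderivation $\Gamma \vdash e : T_0$ and yields the conclusion of the relevant clause with $T_0$ in place of $T$; appending $T_0 \leftrightarrow_o^* T$ and using that $\leftrightarrow_o^*$ is an equivalence relation (reflexive, symmetric, transitive) upgrades the conversion to the required one to $T$. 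I would also note that all types in play stay well-kinded and the free-variable side conditions are stable, since $\to_o$ preserves kinds and free variables by Theorem~\ref{sub:kind}.

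There is essentially no deep obstacle here; the one point that requires a moment of care is the bookkeeping of the conversion relation in the presence of several stacked or interleaved \textsc{Conv} steps. This is precisely why the induction is set up on the derivation rather than on the evidence: each \textsc{Conv} node is a strictly smaller subderivation, so the induction hypothesis can be invoked directly, without having to ``see through'' a chain of conversions, and transitivity of $\leftrightarrow_o^*$ does the rest. No appeal to confluence of $\to_o$ is needed for the inversion statement as phrased with $\leftrightarrow_o^*$; confluence only becomes relevant if one wants to further decompose such a conversion into a reduction followed by an expansion, which these clauses do not require.
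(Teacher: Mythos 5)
Your proof is correct and is exactly the intended argument: the paper's own proof of this theorem is just ``By induction on derivation,'' and your case split between the syntax-directed structural rules (yielding a reflexive conversion) and \textsc{Conv} (handled by the induction hypothesis plus transitivity of $\leftrightarrow_o^*$) is the standard way to flesh that out. Your side remark that confluence is not needed here, but only later when decomposing conversions between arrow types, is also accurate and consistent with where the paper actually invokes confluence (e.g., in the subject reduction proof).
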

\begin{proof}
  By induction on derivation. 
\end{proof}
\begin{lemma}
\label{subst-lemma}
\

\begin{enumerate}
\item $\Gamma, \alpha : T \vdash e : T'$ and $\Gamma \vdash e' : T$,
  then $\Gamma \vdash [e'/\alpha]e : T'$.
\item $\Gamma \vdash e : T'$, 
  then $[T_1/x]\Gamma \vdash [T_1/x]e : [T_1/x]T'$.

\end{enumerate}
\end{lemma}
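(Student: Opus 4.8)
The plan is to prove both parts by structural induction on the given typing derivation, after first isolating the routine auxiliary facts that these inductions consume. Neither part uses the other, so each is its own induction: part (1) never touches types, and part (2) is purely about type substitution.

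Before the main inductions I would record three standard lemmas. First, a weakening/exchange lemma: if $\Gamma \vdash e : T'$, then $\Gamma' \vdash e : T'$ for any $\Gamma'$ obtained from $\Gamma$ by adding bindings for fresh variables or permuting the bindings — immediate by induction on the derivation, since the environment is only consulted at the variable/constant axiom. Second, for part (2), the routine facts that type substitution preserves well-kindedness (when $T_1$ has the kind of $x$), that $\mathrm{FV}([T_1/x]T) \subseteq (\mathrm{FV}(T)\setminus\{x\}) \cup \mathrm{FV}(T_1)$, and that $\to_o$, being the compatible closure of type-level beta reduction, is stable under substitution, so $T \leftrightarrow^*_o T'$ implies $[T_1/x]T \leftrightarrow^*_o [T_1/x]T'$. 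Third, the substitution-composition identity $[([T_1/x]T_2)/y]\,[T_1/x]T = [T_1/x]\,[T_2/y]T$ whenever $y \neq x$ and $y \notin \mathrm{FV}(T_1)$, a condition we can always secure by renaming bound variables.

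For part (1), I induct on $\Gamma, \alpha : T \vdash e : T'$ by cases on the last rule. In the axiom case $e = \alpha$, inversion (Theorem \ref{app:inversion}) gives $T' \leftrightarrow^*_o T$; since $[e'/\alpha]\alpha = e'$ and $\Gamma \vdash e' : T$, rule \textsc{Conv} yields $\Gamma \vdash e' : T'$. In the axiom case $e = \beta \neq \alpha$ or $e = \kappa$, the substitution does nothing and the binding remains after deleting $\alpha : T$. For \textsc{App}, \textsc{Inst}, and \textsc{Conv} one pushes the substitution inward and applies the IH(s), noting that the types, and hence $\to_o$, are unaffected. For the binders \textsc{Lam}, \textsc{Mu}, \textsc{Abs} one renames the bound variable away from $\alpha$ and from $\mathrm{FV}(e')$, uses the exchange lemma to move $\alpha : T$ to the end of the extended environment, applies the IH, and reassembles; in the \textsc{Abs} case the side condition $x \notin \mathrm{FV}(\Gamma)$ survives because it already held for the larger environment $\Gamma, \alpha : T$.

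For part (2), the induction on $\Gamma \vdash e : T'$ is parallel. The axiom cases just apply $[T_1/x]$ to the looked-up type. For \textsc{App}, \textsc{Lam}, \textsc{Mu} the substitution distributes and the IH applies directly. The cases needing the auxiliary identities are \textsc{Inst} and \textsc{Conv}: in \textsc{Inst}, after renaming the quantified variable $y$ so that $y \neq x$ and $y \notin \mathrm{FV}(T_1)$, the IH gives $[T_1/x]\Gamma \vdash [T_1/x]e_0 : \forall y : K.\, [T_1/x]T''$, and one \textsc{Inst} step yields type $[([T_1/x]T_2)/y]\,[T_1/x]T''$, which equals $[T_1/x]\,[T_2/y]T''$ by the substitution-composition identity; in \textsc{Conv} we use stability of $\leftrightarrow^*_o$ under substitution. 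For \textsc{Abs} with bound variable $y$, renaming so that $y \neq x$ and $y \notin \mathrm{FV}(T_1)$ makes the side condition $y \notin \mathrm{FV}([T_1/x]\Gamma)$ follow from $y \notin \mathrm{FV}(\Gamma)\cup\mathrm{FV}(T_1)$. The only real friction I anticipate is the bookkeeping of bound-variable freshness in the binder cases — pinning down exactly which variables must be renamed and verifying that the \textsc{Abs} side condition is preserved — together with the appeal to the (unstated but entirely routine) weakening/exchange lemma; everything else is mechanical.
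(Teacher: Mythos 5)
Your proposal is correct and follows exactly the route the paper takes: the paper proves Lemma \ref{subst-lemma} simply ``by induction on the derivation,'' and your case analysis (inversion plus \textsc{Conv} for the $\alpha$ axiom case, weakening/exchange for the binder cases, and the substitution-composition and $\to_o$-stability facts for \textsc{Inst} and \textsc{Conv} in part (2)) is the standard elaboration of that one-line argument. No gaps.
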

\begin{proof}
  By induction on the derivation.  
\end{proof}
\begin{theorem}
  If $\Gamma \vdash e : T$ and $e \leadsto_{h\tau o} e'$, then $\Gamma \vdash e' : T$.   
\end{theorem}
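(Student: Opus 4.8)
The plan is to prove subject reduction for $\leadsto_{h\tau o}$ by case analysis on the one-step head reduction, handling the three redex forms $\leadsto_h$ (namely $\mathcal{H}[\mu\alpha.e] \leadsto_h \mathcal{H}[[\mu\alpha.e/\alpha]e]$ and $\mathcal{H}[(\lambda\alpha.e)\,e'] \leadsto_h \mathcal{H}[[e'/\alpha]e]$), together with $\leadsto_\tau$ (the type-application redex $\mathcal{C}[(\lambda x.e)\,T] \leadsto_\tau \mathcal{C}[[T/x]e]$) and $\leadsto_o$ (type-level reduction inside the evidence). The two workhorses are the Inversion Theorem (Theorem~\ref{app:inversion}) and the Substitution Lemma (Lemma~\ref{subst-lemma}): inversion lets us peel a typing derivation apart at the location of the redex, the substitution lemma lets us retype the contractum, and the \textsc{Conv} rule absorbs the $\leftrightarrow_o^*$-equalities that inversion produces.

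First I would establish the statement for a redex at the root, i.e. when the reduction is $e \leadsto_{h\tau o} e'$ with the contracted redex being all of $e$. For $e \equiv (\lambda\alpha.e_1)\,e_2 \leadsto_\beta [e_2/\alpha]e_1$: by inversion clause~4, $\Gamma \vdash \lambda\alpha.e_1 : T_1 \Rightarrow T_2$, $\Gamma \vdash e_2 : T_1$, and $T_2 \leftrightarrow_o^* T$; by inversion clause~3 applied to $\lambda\alpha.e_1$, $\Gamma, \alpha : T_1' \vdash e_1 : T_2'$ with $T_1' \Rightarrow T_2' \leftrightarrow_o^* T_1 \Rightarrow T_2$, hence (by confluence/injectivity of $\Rightarrow$ under $\leftrightarrow_o^*$, from Theorem~\ref{confluence} and Theorem~\ref{prop:kinding}) $T_1' \leftrightarrow_o^* T_1$ and $T_2' \leftrightarrow_o^* T_2$; after a \textsc{Conv} on the type of $e_2$ we get $\Gamma \vdash e_2 : T_1'$, and Lemma~\ref{subst-lemma}(1) gives $\Gamma \vdash [e_2/\alpha]e_1 : T_2'$, whence $\Gamma \vdash [e_2/\alpha]e_1 : T$ by \textsc{Conv}. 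The $\mu$-case $\mu\alpha.e_1 \leadsto_\mu [\mu\alpha.e_1/\alpha]e_1$ is identical in spirit, using inversion clause~7 to get $\Gamma, \alpha : T' \vdash e_1 : T'$ with $T' \leftrightarrow_o^* T$, observing $\Gamma \vdash \mu\alpha.e_1 : T'$ by \textsc{Mu} and \textsc{Conv}, and then applying Lemma~\ref{subst-lemma}(1). The $\tau$-case $(\lambda x.e_1)\,T_1 \leadsto_\tau [T_1/x]e_1$ uses inversion clauses~6 and~5 plus Lemma~\ref{subst-lemma}(2), again closing with \textsc{Conv}; and the $\leadsto_o$-case at the root is immediate from \textsc{Conv} since $T \to_o T'$ gives $T \leftrightarrow_o^* T'$ directly and Theorem~\ref{sub:kind} keeps everything well-kinded.

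Then I would lift the root cases to an arbitrary head reduction context $\mathcal{H}$ (for $\leadsto_h$) or general context $\mathcal{C}$ (for $\leadsto_\tau, \leadsto_o$) by induction on the structure of the context. The context cases $\mathcal{H}\,e$, $\lambda\alpha.\mathcal{H}$, $\lambda x.\mathcal{H}$, $\mathcal{C}\,e$, $\mathcal{C}\,T$, $e\,\mathcal{C}$, $\mu\alpha.\mathcal{C}$ are routine: in each we invert the outermost typing rule (\textsc{App}, \textsc{Lam}, \textsc{Abs}, \textsc{Inst}, \textsc{Mu} respectively), apply the induction hypothesis to the subterm that changed, and reassemble using the same rule, possibly wrapping with \textsc{Conv} to re-absorb the $\leftrightarrow_o^*$-slack from inversion. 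The only mild subtlety is the $\lambda x.\mathcal{H}$ / \textsc{Abs} case, where one must check the side condition $x \notin \mathrm{FV}(\Gamma)$ still holds — but it does, since reducing inside $e$ does not touch $\Gamma$.

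The main obstacle I anticipate is the first $\leadsto_h$ critical-pair configuration, $\mathcal{H}_1[(\lambda x.(\mathcal{H}_2[\mu\alpha.e']))\,e]$: when a $\mu$-redex sits underneath a $\lambda$-bound evidence variable inside a nested head context, substituting through requires the substitution lemma to commute with the head context, i.e. $[e/\alpha]\mathcal{H}_2$ is again a head context and $[e/\alpha](\mathcal{H}_2[\mu\alpha.e']) = ([e/\alpha]\mathcal{H}_2)[\mu\alpha.[e/\alpha]e']$ — this is exactly the manipulation already performed in the proof of Theorem~\ref{confluence}, so I would appeal to that bookkeeping. The other genuinely delicate point is justifying that $T_1' \Rightarrow T_2' \leftrightarrow_o^* T_1 \Rightarrow T_2$ implies componentwise $\leftrightarrow_o^*$; this follows because by Theorem~\ref{prop:kinding}(1) an arrow type stays an arrow type under $\to_o$, so confluence of $\to_o$ forces the common reduct to be an arrow whose components are common reducts of the respective sides. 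Everything else is mechanical.
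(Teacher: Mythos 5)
Your proposal is correct and follows essentially the same route as the paper: both rest on the Inversion Theorem, the Substitution Lemma, confluence of $\to_o$ to decompose $T_1 \Rightarrow T_2 \leftrightarrow_o^* T_1' \Rightarrow T_2'$ componentwise, and \textsc{Conv} to absorb the slack, with the only cosmetic difference being that the paper organizes the argument as a single induction on the typing derivation (the context cases being the ``easy'' ones) rather than your explicit root-case-plus-context-induction split. Your worry about the critical-pair configuration $\mathcal{H}_1[(\lambda \alpha.(\mathcal{H}_2[\mu\beta.e']))\,e]$ is unnecessary here: subject reduction handles one redex at a time, so no commutation of substitution with head contexts is needed — that bookkeeping is only relevant to the confluence proof.
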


\begin{proof}
  By induction on the derivation of $\Gamma \vdash e : T$. 

\noindent \textbf{Case}. 

\

\infer[(\textsc{Mu})]{\Gamma \vdash \mu \alpha . e : T}{\Gamma, \alpha : T \vdash e : T}

\

\noindent We know $\mu \alpha . e \leadsto_h [\mu \alpha . e/\alpha]e$. By lemma \ref{subst-lemma} (1), we know 
that $\Gamma \vdash [\mu \alpha . e/\alpha]e : T$.

\noindent \textbf{Case}. 

\

\infer[(\textsc{App})]{\Gamma \vdash (\lambda \alpha.e)\ e_1 : T}{\Gamma \vdash e_1 : T' & \Gamma \vdash \lambda \alpha.e : T' \Rightarrow T}

\

\noindent Suppose $(\lambda \alpha.e)\ e_1 \leadsto_h [e_1/\alpha]e$. By Theorem \ref{app:inversion} (4), we have
$\Gamma, \alpha : T_1 \vdash e : T_2$ and $T_1 \Rightarrow T_2 \leftrightarrow_o^* T' \Rightarrow T$. Since 
$\to_o$ is confluent, we have $T_1 \leftrightarrow_o^* T'$ and $T_2 \leftrightarrow_o^* T$. Thus
$\Gamma \vdash e_1 : T_1$. By Lemma \ref{subst-lemma} (1), we know $\Gamma \vdash [e_1/\alpha]e : T_2$. Thus 
$\Gamma \vdash [e_1/\alpha]e : T$.

\noindent \textbf{Case}. 

\

\infer[(\textsc{Inst})]{\Gamma \vdash (\lambda x.e) \ T': [T'/x]T}{\Gamma \vdash \lambda x.e : \forall x : K . T }

\

\noindent Suppose that $(\lambda x.e)\ T' \leadsto_\tau [T'/x]e$. By Theorem \ref{app:inversion} (5), we have
$\Gamma \vdash e : T_1$, $x \notin \mathrm{FV}(\Gamma)$ and $\forall x.T_1 \leftrightarrow^*_o \forall x . T$. By Lemma \ref{subst-lemma} (2), we have $\Gamma \vdash [T'/x]e :[T'/x]T_1$. Since $\forall x.T_1 \leftrightarrow^*_o \forall x . T$ implies $[T'/x]T_1 \leftrightarrow^*_o [T'/x]T$, we have $\Gamma \vdash [T'/x]e :[T'/x]T$.

\

\noindent Suppose that $(\lambda x.e)\ T' \leadsto_o (\lambda x.e)\ T''$ with $T' \to_o T''$.
So by \textsc{App} rule, we have $\Gamma \vdash (\lambda x.e)\ T'' : [T''/x]T$. By \textsc{Conv} rule, we have $\Gamma \vdash (\lambda x.e)\ T'' : [T'/x]T$.

\

\noindent For all the other cases are easy. 
\end{proof}

\section{Proof of Theorem \ref{inversion}}
\begin{lemma}
Suppose $\Gamma_{\mathcal{R}} \vdash e : t$ for some first-order term $t$ and $e$ is head normalizing. We have $e \leadsto_{h\tau}^* \kappa \ (\lambda x . C[x,..., x]) \ t_1 ...\ t_n\  e'$ for some $\kappa :  \forall p . \forall \underline{x}. p \ r \Rightarrow p\ l \in \Gamma_{\mathcal{R}}$.
Furthermore, we have $\Gamma_{\mathcal{R}} \vdash e' :  C[\sigma r,..., \sigma r]$ and $C[\sigma l,...,\sigma l] = t$, where $\mathrm{codom}(\sigma) = \{t_1,..., t_n\}$ and $\mathrm{dom}(\sigma) = \mathrm{FV}(l)$. 
\end{lemma}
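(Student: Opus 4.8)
The plan is to reduce $e$ to a head normal form, transport the typing along that reduction by subject reduction, and then read off the claimed shape from the typing by a canonical-forms analysis driven by the rigid shape of the Leibniz type of the rule constants.

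Since $e$ is head normalizing there is a head normal form $w$ with $e \leadsto_{h\tau o}^{*} w$, and by Theorem~\ref{sub-red} we still have $\Gamma_{\mathcal{R}} \vdash w : t$; moreover, as $w$ is a $\leadsto_{h\tau o}$-normal form, every type occurring in $w$ is in $\to_o$-normal form. I would then analyze the shape of $w$. Since $t$ is a first-order term it has kind $*$, is its own $\to_o$-normal form, and by Theorem~\ref{prop:kinding}(1) is neither universally quantified nor an implication. Combining this with the inversion theorems (Theorem~\ref{thm:inversion}, and more fully Theorem~\ref{app:inversion}) and confluence of $\to_o$, one checks: $w$ has no leading $\lambda\alpha$- or $\lambda x$-abstractions (these would force a $\Rightarrow$- or $\forall$-type); its head is not a fixed point (a $\mu$ at the head of the spine is a head redex, so $w$ would not be normal); its head is not a bare evidence variable ($\Gamma_{\mathcal{R}}$ declares none); and a bare constant is excluded because its declared type is $\forall$-headed. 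Hence $w = \kappa\,a_1 \cdots a_m$ for some $\kappa : \forall p.\,\forall \underline{x}.\, p\ r \Rightarrow p\ l \in \Gamma_{\mathcal{R}}$ coming from a rule $l \to r \in \mathcal{R}$; I write $\underline{x} = \mathrm{FV}(l) = \{x_1,\dots,x_n\}$.

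It remains to peel the arguments $a_1,\dots,a_m$ using inversion together with the shape of the type of $\kappa$, which stays $\forall$-headed until all $n+1$ quantifiers and the single arrow have been consumed. The first argument $a_1$ must be a type $T_0$ of kind $* \Rightarrow *$ (instantiating $p$); being $\to_o$-normal it is second-order by Theorem~\ref{prop:kinding}(2), hence of the form $\lambda x.\,C[x,\dots,x]$ for a first-order term context $C$. The next $n$ arguments $a_2,\dots,a_{n+1}$ must be types $t_1,\dots,t_n$ of kind $*$, which, being $\to_o$-normal, are first-order terms; set $\sigma = [t_1/x_1,\dots,t_n/x_n]$, so that $\mathrm{dom}(\sigma) = \mathrm{FV}(l)$ and $\mathrm{codom}(\sigma) = \{t_1,\dots,t_n\}$. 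Repeated \textsc{Inst}-inversion then gives $\Gamma_{\mathcal{R}} \vdash \kappa\,T_0\,t_1 \cdots t_n : (T_0\,(\sigma r)) \Rightarrow (T_0\,(\sigma l))$, and since $T_0 \equiv \lambda x.\,C[x,\dots,x]$ we have $T_0\,(\sigma r) \leftrightarrow_o^{*} C[\sigma r,\dots,\sigma r]$ and $T_0\,(\sigma l) \leftrightarrow_o^{*} C[\sigma l,\dots,\sigma l]$. The last argument $a_{n+2}$ must be an evidence $e'$ (the current type is an implication, not a $\forall$), and \textsc{App}-inversion yields $\Gamma_{\mathcal{R}} \vdash e' : C[\sigma r,\dots,\sigma r]$ (using \textsc{Conv} to absorb the $\leftrightarrow_o^{*}$) together with $C[\sigma l,\dots,\sigma l] \leftrightarrow_o^{*} t$. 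Finally $m = n+2$, since $C[\sigma l,\dots,\sigma l]$ is a first-order term and hence admits no further argument; and as both $C[\sigma l,\dots,\sigma l]$ and $t$ are $\to_o$-normal, confluence upgrades $\leftrightarrow_o^{*}$ to the literal equality $C[\sigma l,\dots,\sigma l] = t$, which completes the proof.

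The step I expect to be the real obstacle is the canonical-forms analysis of $w$: showing that a head normal form typed at a first-order term type in the evidence-variable-free environment $\Gamma_{\mathcal{R}}$ must in fact be a fully applied rule constant. This hinges on reading the head-reduction rules precisely (so that, for instance, a fixed point at the head of a spine is genuinely a redex and therefore excluded from normal forms) and on the typing constraint that $t$ is neither $\Rightarrow$- nor $\forall$-typed. Everything after that is a mechanical induction guided by the inversion lemmas and the fixed shape of the Leibniz type, the only bookkeeping subtlety being the final passage from $\leftrightarrow_o^{*}$ to equality via confluence.
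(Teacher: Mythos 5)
Your proof is correct and follows essentially the same route as the paper's: reduce to a head normal form, transport the typing by subject reduction, and then use the inversion theorems together with Theorem~\ref{prop:kinding} and confluence of $\to_o$ to read off the shape $\kappa\ (\lambda x.C[x,\dots,x])\ t_1\cdots t_n\ e'$ and the stated conclusions. The only difference is that you spell out the canonical-forms analysis of the head normal form (excluding leading abstractions, head $\mu$'s, variables, and underapplied constants), which the paper asserts in a single sentence.
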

\begin{proof}
  Since $e$ is head normalizing and $\Gamma_{\mathcal{R}} \vdash e : t$, its head normal form must be of the $\kappa \ T \ T_1 ...\ T_n\  e'$ for some $\kappa :  \forall p . \forall \underline{x}. p \ r \Rightarrow p\ l \in \Gamma_{\mathcal{R}}$. By subject reduction (Theorem \ref{sub:kind}, Theorem \ref{sub-red}), we have $\Gamma_{\mathcal{R}} \vdash \kappa \ T \ T_1 ...\ T_n\  e' : t$. By inversion Theorem \ref{thm:inversion} (1) on $\Gamma_{\mathcal{R}} \vdash \kappa \ T \ T_1 ...\ T_n\  e' : t$, we know that $\Gamma_{\mathcal{R}}\vdash \kappa \ T \ T_1 ...\ T_n : T_1' \Rightarrow T_2'$, $\Gamma_{\mathcal{R}}\vdash e' : T_1'$ and $T_2' \leftrightarrow_o t$. By inversion Theorem \ref{thm:inversion} (2) on $\Gamma_{\mathcal{R}} \vdash \kappa \ T \ T_1 ...\ T_n : T_1' \Rightarrow T_2'$,
  we have $\sigma (p\ r) \Rightarrow \sigma (p\ l) \leftrightarrow^*_o T_1' \Rightarrow T_2'$, where $\sigma = [T/p, T_1/x_1,..., T_n/x_n]$. Since we are working with
  well-kinded types, we know that
$\Gamma_{\mathcal{R}} \vdash T : * \Rightarrow *$ and $\Gamma_{\mathcal{R}} \vdash T_i : *$ for all $i$. By Theorem \ref{prop:kinding}, we know $T = \lambda x . C[x,...,x]$ and $T_i$ is flat for all
  $i$. By confluence of $\leftrightarrow_o$, we have $\sigma (p\ r) \leftrightarrow_o^* T_1'$ and $\sigma (p\ l) \leftrightarrow_o^* T_2' \leftrightarrow_o^* t$. Thus $\sigma (p\ l) \equiv [T/p, T_1/x_1,..., T_n/x_n] (p\ l) \equiv (\lambda x. C[x,...,x])\ (\sigma l) \to_o^* t$. So $C[\sigma l,..., \sigma l] = t$. Since $\sigma (p\ r) \leftrightarrow_o^* T_1'$, we have $\Gamma_{\mathcal{R}}\vdash e' : C[\sigma r,..., \sigma r]$. 
\end{proof}

\section{Mapping $\mathbf{F}_{2}^\mu$ to $\lambda$-Y}
\label{lambda-y}

\begin{definition}[$\lambda$-Y calculus]

  \
  
  \textit{$\lambda$-Y terms} $e ::=  \alpha \ | \ \kappa ~\mid~ \lambda \alpha . e ~\mid~ e \ e' ~\mid~ \mu \alpha . e$

    \textit{$\lambda$-Y types} $T ::= B \ |\ T \Rightarrow T'$
    
    \textit{$\lambda$-Y environment} $\Gamma ::=  \cdot ~\mid~ \alpha : T, \Gamma \ | \ \kappa : T$
    
\end{definition}
Note that $B$ denotes a constant type in $\lambda$-Y. 

\begin{definition}[Typing of $\lambda$-Y]
\

{\scriptsize
\begin{tabular}{ll}
\\
\infer{\Gamma \vdash \alpha | \kappa : T}{(\alpha | \kappa :  T) \in \Gamma}    
&

\infer[(\textsc{App})]{\Gamma \vdash e_2\ e_1 : T}{\Gamma \vdash e_1 : T' & \Gamma \vdash e_2 : T' \Rightarrow T}
\\
\\
\infer[(\textsc{Lam})]{\Gamma \vdash \lambda \alpha.e : T' \Rightarrow T}{\Gamma, \alpha : T' \vdash e : T}
&
\infer[(\textsc{Mu})]{\Gamma \vdash \mu \alpha . e : T}{\Gamma, \alpha : T \vdash e : T}
  \end{tabular}  
}
\end{definition}

\begin{definition}
  We define a function $\theta$ that maps $\mathbf{F}_{2}^\mu$ types to $\lambda$-Y types.
\

  $\theta(F) = B$

  $\theta(x) = B$ 

  $\theta(\lambda x. T) = \theta(T)$

  $\theta(T\ T') = \theta(T)$

  $\theta(T \Rightarrow T') = \theta(T) \Rightarrow \theta(T')$

  $\theta(\forall x.T) = \theta(T)$
  
\end{definition}

\begin{lemma}
  \label{term}
  If $\Delta \vdash T : K$ in $\mathbf{F}_{2}^\mu$, then $\theta(T) = B$. 
\end{lemma}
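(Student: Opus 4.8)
The plan is a straightforward structural induction on the derivation of $\Delta \vdash T : K$. The key observation is that a term kind $K$ is, by the grammar, always of the form $*^n \Rightarrow *$, and among the five kinding rules of Definition~\ref{kindsystem} only three can have a conclusion whose kind is a term kind: the axiom for a variable or constant, the application rule, and the rule for $\lambda x.T$. The rules for $\forall x:K.T$ and $T \Rightarrow T'$ always conclude with kind $o$, so they cannot occur as the last rule of the derivation, and we need not consider them.

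In the base case $T \equiv x$ or $T \equiv F$, we have $\theta(x) = \theta(F) = B$ directly from the definition of $\theta$. In the application case $T \equiv T_2\ T_1$ with premises $\Delta \vdash T_1 : *$ and $\Delta \vdash T_2 : * \Rightarrow K$, note that $* \Rightarrow K$ is again a term kind, so the induction hypothesis applies to the second premise and gives $\theta(T_2) = B$; since $\theta(T_2\ T_1) = \theta(T_2)$ by definition, we are done. In the abstraction case $T \equiv \lambda x.T'$ with premise $\Delta, x:* \vdash T' : K$ (and $x \in \mathrm{FV}(T')$, which plays no role here), the kind $K$ in the premise is still a term kind, so the induction hypothesis gives $\theta(T') = B$, and $\theta(\lambda x.T') = \theta(T') = B$ by definition.

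There is essentially no obstacle: the only thing to check is that in each inductive step the premise to which we apply the induction hypothesis still carries a term kind, which is immediate since both $K$ and $* \Rightarrow K$ are term kinds. The lemma then feeds directly into Theorem~\ref{2toy}, where it guarantees that $\theta$ sends every ``first-order term'' subtype to the base type $B$ of $\lambda$-Y, so that the erased instantiations and type-abstractions in $|e|$ are consistent with a simply typed derivation.
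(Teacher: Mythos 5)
Your proof is correct and matches the paper's argument exactly: the paper proves Lemma~\ref{term} by induction on the derivation of $\Delta \vdash T : K$, which is precisely the structural induction you carry out, with the same observation that only the variable/constant, application, and abstraction rules can conclude with a term kind. Nothing to add.
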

\begin{proof}
By induction on the derivation of $\Delta \vdash T : K$.   
\end{proof}
\begin{lemma}
  \label{y:subst}
  If $\Delta \vdash T' : K$ in $\mathbf{F}_{2}^\mu$, then $\theta([T'/x]T) \equiv \theta(T)$
  for any $T$ in $\mathbf{F}_{2}^\mu$.
\end{lemma}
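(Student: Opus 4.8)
The plan is to prove the statement by structural induction on the type $T$ in $\mathbf{F}_2^\mu$. The essential point is that $\theta$ erases all first-order term structure to the base type $B$, so substituting a type $T'$ of kind $K$ (which by Lemma \ref{term} satisfies $\theta(T') = B$) for a variable $x$ inside $T$ can never change $\theta(T)$: wherever $x$ occurs, $\theta$ has already collapsed the surrounding context down to $B$ or left it alone.

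First I would set up the induction. The base cases are $T \equiv F$, where $[T'/x]F \equiv F$ and both sides give $B$; $T \equiv y$ with $y \neq x$, where $[T'/x]y \equiv y$ and again both sides give $B$; and $T \equiv x$, where $[T'/x]x \equiv T'$, so $\theta([T'/x]x) = \theta(T') = B = \theta(x)$ using Lemma \ref{term} on the hypothesis $\Delta \vdash T' : K$. For the inductive cases: if $T \equiv \lambda y. T_1$ (with $y$ chosen fresh, $y \neq x$), then $[T'/x]T \equiv \lambda y. [T'/x]T_1$ and $\theta(\lambda y. [T'/x]T_1) = \theta([T'/x]T_1) \equiv \theta(T_1) = \theta(\lambda y. T_1)$ by the IH on $T_1$. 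The case $T \equiv T_1\ T_2$ is identical in spirit: $\theta$ of an application only looks at the head, so $\theta([T'/x](T_1\ T_2)) = \theta([T'/x]T_1) \equiv \theta(T_1) = \theta(T_1\ T_2)$ by the IH on $T_1$. The case $T \equiv \forall y : K'. T_1$ again discards the binder: $\theta(\forall y. [T'/x]T_1) = \theta([T'/x]T_1) \equiv \theta(T_1) = \theta(\forall y. T_1)$ by the IH. Finally $T \equiv T_1 \Rightarrow T_2$: here $\theta$ recurses into both components, so $\theta([T'/x](T_1 \Rightarrow T_2)) = \theta([T'/x]T_1) \Rightarrow \theta([T'/x]T_2) \equiv \theta(T_1) \Rightarrow \theta(T_2) = \theta(T_1 \Rightarrow T_2)$ by the IH applied to $T_1$ and $T_2$ separately.

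Strictly speaking there is nothing difficult here, but the one place that needs care is the $T \equiv x$ case, which is the only case that actually uses the kinding hypothesis $\Delta \vdash T' : K$ via Lemma \ref{term}; every other case goes through purely by the definition of $\theta$ and the IH, and in fact would hold for an arbitrary $T'$. A minor bookkeeping subtlety is the usual Barendregt-style freshness convention for bound variables in the $\lambda$ and $\forall$ cases, so that the substitution commutes with the binder; this is routine. I expect the whole argument to be a few lines once the case split is written out.
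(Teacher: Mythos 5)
Your proof is correct and follows exactly the paper's own argument: structural induction on $T$, with Lemma \ref{term} invoked only in the case $T \equiv x$ to conclude $\theta(T') = B = \theta(x)$, all other cases being immediate from the definition of $\theta$ and the induction hypothesis. The paper states this proof in one line, and your write-out fills in precisely the intended case analysis.
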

\begin{proof}
Using Lemma \ref{term} and induction on the structure of $T$. 
\end{proof}

\begin{lemma}
  \label{y:oe}
  If $T_1 \leftrightarrow_o^* T_2$ and $\Delta \vdash T_1|T_2 : k$ in $\mathbf{F}_{2}^\mu$, then $\theta(T_1) \equiv \theta(T_2)$.
\end{lemma}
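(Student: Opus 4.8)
The plan is to reduce everything to a single-step invariance statement: if $\Delta \vdash T : k$ and $T \to_o T'$, then $\theta(T) \equiv \theta(T')$. I would prove this by induction on the position of the contracted redex, i.e.\ on the structure of $T$. In the base case $T \equiv (\lambda x. S)\ S' \to_o [S'/x]S$, inversion on the kinding rule for application forces $\Delta \vdash S' : *$, hence $\Delta \vdash S' : K$; Lemma~\ref{y:subst} then gives $\theta([S'/x]S) \equiv \theta(S)$, while by definition $\theta((\lambda x. S)\ S') = \theta(\lambda x. S) = \theta(S)$, so the two sides agree. In the inductive (compatible-closure) cases the redex lies inside a subterm; here I would use the routine fact, obtained by inversion on the kinding rules of Definition~\ref{kindsystem}, that every immediate subterm of a well-kinded type is itself well-kinded in the appropriate (possibly extended) kinding context, and then check the clauses of $\theta$ one by one. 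For $\lambda x. S$, for $\forall x\!:\!K.\,S$, and for $S_1\ S_2$ reduced on the left, $\theta$ returns $\theta$ of the relevant subterm, so the induction hypothesis applies directly; for $S_1\ S_2$ reduced on the right, $\theta$ never inspects $S_2$, so there is nothing to prove; for $S_1 \Rightarrow S_2$, $\theta$ is the arrow of the two $\theta$-images and the induction hypothesis handles whichever component is reduced.

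Next I would lift this to $\leftrightarrow_o^*$. Given $T_1 \leftrightarrow_o^* T_2$ with $\Delta \vdash T_1 : k$ and $\Delta \vdash T_2 : k$, confluence of $\to_o$ (the Proposition following the definition of $\to_o$) produces a common reduct $T_3$ with $T_1 \to_o^* T_3$ and $T_2 \to_o^* T_3$. Along the reduction $T_1 \to_o^* T_3$, Subject Reduction for Kinding (Theorem~\ref{sub:kind}) keeps every intermediate type well-kinded, so the single-step lemma applies at each step and yields $\theta(T_1) \equiv \theta(T_3)$; the same argument on $T_2 \to_o^* T_3$ gives $\theta(T_2) \equiv \theta(T_3)$, and hence $\theta(T_1) \equiv \theta(T_2)$.

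The one point that needs care is that the single-step invariance genuinely uses the well-kindedness hypothesis: without it, $(\lambda x. x)\ (A \Rightarrow A)$ would be a counterexample, since its $\theta$-image is $B$ but that of its contractum $A \Rightarrow A$ is $B \Rightarrow B$. This is precisely why I route the $\leftrightarrow_o^*$ argument through a common reduct — so that Theorem~\ref{sub:kind}, which only propagates kinding forwards along $\to_o$, is enough — rather than inducting naively on an arbitrary conversion sequence, whose intermediate terms need not be well-kinded. Everything else is bookkeeping over the clauses of $\theta$, so I expect no real obstacle beyond stating the subterm-kinding lemma cleanly.
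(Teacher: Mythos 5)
Your proof is correct, but it does not follow the paper's stated route. The paper disposes of Lemma~\ref{y:oe} with a single line, ``by induction on the derivation of $T_1 \leftrightarrow_o^* T_2$,'' i.e.\ a direct induction over the conversion zigzag; you instead prove a single-step invariance lemma for well-kinded types and then lift it by producing a common reduct via confluence, using Theorem~\ref{sub:kind} to keep every intermediate type well-kinded. The difference is not cosmetic: a derivation of $T_1 \leftrightarrow_o^* T_2$ may pass through expansion steps, and an expansion of a well-kinded type need not be well-kinded --- your own example $(\lambda x.x)\,(A \Rightarrow A) \to_o A \Rightarrow A$ shows that $\theta$ is \emph{not} invariant under $\to_o$ on ill-kinded types, so the naive induction on the conversion derivation cannot invoke the single-step lemma at backward steps. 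Routing through the common reduct is exactly the repair that makes the argument go through, at the modest cost of invoking confluence and subject reduction for kinding (both already available in the paper). The computational core is the same in both versions: the $\beta$-redex case reduces to Lemma~\ref{y:subst} via the observation that the kinding rule for application forces the argument to have kind $*$, and all other clauses of $\theta$ either project onto the reduced subterm or ignore it. One small caveat: you read the hypothesis $\Delta \vdash T_1|T_2 : k$ as requiring both types to be well-kinded; given the paper's convention that ``$|$'' means ``or,'' this is strictly stronger than the literal statement, but it is the only reading under which the lemma is true (the same example refutes the ``or'' reading), and it is the form in which the lemma is actually applied in the \textsc{Conv} case of Theorem~\ref{2toy}.
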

\begin{proof}
  By induction on the derivation of $T_1 \leftrightarrow_o^* T_2$.
\end{proof}
\begin{definition}

  \
  
  $\theta(.) = .$
  
  $\theta(\Gamma, \alpha : T) = \theta(\Gamma), \alpha : \theta(T)$
  
  $\theta(\Gamma, \kappa : T) = \theta(\Gamma), \kappa : \theta(T)$


\end{definition}


\begin{theorem}
  If $\Gamma \vdash e : T$ and $\Delta \vdash T : *|o $ in $\mathbf{F}_{2}^{\mu}$, then
  $\theta(\Gamma)\vdash |e| : \theta(T)$ in $\lambda$-Y.
\end{theorem}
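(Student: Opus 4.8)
The plan is to proceed by induction on the derivation of $\Gamma \vdash e : T$ in $\mathbf{F}_2^\mu$, carrying along the side hypothesis that $T$ is of kind $*$ or $o$. The key observation is that the erasure $|\cdot|$ deletes exactly the two evidence constructs that the kinding side-condition rules out as top-level constructors of a kind-$*|o$ term, namely type abstraction $\lambda x.e$ and type instantiation $e\,T'$; so the $\lambda$-Y derivation is obtained by literally forgetting those steps and keeping the rest of the skeleton intact. Before starting the induction I would record the three supporting lemmas already proved in the excerpt: $\theta(T)=B$ whenever $\Delta\vdash T:K$ (Lemma~\ref{term}), $\theta([T'/x]T)\equiv\theta(T)$ whenever $\Delta\vdash T':K$ (Lemma~\ref{y:subst}), and $\theta(T_1)\equiv\theta(T_2)$ whenever $T_1\leftrightarrow_o^* T_2$ and both are well-kinded (Lemma~\ref{y:oe}). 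I would also note that by Theorem~\ref{prop:kinding}(1), if $\Delta\vdash T:o$ then $T$ is $\forall x.T'$ or $T_1\Rightarrow T_2$, and that subject reduction for kinding (Theorem~\ref{sub:kind}) lets me freely move between $\leftrightarrow_o^*$-equivalent well-kinded types without changing the kind.

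The induction then goes case by case on the last rule of $\Gamma\vdash e:T$. For the variable/constant axiom, $|e|$ is the same variable or constant, $(\;\cdot\;:T)\in\Gamma$ gives $(\;\cdot\;:\theta(T))\in\theta(\Gamma)$ by definition of $\theta$ on environments, and we apply the $\lambda$-Y axiom. For \textsc{App}, the premise types $e_2:T'\Rightarrow T$ and $e_1:T'$; here I need $T'$ to be well-kinded of kind $*|o$ so the IH applies to $e_1$ — this follows because $T'\Rightarrow T$ is well-kinded of kind $o$, hence by the kinding rule for $\Rightarrow$ both $T'$ and $T$ are of kind $o|*$; then $\theta(T'\Rightarrow T)=\theta(T')\Rightarrow\theta(T)$ and the $\lambda$-Y \textsc{App} rule finishes the case. \textsc{Lam} and \textsc{Mu} are analogous, using the $\Rightarrow$-kinding decomposition for \textsc{Lam} and the fact that the type is unchanged for \textsc{Mu}. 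For \textsc{Conv}, the premise gives $\Gamma\vdash e:T$ with $T\leftrightarrow_o^* T'$; since $|e|$ is unchanged and $\theta(T)\equiv\theta(T')$ by Lemma~\ref{y:oe}, the IH gives the result directly.

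The two cases that actually use the structure of the erasure are \textsc{Abs} and \textsc{Inst}. For \textsc{Abs}, $e=\lambda x.e_0$, $T=\forall x:K.T_0$ with $\Gamma\vdash e_0:T_0$; since $|e|=|e_0|$ and $\theta(\forall x:K.T_0)=\theta(T_0)$, I just need the IH at $e_0:T_0$, which applies once I check $T_0$ has kind $*|o$ — this comes from the kinding rule for $\forall$, whose premise is $\Delta,x:K\vdash T_0:o|*$, and $x$ being a fresh eigenvariable I can drop it from $\Delta$ if it does not occur (and if it does, $\theta$ ignores the difference anyway). For \textsc{Inst}, $e=e_0\,T_1$ with $\Gamma\vdash e_0:\forall x:K.T_0$ and $T=[T_1/x]T_0$; here $|e|=|e_0|$, and by Lemma~\ref{y:subst} (using that $T_1$ is of kind $K$, which is forced by the well-kindedness of $e_0\,T_1$ — the \textsc{Inst} rule instantiates a $K$-quantifier) we get $\theta([T_1/x]T_0)\equiv\theta(T_0)=\theta(\forall x:K.T_0)$, and again the IH on $e_0$ delivers exactly what is needed. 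The main obstacle, such as it is, is purely bookkeeping: in each case I must verify that the premises of the $\mathbf{F}_2^\mu$ rule still satisfy the kind side-condition "$*|o$" so that the IH is applicable — this is where Theorem~\ref{prop:kinding}, the $\Rightarrow$/$\forall$ kinding rules, and subject reduction for kinding all get used — but there is no genuine conceptual difficulty, since $\theta$ is designed precisely so that the first-order ("term-nature", kind $K$) types all collapse to the single base type $B$.
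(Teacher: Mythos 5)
Your proposal is correct and follows essentially the same route as the paper's own proof: induction on the typing derivation, using the fact that erasure deletes exactly the type-level constructs (\textsc{Abs}, \textsc{Inst}) together with Lemmas~\ref{y:subst} and~\ref{y:oe} to keep $\theta(T)$ invariant in those cases and in \textsc{Conv}. The only difference is that you track the kind side-condition $*|o$ through each premise more explicitly than the paper does, which is a welcome bit of extra rigor but not a different argument.
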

\begin{proof}
  By induction on derivaton of $\Gamma \vdash e : T$ in $\mathbf{F}_{2}^{\mu}$. 
  \begin{itemize}
  \item Case:

    \
    
    \begin{tabular}{l}
      \infer{\Gamma \vdash \alpha | \kappa : T}{(\alpha | \kappa :  T) \in \Gamma}    
    \end{tabular}
    
    \

    We just need to show $\theta(\Gamma) \vdash \alpha | \kappa : \theta(T)$ in $\lambda$-Y, which we know
    is the case by definition of $\theta(\Gamma)$. 

    \item Case: 

      \
      
      \begin{tabular}{l}
        \infer[(\textsc{App})]{\Gamma \vdash e_2\ e_1 : T}{\Gamma \vdash e_1 : T' & \Gamma \vdash e_2 : T' \Rightarrow T}
      \end{tabular}
\

We need to show $\theta(\Gamma) \vdash |e_2\ e_1| : \theta(T)$ in $\lambda$-Y. By induction,
we know that $\theta(\Gamma) \vdash |e_1| : \theta(T')$ and $\theta(\Gamma) \vdash |e_2| : \theta(T') \Rightarrow \theta(T)$ in $\lambda$-Y. Thus we have  $\theta(\Gamma) \vdash |e_2|\ |e_1| : \theta(T)$.

    \item Case: 

      \
      
      \begin{tabular}{l}
        \infer[(\textsc{Lam})]{\Gamma \vdash \lambda \alpha.e : T' \Rightarrow T}{\Gamma, \alpha : T' \vdash e : T}
      \end{tabular}

      \

      We need to show $\theta(\Gamma) \vdash \lambda \alpha.|e| : \theta(T') \Rightarrow \theta(T)$ in $\lambda$-Y. By induction, we know that $\theta(\Gamma), \alpha : \theta(T') \vdash |e| : \theta(T)$ in $\lambda$-Y.
      
    \item Case: 

      \
      
      \begin{tabular}{l}
        \infer[(\textsc{Mu})]{\Gamma \vdash \mu \alpha . e : T}{\Gamma, \alpha : T \vdash e : T}
      \end{tabular}

      \

      We need to show $\theta(\Gamma) \vdash \mu \alpha . |e| : \theta(T)$ in $\lambda$-Y. By induction, we know that $\theta(\Gamma), \alpha : \theta(T) \vdash |e| : \theta(T)$ in $\lambda$-Y.
      
    \item Case: 

      \
      
      \begin{tabular}{l}
        \infer[(\textsc{Abs})]{\Gamma \vdash \lambda x . e: \forall x : K . T}{\Gamma  \vdash e :  T & x \notin \mathrm{FV}(\Gamma)}
      \end{tabular}

      \

      We need to show $\theta(\Gamma) \vdash |e| : \theta(T)$ in $\lambda$-Y, which is the
      case by induction. 

    \item Case: 

      \
      
      \begin{tabular}{l}
        \infer[(\textsc{Inst})]{\Gamma \vdash e \ T': [T'/x]T}{\Gamma \vdash e : \forall x : K . T}
      \end{tabular}

      \

      We need to show $\theta(\Gamma) \vdash |e| : \theta([T'/x]T)$ in $\lambda$-Y.
      By induction, we know that $\theta(\Gamma) \vdash |e| : \theta(T)$. By Lemma \ref{y:subst},
      we know that $\theta([T'/x]T) \equiv \theta(T)$.
      
    \item Case: 

      \
      
      \begin{tabular}{l}
        \infer[(\textsc{Conv})]{\Gamma \vdash e : T'}{\Gamma \vdash e : T & T \leftrightarrow^*_o T'}
      \end{tabular}

      \

      We need to show $\theta(\Gamma) \vdash |e| : \theta(T')$ in $\lambda$-Y.
      By induction, we know that $\theta(\Gamma) \vdash |e| : \theta(T)$. 
      By Lemma \ref{y:oe}, we know that $\theta(T') \equiv \theta(T)$.
  \end{itemize}
\end{proof}

\section{Proof of Theorem \ref{thm:sound}}
\begin{lemma}
  \label{res:sound}
If $\{(\Gamma_1, e_1, T_1),..., (\Gamma_n, e_n, T_n)\} \longrightarrow^* \emptyset$, then there exists an evidence $e'_1,..., e'_n$ such that $\Gamma_i \vdash e'_i : T_i$ and $|e'_i| = e_i$ for all $i$.  
\end{lemma}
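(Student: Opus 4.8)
The plan is to induct on the length $k$ of the reduction $\{(\Gamma_1,e_1,T_1),\dots,(\Gamma_n,e_n,T_n)\}\longrightarrow^{*}\emptyset$. If $k=0$ the set on the left is already $\emptyset$, so $n=0$ and there is nothing to prove. If $k>0$, write the reduction as $\{(\Gamma_1,e_1,T_1),\dots,(\Gamma_n,e_n,T_n)\}\longrightarrow\Phi'\longrightarrow^{*}\emptyset$ and case-split on which of the four rules of Definition~\ref{rsm} was applied first; since $\Phi$ is a set, after reindexing we may assume that rule acts on $(\Gamma_1,e_1,T_1)$ and leaves $(\Gamma_2,e_2,T_2),\dots,(\Gamma_n,e_n,T_n)$ untouched. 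In every case the induction hypothesis applied to $\Phi'\longrightarrow^{*}\emptyset$ produces a fully annotated evidence for every tuple of $\Phi'$: the ones for the untouched tuples $2,\dots,n$ serve verbatim as the required $e'_2,\dots,e'_n$, so the only remaining task is to build $e'_1$ with $\Gamma_1\vdash e'_1:T_1$ and $|e'_1|=e_1$ out of the annotated evidence that the IH supplies for whatever new tuples the applied rule introduced.

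The three structural rules are routine. If rule (2) fired then $e_1=\lambda\alpha_1\dots\lambda\alpha_m.\,e_1^\circ$ and $T_1=S_1,\dots,S_m\Rightarrow A$; the IH gives $g$ with $[\Gamma_1,\alpha_1:S_1,\dots,\alpha_m:S_m]\vdash g:A$ and $|g|=e_1^\circ$, and $m$ uses of \textsc{Lam} give $e'_1:=\lambda\alpha_1\dots\lambda\alpha_m.\,g$. If rule (3) fired then $T_1=\forall\underline{x}.\,T$ with the eigenvariables $\underline{x}$ chosen outside $\mathrm{FV}(\Gamma_1)$; the IH gives $g$ with $\Gamma_1\vdash g:T$ and $|g|=e_1$ (rule (3) leaves the evidence unchanged), and repeated use of \textsc{Abs} gives $e'_1:=\lambda\underline{x}.\,g$. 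If rule (4) fired then $e_1=\mu\alpha.\,e_1^\circ$; the IH gives $g$ with $[\Gamma_1,\alpha:T_1]\vdash g:T_1$ and $|g|=e_1^\circ$, and \textsc{Mu} gives $e'_1:=\mu\alpha.\,g$. In each case $|e'_1|=e_1$ because, by Definition~\ref{erasure}, erasure discards the type abstractions $\lambda x.(-)$ while commuting with $\lambda\alpha.(-)$ and $\mu\alpha.(-)$; well-kindedness of the quantified variables is inherited from the original judgement, so the \textsc{Abs} steps are legal.

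The substantive case is rule (1): here $e_1=h\,d_1\dots d_m$ with $h$ a variable or constant, $T_1=A$ of kind $*$, and there is $h:\forall\underline{x}.\,S_1,\dots,S_m\Rightarrow B\in\Gamma_1$ (the $\underline{x}$ freshly renamed) with $B\mapsto_\sigma A$, so $\Phi'$ replaces $(\Gamma_1,e_1,A)$ by the tuples $(\Gamma_1,d_1,\sigma S_1),\dots,(\Gamma_1,d_m,\sigma S_m)$. For this case I first need a soundness statement for the matching calculus of Definition~\ref{som}, namely that $B\mapsto_\sigma A$ implies $\sigma B\leftrightarrow_o^{*}A$; this is a short induction on the matching derivation that inspects the decomposition, \textsc{Proj}, and \textsc{Imi} steps, each of which is correct up to the type-level $\beta$-redexes $(\lambda\underline{x}.T)\,\underline{A}\to_o^{*}[\underline{A}/\underline{x}]T$ it creates. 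The standing ``no existential variables'' hypothesis guarantees that every renamed variable of $\underline{x}$ occurs in $B$, hence is assigned a kindable type by $\sigma$, so I can instantiate the axiom along $\sigma$ with \textsc{Inst}, getting $\Gamma_1\vdash h\,\underline{T}:\sigma S_1,\dots,\sigma S_m\Rightarrow\sigma B$ (where $\underline{T}$ lists the images of $\sigma$), rewrite $\sigma B$ to $A$ with \textsc{Conv}, and then apply the result to the annotated $g_1,\dots,g_m$ coming from the IH for the tuples $(\Gamma_1,d_i,\sigma S_i)$, obtaining $\Gamma_1\vdash e'_1\equiv h\,\underline{T}\,g_1\dots g_m:A$ with $|e'_1|=h\,d_1\dots d_m=e_1$, since the type arguments $\underline{T}$ disappear under $|\cdot|$ (the case $m=0$, where $e_1=h$, is the same argument with no final applications). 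I expect this matching-soundness lemma, together with the routine bookkeeping that keeps every instance of \textsc{Inst}, \textsc{Conv}, and \textsc{Abs} well-kinded, to be the only genuinely delicate point of the whole proof. Theorem~\ref{thm:sound} follows by taking $n=1$.
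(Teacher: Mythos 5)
Your proposal is correct and follows essentially the same route as the paper's proof: induction on the length of the resolution sequence with a case split on the first rule applied, using \textsc{Lam}, \textsc{Abs}, and \textsc{Mu} for the structural rules and \textsc{Inst} plus \textsc{Conv} plus \textsc{App} for the resolution rule, with erasure commuting as you describe. The only difference is that you make explicit the matching-soundness sub-lemma ($B \mapsto_\sigma A$ implies $\sigma B \leftrightarrow_o^* A$) and the role of the no-existential-variables hypothesis in justifying \textsc{Inst}, both of which the paper uses silently in its \textsc{Conv} step.
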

\begin{proof}
  By induction on the length of
  $\{(\Gamma_1, e_1, T_1),..., (\Gamma_n, e_n, T_n)\}$ $\longrightarrow^* \emptyset$.
  \begin{itemize}
  \item Case $\{(\Gamma, \alpha | \kappa, A)\} \longrightarrow_a \emptyset$.
    
    In this case $\alpha | \kappa : \forall \underline{x}. B \in \Gamma$ and $B \mapsto_\sigma A$. Since $\forall \underline{x}. B$ does not contain existential variables, by \textsc{Inst}, we have $\Gamma \vdash (\alpha | \kappa) \ \underline{T} : A$, where $\{\underline{T}\} = \mathrm{codom}(\sigma)$ and $|(\alpha | \kappa) \ \underline{T}| = \alpha | \kappa$.

  \item Case

    $\{(\Gamma, (\alpha | \kappa)\ e''_1 \ ...  e_m'', A), (\Gamma_1, e_1, T_1), ..., (\Gamma_n, e_n, T_n)\} \longrightarrow_a $

    $\{(\Gamma,e_1'', \sigma T_1'), ..., (\Gamma,e_m'', \sigma T_m'), (\Gamma_1, e_1, T_1), ..., (\Gamma_n, e_n, T_n)\} \longrightarrow^* \emptyset$, where $\kappa | \alpha : \forall \underline{x}. T_1',..., T_n' \Rightarrow B \in \Gamma$ with $B \mapsto_\sigma A$. 
      
      By IH, we know that $\Gamma \vdash e_1''' :  \sigma T_1', ..., \Gamma \vdash e_m'' : \sigma T_m', \Gamma_1 \vdash e_1' : T_1, ..., \Gamma_n \vdash e_n' : T_n$ and $|e_1'''| = e_1'',..., |e_m'''| = e_m'',|e_1'| = e_1, ..., |e_n'| = e_n$. Let $\mathrm{codom}(\sigma) = \underline{T}$, since $\forall \underline{x}. T_1',..., T_n' \Rightarrow B$ does not contain existential variables, we have $\Gamma \vdash (\alpha | \kappa) \ \underline{T}: \sigma T_1',..., \sigma T_n' \Rightarrow \sigma B $. Thus
      $\Gamma \vdash (\alpha | \kappa) \ \underline{T}\ e_1'''\  ... \ e_m'''  : \sigma B$. By 
      \textsc{Conv}, we have $\Gamma \vdash (\alpha | \kappa) \ \underline{T}\ e_1'''\  ... \ e_m'''  : A$. Moreover, $|(\alpha | \kappa) \ \underline{T}\ e_1'''\  ... \ e_m'''| = (\alpha | \kappa) \ |e_1'''|\  ... \ |e_m'''| = (\alpha | \kappa) \ e_1''\  ... \ e_m''$.
      
      \item Case 
{
    $\{(\Gamma, \lambda \alpha_1. ... \lambda \alpha_n. e, T_1, ..., T_n \Rightarrow A), (\Gamma_1, e_1, T_1), ..., (\Gamma_l, e_l, T_l) \} $
$\longrightarrow_i \{([\Gamma, \alpha_1 : T_1, ..., \alpha_n : T_n], e, A), (\Gamma_1, e_1, T_1), ..., (\Gamma_l, e_l, T_l) \} \longrightarrow^* \emptyset$}

By IH, we have $\Gamma, \alpha_1 : T_1, ..., \alpha_n : T_n \vdash e' : A, \Gamma_1 \vdash e_1' : T_1, ...,\Gamma_l \vdash e_l' : T_l$
with $|e'| = e, |e_1'| = e_1, ..., |e_l'| = e_l$. Thus by \textsc{Lam} rule, we have $\Gamma \vdash  \lambda \alpha_1 .... \lambda \alpha_n . e' : T_1, ..., T_n \Rightarrow A$ and $|\lambda \alpha_1 .... \lambda \alpha_n . e'| = \lambda \alpha_1. ... \lambda \alpha_n. e$.

      \item Case 
{
  $\{(\Gamma, e, \forall x_1. ...\forall x_m . T), (\Gamma_1, e_1, T_1), ..., (\Gamma_l, e_l, T_l) \} \longrightarrow_{\forall} $

  $\{(\Gamma, e, T), (\Gamma_1, e_1, T_1), ..., (\Gamma_l, e_l, T_l) \} \longrightarrow^* \emptyset$}

By IH, we have $\Gamma \vdash e' : A, \Gamma_1 \vdash e_1' : T_1, ...,\Gamma_l \vdash e_l' : T_l$
with $|e'| = e, |e_1'| = e_1, ..., |e_l'| = e_l$. Since $\{x_1,..., x_m\} \cap \mathrm{FV}(\Gamma) = \emptyset$, by \textsc{Abs} rules, we have $\Gamma \vdash \lambda x_1 .... \lambda x_m . e' : \forall x_1 ... . \forall x_m . T$ and $|\lambda x_1 .... \lambda x_m . e'| = e$.

\item Case  $\{(\Gamma, \mu \alpha . e, T), (\Gamma_1, e_1, T_1), ..., (\Gamma_n, e_n, T_n)\} \longrightarrow_c\{([\Gamma, \alpha : T], e, T), (\Gamma_1, e_1, T_1), ..., (\Gamma_n, e_n, T_n)\} $

  $\longrightarrow^* \emptyset$
    
    By IH, we know that $\Gamma, \alpha : T \vdash e' : T, \Gamma_1 \vdash e_1' : T_1, ..., \Gamma_n \vdash e_n' : T_n$ and $|e_i'| = e_i$ for all $i$. By \textsc{Mu} rule, we have $\Gamma \vdash \mu \alpha . e' : T$. Thus $|\mu \alpha. e'| = \mu \alpha. e$.

  \end{itemize}
\end{proof}

\section{Examples in the Paper}
\label{app:expaper}

In this section we show how to represent nonterminations for all the examples
in the paper using the prototype FCR (for Functional Certification of Rewriting), the
prototype is available at \url{https://github.com/Fermat/FCR}. 
It tries to generate typable $\mathbf{F}_{2}^{\mu}$ evidence from the corecursive equations
and the type declarations. 

\subsection{Example in Section \ref{typecheck}}
The following is the input file for FCR. 
{
\begin{verbatim}
A : forall p x y . p (D x (S y)) => p (D (S x) y)
B : forall p y . p (D (S y) Z) => p (D Z y)

g : forall d . 
     (forall p x y . p (d x (S y)) => p (d (S x) y)) => 
     (forall p y . p (d (S y) Z) => p (d Z y)) => 
     d Z Z

g a1 a2 = a2 (a1 (g (\ v . a1 v) (\ v . a2 (a1 v))))

e : D Z Z
e = g (\ v . A v) (\ v . B v)
\end{verbatim}
}

The capitalized words for FCR are intended to denote both type and evidence constant, uncapitalized words are intended to denote both type and evidence variables. In the definition of corecursive
function \texttt{g}, ``\texttt{\textbackslash}'' denotes the $\lambda$ binder, its type declaration
is discussed in the paper. FCR currently uses long normal form to make variable instantiation, so we
have to use (I) instead of (II). 

\begin{center}
{
  \noindent (I) \texttt{g a1 a2 = a2 (a1 (g (\textbackslash v . a1 v)
    (\textbackslash v . a2 (a1 v))))}

  \noindent (II) \texttt{g a1 a2 = (a2 . a1) (g a1 (a2 . a1))}
}
\end{center}

\noindent Evidence such as $\mu f. \lambda a. e$ is 
represented as equation \texttt{f a = e}, so there is no explicit $\mu$ binder in the input file. The corecursive evidence for \texttt{D Z Z} is \texttt{e}. 
The following is the output by the type checker. 
{
\begin{verbatim}
rewrite rules
kinds
D : * => * => *
S : * => *
Z : *
axioms
A : forall p x y . p (D x (S y)) => p (D (S x) y)
B : forall p y . p (D (S y) Z) => p (D Z y)
proof declarations
g : forall d .
      (forall p x y . p (d x (S y)) => p (d (S x) y))
      =>
        (forall p y . p (d (S y) Z) => p (d Z y)) => d Z Z =
\ a1 a2 . a2 (a1 (g (\ v . a1 v) (\ v . a2 (a1 v))))
e : D Z Z =
g (\ v . A v) (\ v . B v)
lemmas
e : D Z Z =
  g (\ m1' m2' . D m1' m2')
    (\ p1' x2' y3' (v : p1' (D x2' (S y3'))) .
         A (\ m1' . p1' m1') x2' y3' v)
    (\ p7' y8' (v : p7' (D (S y8') Z)) . B (\ m1' . p7' m1') y8' v)
g : forall d .
      (forall p x y . p (d x (S y)) => p (d (S x) y))
      =>
        (forall p y . p (d (S y) Z) => p (d Z y)) => d Z Z =
  \ d0'
    (a1 : forall p x y . p (d0' x (S y)) => p (d0' (S x) y))
    (a2 : forall p y . p (d0' (S y) Z) => p (d0' Z y)) .
      a2 (\ x1' . x1') Z
        (a1 (\ x1' . x1') Z Z
           (g (\ m1' m2' . d0' m1' (S m2'))
              (\ p7' x8' y9' (v : p7' (d0' x8' (S (S y9')))) .
                   a1 (\ m1' . p7' m1') x8' (S y9') v)
              (\ p13' y14' (v : p13' (d0' (S y14') (S Z))) .
                   a2 (\ m1' . p13' m1') (S y14')
                     (a1 (\ m1' . p13' m1') (S y14') Z v))))
steps
automated proof reconstruction success!
\end{verbatim}
}

\noindent The \texttt{lemmas} section contains the annotated evidence. All variables generated by FCR
are variables end with `` ' ''. All lambda-bound evidence variables are annotated with the type information. This is needed for decidable proof checking, we do not need to annotate lambda-bound type variables. The annotated evidence generated by our type checker is checked by a separate $\mathbf{F}_{2}^{\mu}$ proof checker. 

We can translated the input file into the following Haskell code, but it will not pass Haskell's
type checker. 

{
\begin{verbatim}
data D :: * -> * -> *
data S :: * -> *
data Z :: *
a :: forall p x y .  p (D x (S y)) -> p (D (S x) y)
a = undefined
b :: forall p y . p (D (S y) Z) -> p (D Z y)
b = undefined
g :: forall d . 
     (forall p x y . p (d x (S y)) -> p (d (S x) y)) -> 
     (forall p y . p (d (S y) Z) -> p (d Z y)) -> 
     d Z Z
g a1 a2 = a2 (a1 (g (\ v -> a1 v) (\ v -> a2 (a1 v))))

e :: D Z Z
e = g (\ v -> a v) (\ v -> b v)
\end{verbatim}}

\subsection{Example in Section \ref{heuristic}}
The following is the input file for FCR. 

{
\begin{verbatim}
Ka : A x <= A (B x)
Kb : B x <= A x

g : forall a b x .
      (forall p y . p (a (b y)) => p (a y)) =>
      (forall p y . p (a y) => p (b y)) => a x
       
g a b = a (g (\ v . a (b v)) (\ v . a v))

h : A x 
h = g (\ v . Ka v) Kb

step h 20
\end{verbatim}
}

\noindent We use the alternative notation \texttt{A x <= A (B x)} to represent the rewrite rule
from \texttt{A x} to \texttt{A (B x)}, it will be translated to its Leibniz representation
by FCR. And \texttt{step h 20} is a command telling FCR
to output the 20th first-order term in the reduction \texttt{h} began with term \texttt{A x}. The following is the output information. 

{
\begin{verbatim}
rewrite rules
Ka : A x <= A (B x)
Kb : B x <= A x
kinds
A : * => *
B : * => *
axioms
Ka : forall p x . p (A (B x)) => p (A x)
Kb : forall p x . p (A x) => p (B x)
proof declarations
g : forall a b x .
      (forall p y . p (a (b y)) => p (a y))
      =>
        (forall p y . p (a y) => p (b y)) => a x =
\ a b . a (g (\ v . a (b v)) (\ v . a v))
h : A x =
g (\ v . Ka v) Kb
lemmas
h : A x =
  g (\ m1' . A m1') (\ m1' . B m1') x
    (\ p3' y4' (v : p3' (A (B y4'))) . Ka (\ m1' . p3' m1') y4' v)
    Kb
g : forall a b x .
      (forall p y . p (a (b y)) => p (a y))
      =>
        (forall p y . p (a y) => p (b y)) => a x =
  \ a0'
    b1'
    x2'
    (a : forall p y . p (a0' (b1' y)) => p (a0' y))
    (b : forall p y . p (a0' y) => p (b1' y)) .
      a (\ x1' . x1') x2'
        (g (\ m1' . a0' (b1' m1')) (\ m1' . a0' m1') x2'
           (\ p8' y9' (v : p8' (a0' (b1' (a0' y9')))) .
                a (\ m1' . p8' m1') (b1' y9')
                  (b (\ m1' . p8' (a0' (b1' m1'))) y9' v))
           (\ p14' y15' (v : p14' (a0' (b1' y15'))) .
                a (\ m1' . p14' m1') y15' v))
steps
step h 20
automated proof reconstruction success!
steps results
A (B (A (A (B (A (B (A (A (B (A (A (B x))))))))))))
\end{verbatim}
}

\noindent We can check that the term \texttt{A (B (A (A (B (A (B (A (A (B (A (A (B x))))))))))))}
represents the string we obtain in the very end of the string reduction trace in Section \ref{heuristic}.  
Note that this term is obtained directly from the unfolding of the reduction trace without invoking any term rewriting reduction.

\section{Solving the Scope Problem in ERSM and the Soundness of ERSM}
\label{scope}

Due to lack of space, we did not explain nor discuss the soundness of ERSM in Section \ref{heuristic}. In fact, the ERSM is not sound in its current form due to a subtle scope problem. We will
show how to solve this soundness problem in this section. To explain the scope problem, let us consider the following two formulas.

\

{
\noindent (I) \texttt{forall p x y . p (G (F Z x (S y)) (F x y (S (S Z)))) => p (F Z (S x) y)}

\

\noindent (II) \texttt{forall p x y . p (qa (F Z x (S y))) => p (F Z (S x) y)}
}

\

\noindent It may appear that these two formulas are second-orderly unifiable if we instantiate \texttt{qa} in (II) to
\texttt{\textbackslash m . G m (F x y (S (S Z)))}. 
But this instantiation assumes the variable \texttt{x, y} in \texttt{\textbackslash m . G m (F x y (S (S Z)))} can be automatically captured by the \texttt{forall} binder in (II), this is
not a correct assumption.
In fact (I) and (II) are not unifiable, this kind of problem is called
\textit{scope problem} by Dowek \cite[Section 5]{dowek2001}. 

The solution of the scope problem is conceptually simple, i.e. we just need to prevent the
instantiation of the existential variables when there is such a scope problem. However, to
implement this solution within the ERSM framework requires some efforts.

We works with \textit{idempotent} substitution, i.e. for a substitution $\sigma$,
we require that $\sigma \cdot \sigma = \sigma$. Idemptentness is easy to check, due to
the following property \cite{baader1999term}: $\sigma$ is idempotent iff $\mathrm{dom}(\sigma) \cap \mathrm{FV}(\mathrm{codom}(\sigma)) = \emptyset$. This requirement is needed in order to prove
the soundness theorem.

\begin{definition}
Let $L$ denote a list of variables. We define $y \sqsubset_L x$ if $L = L_1, y , L_2, x ,L_3$
  for some $L_1, L_2, L_3$. We define $\mathsf{scope}(L, \sigma)$ to be the conjunction of the following two predicates: (1) $\forall x\in \mathrm{dom}(\sigma) \cap L, \forall y \in \mathrm{FV}(\sigma {x}), y \sqsubset_L {x}$. (2) $\forall x \in \mathrm{dom}(\sigma) - L, \mathrm{FV}(\sigma x) \cap L = \emptyset$.
\end{definition}

Let $\Phi$ denotes a set of tuple $(L, \Gamma, e, T)$. We use $\sigma L$ to denote $L - \mathrm{dom}(\sigma)$ and we use $L + L'$ to mean appending $L, L'$.  
\begin{definition}
  \fbox{$\sigma\Gamma, \sigma \Phi$}

  \

  $\sigma \cdot = \cdot$

  $\sigma [\alpha : T, \Gamma] =  \alpha : \sigma T, \sigma \Gamma$

  $\sigma [\kappa : T, \Gamma] =  \kappa : \sigma T, \sigma \Gamma$
  
  $\sigma \{\} = \{\}$

  $\sigma \ \{(L, \Gamma, e, T), \Phi\} = \{(\sigma L, \sigma \Gamma, e, \sigma T), \sigma \Phi\}$, where $\mathsf{scope}(L, \sigma)$.
\end{definition}

Let $S$ be a set of variables, we write $\sigma/S = [ t/x \ | \ x \in (\mathrm{dom}(\sigma) - S)]$. 
\begin{definition}[ERSM with Scope Check]
    \fbox{$(\Phi, \sigma) \longrightarrow (\Phi', \sigma')$}

  \begin{enumerate}
  \item {\small $(\{(L, \Gamma, (\kappa | \alpha) \ e_1 \ ... \ e_n, A), \Phi \}, \sigma) \longrightarrow_a(\{(L', \sigma''\Gamma,e_1, \sigma' T_1), ..., (L', \sigma''\Gamma, e_1, \sigma' T_n), \sigma'' \Phi\},  \sigma'' \cdot \sigma)$}

    if $\kappa | \alpha : \forall x_1. ... \forall x_m. T_1,..., T_n \Rightarrow B \in \Gamma$ with $B \mapsto_{\sigma'} A$. Moreover, $\sigma'' = \sigma'/\{x_1,..., x_m\}$, $\mathsf{scope}(L, \sigma'')$ and $L' = \sigma'' L + [x_i \ | \ x_i \notin \mathrm{FV}(B), 1 \leq i \leq m]$. 


  \item $(\{(L, \Gamma, \lambda \alpha_1. ... \lambda \alpha_n. e, T_1, ..., T_n \Rightarrow A), \Phi\}, \sigma) \longrightarrow_i (\{(L, [\Gamma,\alpha_1 : T_1, ..., \alpha_n : T_n], e, A), \Phi \}, \sigma)$.

      \item $(\{(L, \Gamma, e, \forall x_1 ... \forall x_n. T), \Phi\}, \sigma) \longrightarrow_{\forall} (\{([L, x_1, ..., x_n], \Gamma, e, T), \Phi \}, \sigma)$.

  \item $(\{(L, \Gamma, \mu \alpha. e, T), \Phi\}, \sigma) \longrightarrow_c (\{(L, [\Gamma, \alpha : T], e, T), \Phi \}, \sigma)$.

      \end{enumerate}
\end{definition}
We can see if we eliminate $L$ and $\mathsf{scope}(L, \sigma)$, we can obtain ERSM described in
the paper. 
\begin{lemma}
  \label{type:subst}
    If $\Gamma \vdash e : T$, then $\sigma \Gamma \vdash \sigma e : \sigma T$.
\end{lemma}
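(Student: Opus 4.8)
The plan is to prove the statement by induction on the derivation of $\Gamma \vdash e : T$, treating each typing rule of Definition~\ref{proofsystem} in turn. Here $\sigma$ is a type substitution, which acts on environments pointwise ($\sigma\Gamma$ replacing each binding $\alpha : T'$ by $\alpha : \sigma T'$), on types in the usual capture-avoiding way, and on evidence by pushing through all constructors and applying to the type argument in $e\,T'$ and the bound variable convention in $\lambda x.e$. Throughout I assume, as is implicit in the ERSM setting where $\sigma$ arises from second-order matching against kindable types, that $\sigma$ respects kinds, so that all types occurring remain well-kinded; I would carry a companion remark that $\Delta \vdash T : k$ implies well-kindedness of $\sigma T$ under the substituted type environment, which is itself a straightforward induction using Theorem~\ref{sub:kind}.

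The easy cases come first. For the axiom rule, $(\alpha \mid \kappa : T) \in \Gamma$ gives $(\alpha \mid \kappa : \sigma T) \in \sigma\Gamma$ immediately. For \textsc{App}, \textsc{Lam}, and \textsc{Mu}, the argument is purely syntactic: $\sigma$ commutes with $\Rightarrow$ and with extending an environment by an evidence binding, so the induction hypothesis applied to the premises rebuilds the conclusion. For \textsc{Conv}, the premise gives $\Gamma \vdash e : T$ with $T \leftrightarrow_o^* T'$; since type-level $\beta$-reduction $\to_o$ is stable under substitution (a one-step substitutivity check, $\sigma((\lambda x.T_1)\,T_2) \to_o \sigma([T_2/x]T_1)$ up to the standard freshness convention), we get $\sigma T \leftrightarrow_o^* \sigma T'$, and \textsc{Conv} applied to the induction hypothesis $\sigma\Gamma \vdash \sigma e : \sigma T$ finishes the case.

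The two cases that need care are \textsc{Inst} and \textsc{Abs}, both involving the quantifier $\forall x:K$. For \textsc{Inst}, from $\Gamma \vdash e : \forall x:K.T$ and the induction hypothesis $\sigma\Gamma \vdash \sigma e : \sigma(\forall x:K.T) = \forall x:K.\sigma T$ (choosing $x$, by $\alpha$-renaming, fresh for $\mathrm{dom}(\sigma)$ and for $\mathrm{FV}$ of its codomain), one applies \textsc{Inst} with the type $\sigma T'$ to obtain $\sigma\Gamma \vdash (\sigma e)\,(\sigma T') : [\sigma T'/x]\sigma T$, and then invokes the substitution-commutation identity $\sigma([T'/x]T) \equiv [\sigma T'/x]\,\sigma T$, valid precisely because $x$ is fresh for $\sigma$; this matches $\sigma(e\,T')$ and $\sigma([T'/x]T)$ as required. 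For \textsc{Abs}, the side condition $x \notin \mathrm{FV}(\Gamma)$ must be transported to $x \notin \mathrm{FV}(\sigma\Gamma)$, which again holds after choosing the bound variable fresh relative to $\sigma$, so that \textsc{Abs} reapplies to the induction hypothesis.

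I expect the main obstacle to be the bookkeeping around bound-variable freshness in the \textsc{Inst}/\textsc{Abs} cases, i.e.\ making precise that we may always $\alpha$-rename the quantified variable away from $\mathrm{dom}(\sigma) \cup \mathrm{FV}(\mathrm{codom}(\sigma))$ and that this licenses the key identity $\sigma \circ [T'/x] = [\sigma T'/x] \circ \sigma$; in the surrounding \ref{scope} development this is exactly where idempotence of $\sigma$ and the $\mathsf{scope}$ predicate become relevant, so I would state the lemma (and its proof) relative to substitutions satisfying those conditions rather than for arbitrary $\sigma$. Everything else is routine structural induction.
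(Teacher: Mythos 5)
Your proposal is correct and follows exactly the route the paper intends: the paper states Lemma~\ref{type:subst} without an explicit proof, and its single-variable analogue, Lemma~\ref{subst-lemma}(2), is dispatched with the same ``by induction on the derivation'' you carry out in detail. Your elaboration of the \textsc{Inst}/\textsc{Abs} cases --- $\alpha$-renaming the quantified variable away from $\mathrm{dom}(\sigma)\cup\mathrm{FV}(\mathrm{codom}(\sigma))$ to obtain $\sigma([T'/x]T)\equiv[\sigma T'/x]\sigma T$ and to preserve the side condition $x\notin\mathrm{FV}(\Gamma)$ --- supplies precisely the bookkeeping the paper leaves implicit, and is consistent with the idempotence and $\mathsf{scope}$ conditions under which the lemma is used in Appendix~\ref{scope}.
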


If $S$ is a set of variables, we define $\sigma S := \{ \sigma x | \ x \in S\}$. Moreover, we
extend $\mathrm{FV}$ function to obtain all the free variables of a set of terms. Note that
all the
substitutions are \textit{idempotent} and \textit{disjoint} , i.e.
$\mathrm{FV}(\mathrm{codom}(\sigma)) \cap \mathrm{dom}(\sigma) = \emptyset$ for any $\sigma$ and $\mathrm{dom}(\sigma_1) \cap \mathrm{dom}(\sigma_2) = \emptyset, $ for any $\sigma_1, \sigma_2$. 

\begin{lemma}[Scope Check Composition]
  \label{compose}
  Suppose $\mathrm{FV}(\mathrm{codom}(\sigma_2)) \cap \mathrm{dom}(\sigma_1) = \emptyset$. If $\mathsf{Scope}(L, \sigma_1)$ and $\mathsf{Scope}(\sigma_1L + L', \sigma_2)$
  for some fresh $L'$, then $\mathsf{Scope}(L, \sigma_2 \cdot \sigma_1)$.
\end{lemma}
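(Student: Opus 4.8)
The plan is to unfold the definition of $\mathsf{scope}$ and verify its two clauses for the composite substitution $\sigma = \sigma_2 \cdot \sigma_1$ against the list $L$, doing a two-way case split (on whether the variable under consideration lies in $\mathrm{dom}(\sigma_1)$) in each clause. First I would collect the structural facts that drive all the bookkeeping: by the idempotency convention, $\mathrm{FV}(\mathrm{codom}(\sigma_i)) \cap \mathrm{dom}(\sigma_i) = \emptyset$; by the disjointness convention $\mathrm{dom}(\sigma_1)\cap\mathrm{dom}(\sigma_2)=\emptyset$; and the hypothesis gives $\mathrm{FV}(\mathrm{codom}(\sigma_2))\cap\mathrm{dom}(\sigma_1)=\emptyset$. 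Consequently $\mathrm{dom}(\sigma_2\cdot\sigma_1)\subseteq\mathrm{dom}(\sigma_1)\cup\mathrm{dom}(\sigma_2)$, and $(\sigma_2\cdot\sigma_1)x=\sigma_2(\sigma_1 x)$ for $x\in\mathrm{dom}(\sigma_1)$ while $(\sigma_2\cdot\sigma_1)x=\sigma_2 x$ for $x\in\mathrm{dom}(\sigma_2)\setminus\mathrm{dom}(\sigma_1)$. I would also isolate one elementary ordering lemma about lists: $\sigma_1 L = L-\mathrm{dom}(\sigma_1)$ is $L$ with some entries deleted, so the order of surviving entries is preserved and $\sqsubset_{\sigma_1 L}$ refines $\sqsubset_L$; and since $L'$ is fresh, no variable of $\mathrm{codom}(\sigma_1)$ or $\mathrm{codom}(\sigma_2)$ occurs in $L'$, hence any relation $y \sqsubset_{\sigma_1 L + L'} z$ with $z\in\sigma_1 L$ forces $y$ into the prefix $\sigma_1 L$, giving $y\sqsubset_{\sigma_1 L} z$ and therefore $y\sqsubset_L z$.

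Next I would check clause (2), that $x\in\mathrm{dom}(\sigma_2\cdot\sigma_1)-L$ implies $\mathrm{FV}((\sigma_2\cdot\sigma_1)x)\cap L=\emptyset$. If $x\in\mathrm{dom}(\sigma_1)$, then $x\notin L$ and clause (2) of $\mathsf{scope}(L,\sigma_1)$ gives $\mathrm{FV}(\sigma_1 x)\cap L=\emptyset$; writing $\mathrm{FV}(\sigma_2(\sigma_1 x))$ as the union over $z\in\mathrm{FV}(\sigma_1 x)$ of $\{z\}$ (when $z\notin\mathrm{dom}(\sigma_2)$) or $\mathrm{FV}(\sigma_2 z)$ (when $z\in\mathrm{dom}(\sigma_2)$), the first kind of variable misses $L$ outright, and for the second kind $z\notin L$ and $z\notin L'$, so $z\in\mathrm{dom}(\sigma_2)-(\sigma_1 L+L')$, whence clause (2) of $\mathsf{scope}(\sigma_1 L+L',\sigma_2)$ gives $\mathrm{FV}(\sigma_2 z)\cap\sigma_1 L=\emptyset$; together with $\mathrm{FV}(\mathrm{codom}(\sigma_2))\cap\mathrm{dom}(\sigma_1)=\emptyset$ this upgrades to $\mathrm{FV}(\sigma_2 z)\cap L=\emptyset$. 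The subcase $x\in\mathrm{dom}(\sigma_2)\setminus\mathrm{dom}(\sigma_1)$ is the same computation applied directly to $\sigma_2 x$, using that $x\notin L$ implies $x\notin\sigma_1 L+L'$.

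Then clause (1), that $x\in\mathrm{dom}(\sigma_2\cdot\sigma_1)\cap L$ and $y\in\mathrm{FV}((\sigma_2\cdot\sigma_1)x)$ imply $y\sqsubset_L x$. If $x\in\mathrm{dom}(\sigma_1)$, clause (1) of $\mathsf{scope}(L,\sigma_1)$ gives $z\sqsubset_L x$ for all $z\in\mathrm{FV}(\sigma_1 x)$; for $y$ coming from such a $z$ with $z\notin\mathrm{dom}(\sigma_2)$ we have $y=z$ and are done, while if $z\in\mathrm{dom}(\sigma_2)$, idempotency of $\sigma_1$ gives $z\notin\mathrm{dom}(\sigma_1)$, so $z\in\sigma_1 L$, clause (1) of $\mathsf{scope}(\sigma_1 L+L',\sigma_2)$ gives $y\sqsubset_{\sigma_1 L+L'} z$, and the ordering lemma plus $z\sqsubset_L x$ and transitivity of $\sqsubset_L$ give $y\sqsubset_L x$. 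If $x\in\mathrm{dom}(\sigma_2)\setminus\mathrm{dom}(\sigma_1)$ with $x\in L$, then $x\in\sigma_1 L$ and clause (1) of $\mathsf{scope}(\sigma_1 L+L',\sigma_2)$ together with the ordering lemma finishes it directly. The one step I expect to need genuine care is the ordering lemma: making precise that deleting the $\mathrm{dom}(\sigma_1)$-entries from $L$ preserves relative order (so $\sqsubset_{\sigma_1 L}\subseteq\sqsubset_L$), and that the freshness of $L'$ really does keep every codomain variable out of the suffix, so that $y\sqsubset_{\sigma_1 L+L'} z$ with $z$ in the prefix collapses to $y\sqsubset_L z$; everything else is routine unfolding of the two clauses together with the idempotency and disjointness conventions.
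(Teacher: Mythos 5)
Your proposal is correct and follows essentially the same route as the paper's proof: the same two-clause unfolding of $\mathsf{scope}$, the same case split on membership in $\mathrm{dom}(\sigma_1)$ versus $\mathrm{dom}(\sigma_2)$, and the same use of idempotency, the codomain-disjointness hypothesis, and the order-preservation of $\sigma_1 L + L'$ relative to $L$. If anything you are slightly more explicit than the paper in isolating the ordering lemma and the step that upgrades $\mathrm{FV}(\sigma_2 z)\cap \sigma_1 L=\emptyset$ to $\mathrm{FV}(\sigma_2 z)\cap L=\emptyset$ via $\mathrm{FV}(\mathrm{codom}(\sigma_2))\cap\mathrm{dom}(\sigma_1)=\emptyset$.
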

\begin{proof}
  \begin{itemize}
  \item Case $y \in \mathrm{dom}(\sigma_2 \cdot \sigma_1) - L$.

    We need to show
    $\mathrm{FV}(\sigma_2 \sigma_1 y) \cap L = \emptyset$, i.e. $\mathrm{FV}(\sigma_2 (\mathrm{FV}(\sigma_1 y))) \cap L = \emptyset$. We know that $\mathrm{dom}(\sigma_2 \cdot \sigma_1) = \mathrm{dom}(\sigma_2) \uplus \mathrm{dom}(\sigma_1)$. Suppose $y \in \mathrm{dom}(\sigma_1)$,
    we know that $\mathrm{FV}(\sigma_1 y) \cap L = \emptyset$. For any $z \in \mathrm{FV}(\sigma_1 y) \cap \mathrm{dom}(\sigma_2)$, we have $\mathrm{FV}(\sigma_2 z) \cap (\sigma_1L + L') = \emptyset$, which implies  $\mathrm{FV}(\sigma_2 z) \cap L = \emptyset$. For any $z \in \mathrm{FV}(\sigma_1 y) - \mathrm{dom}(\sigma_2)$, we have $\mathrm{FV}(\sigma_2 z)= \{z\}, \{z\} \cap L = \emptyset$. Suppose $y \in \mathrm{dom}(\sigma_2)$, we need to show  $\mathrm{FV}(\sigma_2  y) \cap L = \emptyset$, this is the case since $\mathrm{FV}(\sigma_2 y) \cap (\sigma_1L + L') = \emptyset$ and $\mathrm{FV}(\mathrm{codom}(\sigma_2)) \cap \mathrm{dom}(\sigma_1) = \emptyset$.

  \item Case. $y \in \mathrm{dom}(\sigma_2 \cdot \sigma_1) \cap L$.

    We need to show
    for any $z \in \mathrm{FV}(\sigma_2 (\mathrm{FV}(\sigma_1 y))) \cap L$, $z \sqsubset_{L} y$.
    Let $x \in \mathrm{FV}(\sigma_1 y)$, we just need to show for any $z \in \mathrm{FV}(\sigma_2 x) \cap L$, $z \sqsubset_{L} y$. Suppose $x \notin \mathrm{dom}(\sigma_2)$. Then $\mathrm{FV}(\sigma_2 x)= \{x\}$. So $x \sqsubset_{L} y$ if $x \in L$. 
Suppose $x \in \mathrm{dom}(\sigma_2) \cap L$, we know that $(\mathrm{FV}(\sigma_2 x) \cap (\sigma_1 L+L')) \sqsubset_{\sigma_1 L+L'} x$. Since $z \in \mathrm{FV}(\sigma_2 x) \cap L$ implies $z \in \mathrm{FV}(\sigma_2 x) \cap (\sigma_1 L+L')$, we have
    $z \sqsubset_{\sigma_1 L+L'} x \sqsubset_{L} y$. Since $x \notin L'$ and
    $x \notin \mathrm{dom}(\sigma_1)$, we have $z \sqsubset_{L} x \sqsubset_{L} y$. Suppose $x \in \mathrm{dom}(\sigma_2) - L$, then $x \in \mathrm{dom}(\sigma_2) - (\sigma_1L + L')$, thus $\mathrm{FV}(\sigma_2 x) \cap (\sigma_1L + L') = \emptyset$, which implies $\mathrm{FV}(\sigma_2 x) \cap L = \emptyset$.

    Suppose $y \in \mathrm{dom}(\sigma_2)$, we just need to show
    for any $z \in \mathrm{FV}(\sigma_2 y) \cap L$, $z \sqsubset_{L} y$. Since $z \notin \mathrm{dom}(\sigma_1)$, we have $z \in  \mathrm{FV}(\sigma_2 y) \cap (\sigma_1 L+L')$. Thus $z \sqsubset_{\sigma_1 L+L'} y$, which implies $z \sqsubset_{L} y$. 

  \end{itemize}
\end{proof}

\begin{lemma}[Scope Invariant]
  \label{sc:inv}

  \
  
  \begin{enumerate}
  \item
    If $(\{(L_1, \Gamma_1,e_1,T_1),..., (L_n, \Gamma_n,e_n,T_n)\},\sigma) \longrightarrow (\{(L_1', \Gamma_1',e_1',T_1'),..., (L_m', \Gamma_m',e_m',T_m')\}, \sigma'\cdot \sigma)$,
    then $\mathsf{Scope}(L_i, \sigma')$ for all $i$.
  \item If $(\{(L_1, \Gamma_1,e_1,T_1),..., (L_n, \Gamma_n,e_n,T_n)\},\sigma) \longrightarrow^* (\{(L_1', \Gamma_1',e_1',T_1'),..., (L_m', \Gamma_m',e_m',T_m')\}, \sigma'\cdot \sigma)$,
    then $\mathsf{Scope}(L_i, \sigma')$ for all $i$. 
  \end{enumerate}
\end{lemma}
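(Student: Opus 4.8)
The plan is to prove \textbf{(1)} by a direct case analysis on which transition rule is applied, and \textbf{(2)} by induction on the length of the reduction, invoking the Scope Check Composition lemma (Lemma~\ref{compose}) once per step.

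For \textbf{(1)}, the rules $\longrightarrow_i$, $\longrightarrow_\forall$ and $\longrightarrow_c$ leave the accumulated substitution unchanged, so there the relevant ``new'' substitution $\sigma'$ is the identity, and $\mathsf{Scope}(L_i,\mathrm{id})$ holds vacuously because $\mathrm{dom}(\mathrm{id})=\emptyset$. The only substantive case is $\longrightarrow_a$ (rule $(1')$), where $\sigma'=\sigma''$. But then the applicability of the step already forces every required scope condition: the rule explicitly demands $\mathsf{Scope}(L,\sigma'')$ for the list $L$ of the tuple being resolved, and the term $\sigma''\Phi$ on the right-hand side is, by the definition of $\sigma\Phi$, only defined when $\mathsf{Scope}(L_j,\sigma'')$ holds for every remaining tuple $(L_j,\Gamma_j,e_j,T_j)$ in $\Phi$. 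Hence $\mathsf{Scope}(L_i,\sigma')$ for all $i$ is immediate from the side conditions baked into the definition of $\longrightarrow_a$.

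For \textbf{(2)}, I would induct on the number of steps; the zero-step case is trivial with $\sigma'=\mathrm{id}$. For the inductive step, split the reduction as $(\Phi,\sigma)\longrightarrow(\Phi_1,\sigma_1\cdot\sigma)\longrightarrow^*(\Phi',\sigma_2\cdot\sigma_1\cdot\sigma)$, so $\sigma'=\sigma_2\cdot\sigma_1$. Part~(1) gives $\mathsf{Scope}(L_j,\sigma_1)$ for every list $L_j$ of $\Phi$. The key bookkeeping observation is that every list $L'$ occurring in $\Phi_1$ has the form $\sigma_1 L_p + L''$, where $L_p$ is a list of $\Phi$ and $L''$ is a (possibly empty) block of freshly introduced eigenvariables: for $\longrightarrow_i,\longrightarrow_c$ the list is unchanged and $\sigma_1=\mathrm{id}$; for $\longrightarrow_\forall$ a fresh block $x_1,\dots,x_n$ is appended; and for $\longrightarrow_a$ the focused list becomes $\sigma_1 L + [\,x_i\mid x_i\notin\mathrm{FV}(B)\,]$ with the $x_i$ fresh, while every other list $L_j$ becomes $\sigma_1 L_j$. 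Applying the induction hypothesis to the sub-reduction then yields $\mathsf{Scope}(\sigma_1 L_p + L'',\sigma_2)$ for each such list. Combining this with $\mathsf{Scope}(L_p,\sigma_1)$ from part~(1) and the disjointness convention $\mathrm{FV}(\mathrm{codom}(\sigma_2))\cap\mathrm{dom}(\sigma_1)=\emptyset$ (the variables resolved by $\sigma_1$ are eliminated from all goals and never reappear), Lemma~\ref{compose} gives $\mathsf{Scope}(L_p,\sigma_2\cdot\sigma_1)=\mathsf{Scope}(L_p,\sigma')$ for every list $L_p$ of $\Phi$, as required.

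The main obstacle I expect is precisely the shape-invariant step in the proof of (2): one must carefully verify, rule by rule, that each list in the successor configuration is the image $\sigma_1 L_p$ of an original list with only genuinely fresh variables appended, so that the appended block can serve as the ``fresh $L'$'' hypothesis of Lemma~\ref{compose}, and one must check that the freshness and idempotence conventions indeed yield $\mathrm{FV}(\mathrm{codom}(\sigma_2))\cap\mathrm{dom}(\sigma_1)=\emptyset$. Once this invariant is in place, the rest is a routine assembly using Lemma~\ref{compose}.
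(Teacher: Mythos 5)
Your proposal is correct and follows exactly the route the paper intends: the paper's entire proof of this lemma is the one-liner ``By Lemma~\ref{compose} and induction,'' and your write-up is a faithful elaboration of it --- part (1) read off from the side conditions of rule $(1')$ and of the definition of $\sigma\Phi$, and part (2) by induction on the length of the reduction, assembling the step via the Scope Check Composition lemma after checking that every successor list has the shape $\sigma_1 L_p + L''$ with $L''$ fresh. The shape-invariant bookkeeping and the disjointness hypothesis $\mathrm{FV}(\mathrm{codom}(\sigma_2))\cap\mathrm{dom}(\sigma_1)=\emptyset$ that you flag as the delicate points are indeed the content the paper leaves implicit, and your treatment of them is sound.
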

\begin{proof}
  By Lemma \ref{compose} and induction.
\end{proof}

\begin{lemma}
  \label{app:ss}
  If $(\{(L_1, \Gamma_1,e_1,T_1),..., (L_n, \Gamma_n,e_n,T_n)\},\sigma) \longrightarrow^* (\emptyset, \sigma'\cdot \sigma)$ for some $\sigma'$, then $\sigma' \Gamma_i \vdash e_i' : \sigma' T_i$ and $|e_i'| = e_i$ for all $i$.

\end{lemma}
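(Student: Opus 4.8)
The plan is to adapt the proof of Lemma~\ref{res:sound} to the present setting, where two substitutions and the scope lists $L_i$ are carried along. I would argue by induction on the length of the reduction $(\{(L_1,\Gamma_1,e_1,T_1),\dots,(L_n,\Gamma_n,e_n,T_n)\},\sigma)\longrightarrow^*(\emptyset,\sigma'\cdot\sigma)$. The base case is vacuous ($n=0$). For the inductive step I would split off the first transition, $(\Phi_0,\sigma)\longrightarrow(\Psi,\rho_0\cdot\sigma)\longrightarrow^*(\emptyset,\sigma'\cdot\sigma)$, where $\rho_0$ is the identity if the first step is $\longrightarrow_i$, $\longrightarrow_\forall$ or $\longrightarrow_c$, and is the existential part of the matcher if it is $\longrightarrow_a$. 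By Lemma~\ref{sc:inv} the total substitution decomposes as $\sigma'=\tau\cdot\rho_0$ for a substitution $\tau$ accumulated along the tail, and the scope conditions guarantee that $\tau\Psi$ is well formed, so the induction hypothesis applies to the tail and delivers, for every tuple $(L,\Gamma,e,T)$ in $\Psi$, an annotated evidence $g$ with $\tau\Gamma\vdash g:\tau T$ and $|g|=e$.

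When the first step is $\longrightarrow_i$, $\longrightarrow_\forall$ or $\longrightarrow_c$ we have $\rho_0=\mathrm{id}$ and $\sigma'=\tau$, and the argument is exactly the one in Lemma~\ref{res:sound} with $\sigma'$ applied throughout: reassemble the $\lambda$-abstractions by \textsc{Lam}, the type abstractions by \textsc{Abs} (its side condition $\{x_1,\dots,x_n\}\cap\mathrm{FV}(\Gamma)=\emptyset$ is available because those eigenvariables are introduced fresh, which is exactly the data recorded in $L$), and the fixed point by \textsc{Mu} (the bound $\alpha$ is an evidence variable, hence never in $\mathrm{dom}(\sigma')$). The tuples carried over from $\Phi_0$ are immediate from the induction hypothesis since $\sigma'=\tau$.

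The substantive case is the first step being $\longrightarrow_a$, using an axiom $\kappa\mid\alpha:\forall x_1\dots x_m.\,T_1,\dots,T_n\Rightarrow B\in\Gamma$ with $B\mapsto_{\rho}A$, $\rho_0=\rho/\{x_1,\dots,x_m\}$, and the indicated scope check. Here I would split the matcher as $\rho=\rho_x\uplus\rho_0$, with $\rho_x$ acting on the freshly renamed bound variables $x_1,\dots,x_m$ and $\rho_0$ on the shared existential variables; idempotence, disjointness, and freshness of the $x_i$ make $\rho_x$, $\rho_0$ and $\tau$ pairwise commuting, so $\sigma'\cdot\rho_x=\tau\cdot\rho_0\cdot\rho_x=\tau\cdot\rho$. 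From Lemma~\ref{type:subst} I get $\sigma'\Gamma\vdash\kappa\mid\alpha:\forall x_1\dots x_m.\,\sigma'T_1,\dots,\sigma'T_n\Rightarrow\sigma'B$; instantiating each $\forall x_i$ with $\rho_x x_i$ by \textsc{Inst} yields the type $(\sigma'\cdot\rho_x)T_1,\dots,(\sigma'\cdot\rho_x)T_n\Rightarrow(\sigma'\cdot\rho_x)B$, and $(\sigma'\cdot\rho_x)T_j=\tau(\rho T_j)$ is precisely the type of the evidence $g_j$ supplied by the induction hypothesis for the $j$-th child tuple $(L',\rho_0\Gamma,e_j,\rho T_j)$ of $\Psi$, while $(\sigma'\cdot\rho_x)B=\tau(\rho B)\leftrightarrow_o^*\tau A=\sigma'A$ (the matcher fixing its target $A$, which is where the heuristic of rule~(1') that prioritizes existential-free heads is used). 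Repeated \textsc{App} followed by \textsc{Conv} then gives $\sigma'\Gamma\vdash(\kappa\mid\alpha)\,(\rho_x x_1)\cdots(\rho_x x_m)\,g_1\cdots g_n:\sigma'A$, whose erasure is $(\kappa\mid\alpha)\,e_1\cdots e_n$ because erasure discards type arguments. The tuples carried over from $\Phi_0$ occur in $\Psi$ already $\rho_0$-substituted, and $\tau\cdot\rho_0=\sigma'$ turns the induction hypothesis for them directly into the required conclusion.

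The step I expect to be the main obstacle is the bookkeeping in this last case: proving the commutation of $\rho_x$, $\rho_0$ and $\tau$ and the ensuing identities $(\sigma'\cdot\rho_x)T_j=\tau(\rho T_j)$ and $\tau\cdot\rho_0=\sigma'$, which is where idempotence, the disjointness of the successively chosen substitutions, the freshness of the renamed $x_i$, and the scope-composition lemma (Lemma~\ref{compose}) — hence the ``$L$'' machinery — are all genuinely needed. Everything else is a mechanical replay of Lemma~\ref{res:sound}.
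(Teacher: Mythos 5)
Your proposal is correct and follows essentially the same route as the paper: induction on the length of the resolution, case analysis on the first transition, with the matcher split into its bound-variable part (used for \textsc{Inst}) and its existential part (composed into the accumulated substitution), idempotence/disjointness to commute the pieces, and the scope invariant (Lemma~\ref{sc:inv}) to discharge the \textsc{Abs} side condition in the $\longrightarrow_\forall$ case. The paper packages your $\rho=\rho_x\uplus\rho_0$ decomposition as $\sigma'=\sigma_1/\{\underline{x}\}$ and leaves the commutation identities to ``idempotentness,'' but the argument is the same.
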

\begin{proof}
  By induction on the length of
  $(\{(L_1, \Gamma_1, e_1, T_1),..., (L_n, \Gamma_n, e_n, T_n)\}, \sigma)$ $\longrightarrow^* (\sigma'\cdot \sigma, \emptyset)$.
  \begin{itemize}
  \item Case $(\{(L, \Gamma, \alpha | \kappa, A)\}, \sigma) \longrightarrow_a (\emptyset, \sigma'' \cdot \sigma)$.
    
    In this case $\alpha | \kappa : \forall \underline{x}. B \in \Gamma$, $\sigma'' = \sigma'/\{\underline{x}\}$, $\mathsf{scope}(L, \sigma'')$ and $B \mapsto_{\sigma'} A$. By \textsc{Inst} rule and the idempotentness of $\sigma'$, we have $\sigma''\Gamma \vdash (\alpha | \kappa) \ (\sigma' \underline{x}) : \sigma' B \equiv \sigma'' \sigma' B = \sigma'' A$, where $|(\alpha | \kappa) \ (\sigma' \underline{x})| = \alpha | \kappa$.

  \item Case
    $(\{(L, \Gamma, (\alpha | \kappa)\ e''_1 \ ...  e_m'', A), (L_1, \Gamma_1, e_1, T_1), ...,
    (L_n, \Gamma_n, e_n, T_n)\}, \sigma) \longrightarrow_a $

    $(\{(L', \sigma'\Gamma, e_1'', \sigma_1 T_1'), ..., (L', \sigma' \Gamma, e_m'', \sigma_1 T_m'), (\sigma' L_1, \sigma' \Gamma_1, e_1, \sigma' T_1), ..., (\sigma' L_n, \sigma' \Gamma_n, e_n, \sigma' T_n)\}, \sigma' \cdot \sigma) \longrightarrow^* (\emptyset, \sigma'' \cdot \sigma' \cdot \sigma)$,

    where $\kappa | \alpha : \forall \underline{x}. T_1',..., T_n' \Rightarrow B \in \Gamma$ with $B \mapsto_{\sigma_1} A$, $\sigma' = \sigma_1/\{\underline{x}\}$, $\mathsf{scope}(L, \sigma')$ and $L' = \sigma' L + [x_i \ | \ x_i \notin \mathrm{FV}(B), 1 \leq i \leq m ]$. 
      
      By IH, we know that $\sigma'' \sigma'\Gamma' \vdash e_1''' : \sigma'' \sigma_1 T_1', ..., \sigma'' \sigma'\Gamma' \vdash e_m''' : \sigma'' \sigma_1 T_m', \sigma'' \sigma'\Gamma_1 \vdash e_1' : \sigma'' \sigma' T_1, ..., \sigma''\sigma'\Gamma_n \vdash e_n' : \sigma''\sigma' T_n$ and $|e_1'''| = e_1'',..., |e_m'''| = e_m'',|e_1'| = e_1, ..., |e_n'| = e_n$. We have $\sigma'' \sigma'\Gamma \vdash (\alpha | \kappa) \ (\sigma'' \sigma' \underline{x}): \sigma'' \sigma_1 T_1',..., \sigma'' \sigma_1 T_n' \Rightarrow \sigma''\sigma_1 B $. By 
      \textsc{Conv}, \textsc{App} and idempotentness, we have $\sigma'' \sigma' \Gamma \vdash (\alpha | \kappa) \ (\sigma'' \sigma' \underline{x})\ e_1'''\  ... \ e_m''' : \sigma''\sigma_1 B = \sigma'' \sigma' A$. Moreover, $|(\alpha | \kappa) \ (\sigma'' \sigma' \underline{x})\ e_1'''\  ... \ e_m'''| = (\alpha | \kappa) \ |e_1'''|\  ... \ |e_m'''| = (\alpha | \kappa) \ e_1''\  ... \ e_m''$.
      
      \item Case 
{
    $(\{(L, \Gamma, \lambda \alpha_1. ... \lambda \alpha_n. e, T_1, ..., T_n \Rightarrow A), (L_1, \Gamma_1, e_1, T_1), ..., (L_l, \Gamma_l, e_l, T_l) \}, \sigma) $
$\longrightarrow_i (\{(L, [\Gamma, \alpha_1 : T_1, ..., \alpha_n : T_n], e, A), (L_1, \Gamma_1, e_1, T_1), ..., (L_l, \Gamma_l, e_l, T_l) \}, \sigma) \longrightarrow^* (\emptyset, \sigma' \cdot \sigma)$}

By IH, we have $\sigma'\Gamma, \alpha_1 : \sigma'T_1, ..., \alpha_n : \sigma' T_n \vdash e' : \sigma' A, \sigma'\Gamma_1 \vdash e_1' : \sigma' T_1, ...,\sigma' \Gamma_l \vdash e_l' : \sigma' T_l$
with $|e'| = e, |e_1'| = e_1, ..., |e_l'| = e_l$. Thus by \textsc{Lam} rule, we have $\sigma'\Gamma \vdash \lambda \alpha_1 .... \lambda \alpha_n . e' : \sigma' T_1 
, ..., \sigma' T_n \Rightarrow \sigma' A$ and $| \lambda \alpha_1 .... \lambda \alpha_n . e'| = \lambda \alpha_1. ... \lambda \alpha_n. e$.

\item Case 
{
  $(\{(L, \Gamma,  e, \forall x_1. ...\forall x_m . T), (L_1, \Gamma_1, e_1, T_1), ..., (L_l, \Gamma_l, e_l, T_l) \}, \sigma)\longrightarrow_{\forall} $

  $(\{([L, x_1,..., x_m],\Gamma, e, T), (L_1, \Gamma_1, e_1, T_1), ..., (L_l, \Gamma_l, e_l, T_l) \}, \sigma) \longrightarrow^* (\emptyset, \sigma' \cdot \sigma) $}

By IH, we have $\sigma' \Gamma \vdash e' : \sigma' T, \sigma'  \Gamma_1 \vdash e_1' : \sigma'  T_1, ...,\sigma'  \Gamma_l \vdash e_l' : \sigma'  T_l$
with $|e'| = e, |e_1'| = e_1, ..., |e_l'| = e_l$. By Lemma \ref{sc:inv} (2), $\mathsf{scope}([L, x_1,..., x_m], \sigma')$. So $\mathrm{FV}(\mathrm{codom}(\sigma')) \cap \{x_1,..., x_m\} = \emptyset$. Thus by \textsc{Abs} rule, we have $\sigma'\Gamma \vdash \lambda x_1 .... \lambda x_m .  e' : \forall x_1 ... . \forall x_m . \sigma' T = \sigma' (\forall x_1 ... . \forall x_m . T)$ and $|\lambda x_1 .... \lambda x_m . e'| = e$.

\item Case  $(\{(L, \Gamma, \mu \alpha . e, T), (L_1, \Gamma_1, e_1, T_1), ..., (L_n, \Gamma_n, e_n, T_n)\}, \sigma) \longrightarrow_c $

  $(\{(L, [\Gamma, \alpha : T], e, T), (L_1, \Gamma_1, e_1, T_1), ..., (L_n, \Gamma_n, e_n, T_n)\}, \sigma) \longrightarrow^* (\emptyset, \sigma' \cdot \sigma)$
    
    By IH, we know that $\sigma' \Gamma, \alpha : \sigma' T \vdash e' : \sigma' T, \sigma'\Gamma_1 \vdash e_1' : \sigma' T_1, ..., \sigma' \Gamma_n \vdash e_n' : \sigma' T_n$ and $|e_i'| = e_i$ for all $i$. By \textsc{Mu} rule, we have $\sigma' \Gamma \vdash \mu \alpha . e' : \sigma' T$. Thus $|\mu \alpha. e'| = \mu \alpha. e$.

  \end{itemize}
\end{proof}

\begin{theorem}[Soundness of ERSM]
 If $(\{([], \Gamma, e, T)\}, \mathrm{id}) \longrightarrow^* (\emptyset,\sigma)$ and $\mathrm{FV}(\Gamma) = \mathrm{FV}(T) = \emptyset$, then $ \Gamma \vdash e' : T$ and $|e'| = e$. 
\end{theorem}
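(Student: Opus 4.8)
The plan is to obtain the theorem as an immediate corollary of Lemma~\ref{app:ss}, after accounting for the closedness hypothesis on $\Gamma$ and $T$. First I would observe that, by the form of the ERSM transitions in Definition of ERSM with Scope Check, every step either leaves the accumulating substitution unchanged (rules $\longrightarrow_i$, $\longrightarrow_\forall$, $\longrightarrow_c$) or replaces $\sigma$ by $\sigma'' \cdot \sigma$ for a freshly computed matcher $\sigma''$ (rule $\longrightarrow_a$). Hence, by a trivial induction on the length of the derivation, starting from $\mathrm{id}$ the resulting substitution $\sigma$ has the shape $\sigma' \cdot \mathrm{id}$ for some substitution $\sigma'$ (namely the left-to-right composition of all the matchers produced along the way, or $\mathrm{id}$ if none were produced). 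This is exactly the hypothesis needed to invoke Lemma~\ref{app:ss}.

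Next I would apply Lemma~\ref{app:ss} to the one-element configuration $\{([], \Gamma, e, T)\}$ with starting substitution $\mathrm{id}$. Since by assumption $(\{([], \Gamma, e, T)\}, \mathrm{id}) \longrightarrow^* (\emptyset, \sigma' \cdot \mathrm{id})$, the lemma produces an evidence $e'$ with $\sigma' \Gamma \vdash e' : \sigma' T$ in $\mathbf{F}_2^\mu$ and $|e'| = e$. Finally I would use the hypothesis $\mathrm{FV}(\Gamma) = \mathrm{FV}(T) = \emptyset$: a substitution acts as the identity on a closed environment and on a closed type, so $\sigma' \Gamma \equiv \Gamma$ and $\sigma' T \equiv T$. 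Therefore $\Gamma \vdash e' : T$ with $|e'| = e$, which is the statement.

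The statement itself is thus not the obstacle; the genuine work has already been carried out in Lemmas~\ref{compose}, \ref{sc:inv}, and especially \ref{app:ss}, where the scope-check bookkeeping — the in-scope list $L$, the predicate $\mathsf{scope}(L,\sigma)$, and the idempotence/disjointness assumptions on the substitutions — is what makes the inductive soundness argument close. The only mild care needed at this top level is to note that the empty initial scope list $[]$ makes $\mathsf{scope}([], \sigma)$ vacuously true for every $\sigma$ (condition (1) is empty because $\mathrm{dom}(\sigma)\cap[] = \emptyset$, and condition (2) holds because $\mathrm{FV}(\sigma x)\cap[] = \emptyset$), so the scope machinery imposes no hidden obligation at the start, and that the closedness of the top-level $\Gamma$ and $T$ is precisely what lets us discard the accumulated $\sigma'$ from the conclusion.
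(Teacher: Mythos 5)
Your proposal is correct and takes essentially the same route as the paper, whose entire proof of this theorem is the single line ``By Lemma~\ref{app:ss}''; you simply make explicit the (routine) instantiation of that lemma at the singleton configuration with starting substitution $\mathrm{id}$ and the use of closedness of $\Gamma$ and $T$ to discard the accumulated substitution. The added remarks about the final substitution having the shape $\sigma'\cdot\mathrm{id}$ and about $\mathsf{scope}([],\sigma)$ being vacuous are accurate and harmless elaborations of what the paper leaves implicit.
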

\begin{proof}
  By Lemma \ref{app:ss}.
\end{proof}

We now can understand the error message when we
try to type check the following declarations in FCR.

\begin{verbatim}
K : forall p x y . p (G (F Z x (S y)) (F x y (S (S Z)))) => p (F Z (S x) y)
    
K2 : forall qa . (forall p x y . p (qa (F Z x (S y))) => p (F Z (S x) y)) => B

h : B
h = K2 (\ c . K c)
\end{verbatim}

Note that type checking \texttt{h} will give a scope problem as (I) and (II)
above does not unify. FCR will print out the following message.

\begin{verbatim}
  scope error when matching [p1'] (qa0' (F Z [x2'] (S [y3'])))
  against [p1'] (G (F Z [x2'] (S [y3'])) (F [x2'] [y3'] (S (S Z))))
    when applying c : [p1'] (qa0' (F Z [x2'] (S [y3'])))
    when applying substitution [ qa0' : \ m1' .
                                          G m1' (F [x2'] [y3'] (S (S Z))) ]
    current variables list:
      qa0' p1' x2' y3'
    the current mixed proof term:
      K2 qa0'
        (\ p1' x2' y3' (c : [p1'] (qa0' (F Z [x2'] (S [y3'])))) .
             K (\ m1' . [p1'] m1') [x2'] [y3']
               ([p1'] (G (F Z [x2'] (S [y3'])) (F [x2'] [y3'] (S (S Z))))))
\end{verbatim}

\noindent The eigenvariables are the variables surrounded by brackets, and the substitution $[t/x]$
is represented as \texttt{[x : t]}. In this case the FCR will try to instantiate
the existential variable \texttt{qa0'} with \texttt{\textbackslash m1' . G m1' (F [x2'] [y3'] (S (S Z)))}. The $L$ is the \texttt{current variables list} for the $\mathsf{scope}$ function, we can see the substitution
will not pass the scope check. Moreover, we can inspect the mix proof term, we see that \texttt{qa0'} is
not in the scope of \texttt{[x2'], [y3']}. Thus the function \texttt{h} gives a typing error.

\section{Examples from Term Rewriting Literature}
\label{app:rewritingex}
We demonstrate 
how to use the prototype FCR to represent some nontrivial nonterminations in this section.
All of the examples in this section are from the existing term rewriting literature,
and we will focus on representing nonlooping nonterminating reductions.

The general idea of representing a nonterminating reduction trace is the following: we
need to see if the rule sequence can be generated by a corecursive function. Then
we will try to assign a type for the corecursive function. Most of the
efforts will be put on abstracting the right universal and existential type variables.
Obtaining the right type for the corecursive function usually requires interactions with
FCR and a good understanding of the type checking algorithm ERSM. 


\subsection{}
The following string rewriting system is from Endrullis and Zantema \cite{EndrullisZ15}, Example 29. 

\begin{center}
  $AL \to_1 LA$ \quad
  $RA \to_2 AR$\quad
  $BL \to_3 BR$\quad
  $RB \to_4 LAB$
\end{center}
Observe the following nonlooping nonterminating reduction: 

\begin{center}
  $\underline{BL}B \to_3 B\underline{RB} \to_4 \underline{BL}AB \to_3 B\underline{RA}B \to_2
  BA\underline{RB} \to_4 B\underline{AL}AB \to_1 \underline{BL}AAB \to_3 
  B\underline{RA}AB \to_2 BA\underline{RA}B 
  \to_2 BAA\underline{RB} \to_4 BA\underline{AL}AB \to_1 B\underline{AL}AAB \to_1
  \underline{BL}AAAB \to_3 ...  $
\end{center}

Observe that all the strings in the reduction can be described by the regular expression $BA^*(L|R)A^*B$. 
We focus on the 
rule sequence: $\underline{3}4\underline{32}41\underline{322}411.... $. The rule sequence can be generated
by the following corecursive function: $f \ a_1 \ a_2 \ a_3\ a_4 = a_3 \cdot a_4 \cdot (f \ a_1 \ a_2 \ (a_3 \cdot a_2)\ (a_4 \cdot a_1))$, i.e. $f \ 1\ 2\ 3\ 4$ gives the rule sequence. 

The term rewriting system corresponds to the above string rewriting system is the following. 
\begin{center}
  $A\ (L\ x) \to_1 L\ (A\ x)$ \quad
  $R\ (A\ x) \to_2 A\ (R\ x)$\quad
  $B\ (L\ x) \to_3 B\ (R\ x)$\quad
  $R\ (B\ x) \to_4 L\ (A\ (B\ x))$
\end{center}

The following is the type assignment for the function $f$, where the variable \texttt{r}
is an existential variable and will be instantiated by \texttt{(\textbackslash m1' . A (r2' m1'))} at the corecursive call of \texttt{f}.

{
\begin{verbatim}
K1 : A (L x) <= L (A x)
K2 : R (A x) <= A (R x)
K3 : B (L x) <= B (R x)
K4 : R (B x) <= L (A (B x))

f : forall p l r y . 
        (forall p x . p (l (A x)) => p (A (l x))) =>
        (forall p x . p (A (r x)) => p (r (A x))) =>
        (forall p x . p (B (r x)) => p (B (l x))) =>
        (forall p x . p (l (A (B x))) => p (r (B x))) => 
        p (B (l (B y)))

f a1 a2 a3 a4 = a3 (a4 (f (\ c . a1 c) 
                          (\ c . a2 c) 
                          (\ c . a3 (a2 c)) 
                          (\ c . a4 (a1 c))))

h : B (L (B y))
h = f K1 K2 (\ c . K3 c) K4
\end{verbatim}
  }

\subsection{}
The following string rewriting system is from Endrullis and Zantema \cite{EndrullisZ15}, Example 34. 

\begin{center}
  $ZL \to_1 LZ$ \quad
  $RZ \to_2 ZR$\quad
  $ZLL \to_3 ZLR$\quad
  $RRZ \to_4 LZRZ$
\end{center}
Observe the following nonlooping nonterminating reduction: 

\begin{center}
  $\underline{ZLL}ZZRZ \to_3 
  ZL\underline{RZ}ZRZ \to_2 ZLZ\underline{RZ}RZ 
  \to_2 ZLZZ\underline{RRZ} \to_4 ZLZ\underline{ZL}ZRZ \to_1 ZL\underline{ZL}ZZRZ \to_1
  \underline{ZLL}ZZZRZ \to_3 ZL\underline{RZ}ZZRZ \to_2 \cdot \to_2 \cdot \to_2 ZLZZZ\underline{RRZ} \to_4 
ZLZZ\underline{ZL}ZRZ \to_1 \cdot \to_1 \cdot \to_1 \underline{ZLL}ZZZZRZ \to_3 ... $
\end{center}
Observe the rule sequence: $32241132224111.... $. This rule sequence can be generated
by the following corecursive function: $f \ a_1 \ a_2 \ a_3\ a_4 = a_3 \cdot a_2 \cdot a_2 \cdot a_4 \cdot a_1 \cdot a_1 \cdot  (f \ a_1 \ a_2 \ (a_3 \cdot a_2)\ (a_4 \cdot a_1))$, i.e. $f \ 1\ 2\ 3\ 4$ gives the rule sequence. 

 The term rewriting system corresponds to the above string rewriting system is the following. 
\begin{center}
  $Z\ (L\ x) \to_1 L\ (Z \ x)$ \quad
  $R\ (Z\ x) \to_2 Z\ (R\ x)$\quad
  $Z\  (L\  (L\ x)) \to_3 Z\ (L\  (R\ x))$\quad
  $R\ (R\ (Z\ x)) \to_4 L\ (Z\  (R\  (Z\ x)))$
\end{center}

The following is the type that we assign to $f$. The existential variable \texttt{r} 
is instantiated by \texttt{(\textbackslash m1' . Z (r2' m1'))} at the corecursive call of \texttt{f}.

  {
\begin{verbatim}
K1 : Z (L x) <= L (Z x)
K2 : R (Z x) <= Z (R x)
K3 : Z (L (L x)) <= Z (L (R x))
K4 : R (R (Z x)) <= L (Z (R (Z x)))

f : forall p l r y . 
       (forall p x . p (l (Z x)) => p (Z (l x))) =>
       (forall p x . p (Z (r x)) => p (r (Z x))) =>
       (forall p x . p (Z (L (r x))) => p (Z (L (l x)))) =>                    
       (forall p x . p (l (Z (R (Z x)))) => p (r (R (Z x)))) => 
       p (Z (L (l (Z (Z (R (Z y)))))))

f a1 a2 a3 a4 = a3 (a2 (a2 (a4 (a1  (a1  (f (\ c . a1 c) 
                                            (\ c . a2 c) 
                                            (\ c . a3 (a2 c)) 
                                            (\ c . a4 (a1 c)))))))) 

h : (Z (L (L (Z (Z (R (Z y)))))))
h = f K1 K2 (\ c . K3 c) K4
\end{verbatim}
  }
  
\subsection{}

The following string rewriting system is from Endrullis and Zantema \cite{EndrullisZ15}, Example 33. 

\begin{center}
  $AAL \to_1 LAA$ \quad
  $RA \to_2 AR$\quad
  $BL \to_3 BR$\quad
  $RB \to_4 LAB$\quad
  $RB \to_5 ALB$
\end{center}
Observe the following nonlooping nonterminating reduction: 

\begin{center}
  $B\underline{RB} \to_4 
  \underline{BL}AB \to_3 B\underline{RA}B 
  \to_2 BA\underline{RB} \to_5 B\underline{AAL}B \to_1 \underline{BL}AAB \to_3
  B\underline{RA}AB \to_2 \cdot \to_2 BAA\underline{RB} \to_4 
 B\underline{AAL}AB \to_1  \underline{BL}AAAB \to_3 B\underline{RA}AAB \to_2 \cdot \to_2 \cdot \to_2 
 BAAA\underline{RB} \to_5 BAA\underline{AAL}B \to_1 \cdot \to_1 \underline{BL}AAAAB \to_3 B\underline{RA}AAAB
  \to_2 \cdot \to_2 \cdot \to_2 \cdot\to_2 BAAAA\underline{RB} \to_4 ... $
\end{center}
Observe the rule sequence:
$43251322, 41322251132222, 41132222251113222222.... $
This rule sequence can be generated
by the following corecursive function: $f \ a_1 \ a_2 \ a_3\ a_4 \ a_5 = a_4 \cdot a_3 \cdot a_2 \cdot a_5 
\cdot a_1 \cdot a_3 \cdot a_2 \cdot a_2 \cdot (f \ a_1 \ a_2 \ (a_3 \cdot a_2 \cdot a_2)\ (a_4 \cdot a_1)\ (a_5 \cdot a_1))$, i.e. $f \ 1\ 2\ 3\ 4\ 5$ gives the rule sequence. 

 The term rewriting system corresponds to the above string rewriting system is the following. 
\begin{center}
  $A \ (A\ (L\ x)) \to_1 L\ (A\ (A\ x))$ \quad
  $R\ (A\ x) \to_2 A\ (R\ x)$\quad
  $B\ (L\ x) \to_3 B\ (R\ x)$\quad
  $R\ (B\ x) \to_4 L\ (A\ (B\ x))$\quad
  $R\ (B\ x) \to_5 A\ (L\ (B\ x))$
\end{center}

We assign a type for $f$ in the following. The existential variable \texttt{l} is
instantiated with \texttt{(\textbackslash m1' . l1' (A (A m1')))} at the corecursive call of \texttt{f}.

\begin{verbatim}
K1 : A (A (L x)) <= L (A (A x))
K2 : R (A x) <= A (R x)
K3 : B (L x) <= B (R x)
K4 : R (B x) <= L (A (B x))
K5 : R (B x) <= A (L (B x))

f : forall p l r y . 
       (forall p x . p (l (A (A x))) => p (A (A (l x)))) => 
       (forall p x . p (A (r x)) => p (r (A x))) =>
       (forall p x . p (B (r x)) => p (B (l x))) =>
       (forall p x . p (l (A (B x))) => p (r (B x))) => 
       (forall p x . p (A (l (B x))) => p (r (B x))) => 
       p (B (r (B y)))

f a1 a2 a3 a4 a5 =
  a4 (a3 (a2 (a5 (a1 (a3 (a2 (a2 (f (\ c . a1 c)
                                    (\ c . a2 c) 
                                    (\ c . a3 (a2 (a2 c))) 
                                    (\ c . a4 (a1 c)) 
                                    (\ c . a5 (a1 c))))))))))

h : B (R (B y))
h = f K1 K2 K3 (\ c . K4 c) K5
\end{verbatim}

\subsection{}
\label{dum-elim}
  Consider the following rewriting system (from Zantema and Geser \cite{zantema1996}) :

  \begin{center}
    $F\ Z \ (S\ x)\ y \to_a F\ Z \ x \ (S \ y)$

    $F\ Z \ (S\ x)\ y \to_b F\ x \ y\ (S \ (S\ Z))$
  \end{center}
  
\noindent Observe the following nonlooping reduction trace.

\begin{center}
{  $F\ Z \ (S\ Z)\ (S\ Z) \to_b F\ Z \ (S\ Z)\ (S \ (S\ Z)) \to_b F\ Z
  \ (S \ (S\ Z)) \ (S \ (S\ Z)) \to_a F\ Z \ (S \ Z) \ (S \ (S\ (S\
  Z))) \to_b F\ Z \ (S \ (S\ (S\ Z)))\ (S\ (S\ Z)) \to_a ...$}
\end{center}

Note that the rule sequence for this reduction is: bbabaabaaab..... 
The nontermination can only be observed via
the full reduction tree. 
The following partial reduction tree produced by FCR is an infinite binary tree structure with each branch finite (by issuing command \texttt{:full 6 (F Z (S Z) (S Z))} to FCR). Each node is a triple (e.g. \texttt{[], B, F Z (S Z) (S (S Z))}), the first element denotes the redex position of the parent (which is a list of number, but all of them are at root position, hence \texttt{[]}), second element denotes the label of the rewrite rule applied, the third element denotes
the contractum.

  {
\begin{verbatim}
 [], _, F Z (S Z) (S Z)
|
+- [], B, F Z (S Z) (S (S Z))
|  |
|  +- [], B, F Z (S (S Z)) (S (S Z))
|  |  |
|  |  +- [], B, F (S Z) (S (S Z)) (S (S Z))
|  |  |
|  |  `- [], A, F Z (S Z) (S (S (S Z)))
|  |     |
|  |     +- [], B, F Z (S (S (S Z))) (S (S Z))
|  |     |  |
|  |     |  +- [], B, F (S (S Z)) (S (S Z)) (S (S Z))
|  |     |  |
|  |     |  `- [], A, F Z (S (S Z)) (S (S (S Z)))
|  |     |     |
|  |     |     +- [], B, F (S Z) (S (S (S Z))) (S (S Z))
|  |     |     |
|  |     |     `- [], A, F Z (S Z) (S (S (S (S Z))))
|  |     |
|  |     `- [], A, F Z Z (S (S (S (S Z))))
|  |
|  `- [], A, F Z Z (S (S (S Z)))
|
`- [], A, F Z Z (S (S Z))
\end{verbatim}}

Note that the rule sequence can be described by the corecursive function 
$f\ a_1 \ a_2 = a_2\ (f\ (\lambda c . a_1\ c)\ (\lambda c . a_2\ (a_1\ c)))$. 
We assign a type for $f$ in the following. The universal type variable $f$ is instantiated
by \texttt{\textbackslash m1' m2' m3' . f1' m1' m2' (S m3')} at the corecursive call of function \texttt{f}. We observe \texttt{step h 7} gives \texttt{F Z (S Z) (S (S (S (S Z))))}, which is the reducible leaf at depth 6 in the reduction tree.

{
\begin{verbatim}
A : forall p x y . p (F Z x (S y)) => p (F Z (S x) y)
B : forall p x y . p (F x y (S (S Z))) => p (F Z (S x) y)

f : forall p f . (forall p x y . p (f Z x (S y)) => p (f Z (S x) y)) => 
                 (forall p y . p (f Z y (S (S Z))) => p (f Z (S Z) y)) => 
                 p (f Z (S Z) (S Z))
f a1 a2 = a2 (f (\ c . a1 c) (\ c . a2 (a1 c)))

h : F Z (S Z) (S Z)
h = f A (\ c . B c)
step h 7
\end{verbatim}
  }

\subsection{}
\dbend

Consider the following one rule rewriting system (from Zantema and Geser \cite{zantema1996}) : 
  
  \begin{center}
    $F\ Z \ (S\ x)\ y \to_K G\ (F\ Z \ x \ (S \ y)) \ (F\ x \ y\ (S \ (S\ Z)))$
  \end{center}

  Note that the rewrite system in Section \ref{dum-elim} is the \textit{dummy eliminated} version
  of this rewriting system. Issuing command \texttt{:inner 6 (F Z (S Z) (S Z))} to FCR, we obtain the following reduction trace.

\begin{verbatim}
the execution trace is:
 F Z (S Z) (S Z)
-K-> G (F Z Z (S (S Z))) (F Z (S Z) (S (S Z)))
-K-> G (F Z Z (S (S Z)))
       (G (F Z Z (S (S (S Z)))) (F Z (S (S Z)) (S (S Z))))
-K-> G (F Z Z (S (S Z)))
       (G (F Z Z (S (S (S Z))))
          (G (F Z (S Z) (S (S (S Z)))) (F (S Z) (S (S Z)) (S (S Z)))))
-K-> G (F Z Z (S (S Z)))
       (G (F Z Z (S (S (S Z))))
          (G (G (F Z Z (S (S (S (S Z))))) (F Z (S (S (S Z))) (S (S Z))))
             (F (S Z) (S (S Z)) (S (S Z)))))
-K-> G (F Z Z (S (S Z)))
       (G (F Z Z (S (S (S Z))))
          (G (G (F Z Z (S (S (S (S Z)))))
                (G (F Z (S (S Z)) (S (S (S Z))))
                   (F (S (S Z)) (S (S Z)) (S (S Z)))))
             (F (S Z) (S (S Z)) (S (S Z)))))
-K-> G (F Z Z (S (S Z)))
       (G (F Z Z (S (S (S Z))))
          (G (G (F Z Z (S (S (S (S Z)))))
                (G (G (F Z (S Z) (S (S (S (S Z)))))
                      (F (S Z) (S (S (S Z))) (S (S Z))))
                   (F (S (S Z)) (S (S Z)) (S (S Z)))))
             (F (S Z) (S (S Z)) (S (S Z)))))
\end{verbatim}

In this case the rule sequence is pretty simple, so we cannot learn much from
the rule sequence. But when we observe the redexes, the reduction appear to
have the same patterns as the one in Section \ref{dum-elim}. The dummy elimination
technique makes the reduction pattern explicit in the rule sequence, it inspires
us to arrive at the following representation.

\begin{verbatim}
K : F Z (S x) y <= G (F Z x (S y)) (F x y (S (S Z)))

f : forall p qa qb f .
        (forall p x y . p (qa (f Z x (S y)) x y) => p (f Z (S x) y)) => 
        (forall p y . p (qb (f Z y (S (S Z))) y) => p (f Z (S Z) y)) =>
         p (f Z (S Z) (S Z)) 
f a1 a2 = a2 (f (\ c . a1 c) (\ c . (a2 (a1 c))))

h : F Z (S Z) (S Z)
h = f (\ c . K c) (\ c . K c)

step h 7
\end{verbatim}

The function \texttt{f} follows the exact same pattern as in Section \ref{dum-elim}, but
its type reflect the two use case of the rule \texttt{K}, i.e. applying \texttt{K} to the left or right argument of \texttt{G}. For each case we use a existential variable 
to capture the resulting contexts. Note that the existential variable \texttt{qa} has arity
3 and the existential variable \texttt{qb} has arity 2. Let us observe the following
fully annotated \texttt{h} and \texttt{f} from FCR. Notice that the third argument for
\texttt{f} in the definition of \texttt{h} is \texttt{\textbackslash m1' m2' . G (F Z Z (S m2')) m1'} (the order of \texttt{m1'} and \texttt{m2'} is switched in the body). And the third
argument is \texttt{\textbackslash m1' m2' . qb2' (qa1' m1' m2' (S (S Z))) (S m2')} at the corecursive call of \texttt{f} in the definition of \texttt{f} (the variable \texttt{m2'} is duplicated).

{\small
\begin{verbatim}
lemmas
h : F Z (S Z) (S Z) =
  f (\ x1' . x1') (\ m1' m2' m3' . G m1' (F m2' m3' (S (S Z))))
    (\ m1' m2' . G (F Z Z (S m2')) m1')
    (\ m1' m2' m3' . F m1' m2' m3')
    (\ p4'
       x5'
       y6'
       (c : p4' (G (F Z x5' (S y6')) (F x5' y6' (S (S Z))))) .
         K (\ m1' . p4' m1') x5' y6' c)
    (\ p10' y11' (c : p10' (G (F Z Z (S y11')) (F Z y11' (S (S Z))))) .
         K (\ m1' . p10' m1') Z y11' c)
f : forall p qa qb f .
      (forall p x y . p (qa (f Z x (S y)) x y) => p (f Z (S x) y))
      =>
        (forall p y . p (qb (f Z y (S (S Z))) y) => p (f Z (S Z) y))
        =>
          p (f Z (S Z) (S Z)) =
  \ p0'
    qa1'
    qb2'
    f3'
    (a1 : forall p x y .
            p (qa1' (f3' Z x (S y)) x y) => p (f3' Z (S x) y))
    (a2 : forall p y .
            p (qb2' (f3' Z y (S (S Z))) y) => p (f3' Z (S Z) y)) .
      a2 (\ m1' . p0' m1') (S Z)
        (f (\ m1' . p0' (qb2' m1' (S Z)))
           (\ m1' m2' m3' . qa1' m1' m2' (S m3'))
           (\ m1' m2' . qb2' (qa1' m1' m2' (S (S Z))) (S m2'))
           (\ m1' m2' m3' . f3' m1' m2' (S m3'))
           (\ p10'
              x11'
              y12'
              (c : p10' (qa1' (f3' Z x11' (S (S y12'))) x11' (S y12'))) .
                a1 (\ m1' . p10' m1') x11' (S y12') c)
           (\ p16'
              y17'
              (c : p16'
                     (qb2' (qa1' (f3' Z y17' (S (S (S Z)))) y17' (S (S Z))) (S y17'))) .
                a2 (\ m1' . p16' m1') (S y17')
                  (a1 (\ m1' . p16' (qb2' m1' (S y17'))) y17' (S (S Z)) c)))
\end{verbatim}
}

\subsection{}
\dbend 

The following term rewriting system is adapted from a string rewriting system in~\cite{EndrullisZ15}(Section 7), no current automated termination checker can detect the nontermination for this example.  

\begin{center}
$Bl\ (B\ x) \to_1 B\ (Bl\ x)$

  $Bl\ (Cl\ (Dl\ x)) \to_2 B\ (Cl\ (D\ x))$

  $D\ (Dl\ x) \to_3 Dl\ (D\ x)$

  $Al\ (X\ x) \to_4 Al\ (Bl\ (Bl\ x))$

  $B\ (X\ x) \to_5 X\ (Cl\ (Y\ x))$

  $Bl\ (Cl\ (Dl\ x)) \to_6 X\ (Cl\ (Y\ x))$

  $Y\ (D\ x) \to_7 Dl\ (Y\ x)$

  $Y\ (El\ x) \to_8 Dl\ (Dl\ (El\ x))$  
\end{center}

\noindent Observe the following nonlooping reduction trace ($\to_{a,b}$ is a shorthand
for $\to_a \cdot \to_b$): 
\begin{center}
  
  $Al \ (Bl \ \underline{(Bl \ (Cl \ (Dl} \ (Dl \ (El \ x)))))) \to_2 Al \ (Bl \ (B \ (Cl \ (D \ (Dl \ (El \ x)))))) \to_{1,3} 
  Al \ (B \ \underline{(Bl \ (Cl \ (Dl} \ (D \ (El \ x)))))) \to_6
  Al \ (B \ (X \ (Cl \ (Y \ (D \ (El \ x)))))) \to_{5,7} \underline{Al \ (X} \ (Bl \ (Cl \ (Dl \ \underline{(Y \ (El} \ x))))))
  \to_{4,8} Al \ (Bl \ (Bl \ \underline{(Bl \ (Cl \ (Dl} \ (Dl \ (Dl \ (El \ x)))))))) \to_2 
  Al \ (Bl \ (Bl \ (B \ (Cl \ (D \ (Dl \ (Dl \ (El \ x)))))))) \to_{1,3,1,3} 
  Al \ (B \ (Bl \ \underline{(Bl \ (Cl \ (Dl} \ (Dl \ (D \ (El \ x)))))))) \to_2 Al \ (B \ (Bl \ (B \ (Cl \ (Dl \ (Dl \ (D \ (El \ x))))))))
  \to_{1,3} Al \ (B \ (B \ \underline{(Bl \ (Cl \ (Dl} \ (D \ (D \ (El \ x)))))))) \to_6 Al \ (B \ (B \ (X \ (Cl \ (X \ (D \ (D \ (El \ x))))))))
  \to_{5,7,5,7} \underline{Al \ (X} \ (Bl \ (Bl \ (Cl \ (Dl \ (Dl \ \underline{(Y \ (El} \ x))))))))  \to_{4,8} Al \ (Bl \ (Bl \ (Bl \ (Bl \ (Cl \ (Dl \ (Dl \ (Dl \ (Dl \ (El \ x)))))))))) \to ...$
  
\end{center}

\noindent The rewriting system admits reductions of the form: 
{ $Al \ (Bl^n \ (Cl \ (Dl^n \ (El \ x)))))) \to^* Al \ (Bl^{n+1} \ (Cl \ (Dl^{n+1} \ (El \ x))))))$} for any for every $n>1$. The rule sequence of the above reduction is the following:
{ $213, 657, 48, 21313, 213, 65757, 48, 2131313, 21313, 213, 6575757, 48, ...$}. We now
represent this rule sequence by the following corecursive function:

\begin{center}
  $f\ a_1 \ a_2 \ a_3 \ a_4 \ a_5 \ a_6 \ a_7 \ a_8 \ b = (b \cdot a_6 \cdot
  a_5 \cdot a_7\cdot a_4 \cdot a_8) \ (f \ a_1 \ (a_2 \cdot a_1 \cdot a_3) \ a_3 \
  a_4 \ a_5 \ (a_6 \cdot a_5\ \cdot a_7) \ a_7 \ a_8\ \ (a_2 \cdot a_1
  \cdot a_3 \cdot a_1 \cdot a_3 \cdot b ))$
\end{center}

\noindent Note that $f \ 1 \ 2 \ 3\ 4\ 5\ 6\ 7\ 8\ (2 \cdot 1 \cdot 3) $ generates the rule sequence above. The following is the type we assign for $f$. 

{
\begin{verbatim}
K1 : Bl (B x) <= B (Bl x)
K2 : Bl (Cl (Dl x)) <= B (Cl (D x))
K3 : D (Dl x) <= Dl (D x)
K4 : Al (X x) <= Al (Bl (Bl x))
K5 : B (X x) <= X (Bl x)
K6 : Bl (Cl (Dl x)) <= X (Cl (Y x))
K7 : Y (D x) <= Dl (Y x)
K8 : Y (El x) <= Dl (Dl (El x))

f : forall p0 c b d y . 
      (forall p x . p (B (Bl x)) => p (Bl (B x))) =>
      (forall p x . p (B ( c (D x))) => p (Bl ( c (Dl x)))) =>
      (forall p x . p (Dl (D x)) => p (D (Dl x))) =>
      (forall p x . p (Al (Bl (Bl x))) => p (Al (X x))) =>
      (forall p x . p (X (Bl x)) => p (B (X x))) =>
      (forall p x . p (X ( c (Y x))) => p ( b (Cl ( d x)))) =>
      (forall p x . p (Dl (Y x)) => p (Y (D x))) =>
      (forall p x . p (Dl (Dl (El x))) => p (Y (El x))) =>
      (forall p x . p (B ( b (Cl ( d (D x))))) => p (Bl (Bl ( c (Dl (Dl x)))))) =>
      p0 (Al (Bl (Bl ( c (Dl (Dl (El y)))))))

f a1 a2 a3 a4 a5 a6 a7 a8 b = 
   b (a6 (a5 (a7 (a4 (a8 (f a1 
                            (\ c1 . a2 (a1 (a3 c1))) 
                            a3 
                            a4 
                            a5 
                            (\ c1. a6 (a5 (a7 c1))) 
                            a7 
                            a8 
                            (\ c1 . a2 (a1 (a3 (a1 (a3 (b c1))))))))))))

h : (Al (Bl (Bl ( Cl (Dl (Dl (El y)))))))
h = f K1 K2 K3 K4 K5 K6 K7 K8 (\ c . K2 (K1 (K3 c))) 
\end{verbatim}
}

\noindent Note that the quantified variables \texttt{b,d} in the type of \texttt{f} are existential variables.
In the corecursive call
of \texttt{f}, the variable \texttt{c} will be instantiated with \texttt{(\textbackslash m1' . Bl (c1' (Dl m1')))}
, \texttt{b} will be instantiated with \texttt{(\textbackslash m1' . B (b2' m1'))} and \texttt{d} will be instantiated with \texttt{(\textbackslash m1' . d3' (D m1'))}.

\subsection{}
\dbend 

The following rewriting system is from Emmes et. al. \cite{emmes2012proving}, which according to them is outside the scope of the their nontermination detection techniques.

\begin{center}
  $G \ T\ T \ x \ (S\ y) \to_1 G\ (N\ x) \ (N\ y)\ (S\ x) \ (D \ (S\ y))$

  $N \ Z \to_2 T$
  
  $N \ (S\ x) \to_3 N\ x$
  
  $D\ Z \to_4 Z$
  
  $D\ (S\ x) \to_5 S\ (S\ (D\ x))$
\end{center}

\noindent Observe the following nonlooping nonterminating reduction trace for $G \ T\ T \ Z \ (S\ Z)$ (using left to right, inner-most reduction strategy).

\begin{center}
 $G\ T\ T\ Z\ (S\ Z) \to_1 G\ (N\ Z)\ (N\ Z) (S\ Z) (D\ (S\ Z))
\to_2 G\ T\ (N\ Z) (S\ Z) (D\ (S\ Z))
\to_2 G\ T\ T\ (S\ Z) (D\ (S\ Z))
\to_5 G\ T\ T\ (S\ Z) (S\ (S\ (D\ Z)))
\to_4 G\ T\ T\ (S\ Z) (S\ (S\ Z))
\to_1 G\ (N\ (S\ Z)) (N\ (S\ Z)) (S\ (S\ Z)) (D\ (S\ (S\ Z)))
\to_3 G\ (N\ Z) (N\ (S\ Z)) (S\ (S\ Z)) (D\ (S\ (S\ Z)))
\to_2 G\ T\ (N\ (S\ Z)) (S\ (S\ Z)) (D\ (S\ (S\ Z)))
\to_3 G\ T\ (N\ Z) (S\ (S\ Z)) (D\ (S\ (S\ Z)))
\to_2 G\ T\ T\ (S\ (S\ Z)) (D\ (S\ (S\ Z)))
\to_5 G\ T\ T\ (S\ (S\ Z)) (S\ (S\ (D\ (S\ Z))))
\to_5 G\ T\ T\ (S\ (S\ Z)) (S\ (S\ (S\ (S\ (D\ Z)))))
\to_4 G\ T\ T\ (S\ (S\ Z)) (S\ (S\ (S\ (S\ Z)))) ... $
\end{center}

\noindent The rule sequence is of the shape $1,22,54,1,3232,554,1,3323332,55554...$. This rule sequence can be represented by the following corecursive equation.

\begin{center}
  $f\ a_1\ a_2\ b_2 \ a_3 \ b_3 \ a_4 \ a_5 = a_1\ a_2 \ b_2\ a_5 \
  a_4 \ (f \ a_1 \ (a_3 \cdot a_2)\ (b_3 \cdot b_2)\ a_3 \ (b_3 \cdot
  b_3) \ a_4 \ (a_5 \cdot a_5))$
\end{center}

\noindent Note that $f\ 1\ 2\ 2\ 3 \ 3 \ 4 \ 5$ gives rise to the rule sequence.
The following is the type that we assign to $f$.

{
\begin{verbatim}
K1 : forall p x y . p (G (N x) (N y) (S x) (D (S y))) => p (G T T x (S y))
K2 : forall p . p T => p (N Z)
K3 : forall p x . p (N x) => p (N (S x))
K4 : forall p . p Z => p (D Z)
K5 : forall p x . p (S (S (D x))) => p (D (S x))
f : forall p g n1 n2 s . 
      (forall p x y . p (g (n1 x) (n2 y) (S x) (D (s y))) => p (g T T x (s y))) =>
      (forall p . p T => p (n1 Z)) => 
      (forall p . p T => p (n2 Z)) => 
      (forall p x . p (n1 x) => p (n1 (S x))) => 
      (forall p x . p (n2 x) => p (n2 (s x))) =>
      (forall p . p Z => p (D Z)) => 
      (forall p x . p (s (s (D x))) => p (D (s x))) => 
       p (g T T Z (s Z))

f a1 a2 b2 a3 b3 a4 a5 =  
  a1 (a2 (b2 (a5 (a4 (f (\ c . a1 c) 
                        (\ c . a3 (a2 c)) 
                        (\ c . (b3 (b2 c))) 
                        (\ c . a3 c)
                        (\ c . b3 (b3 c))) 
                        a4 
                        (\ c . a5 (a5 c))))))

h : G T T Z (S Z)
h = f (\ c . K1 c) K2 K2 K3 K3 K4 K5
\end{verbatim}
  
  }

\noindent Note that \texttt{n1, n2} in the type of \texttt{f} are existential variables. At the corecursive call of \texttt{f}, variable \texttt{g} is instantiated by 
\texttt{(\textbackslash m1' m2' m3' m4' . g1' m1' m2' (S m3') m4')}, variable \texttt{n1} is instantiated
by \texttt{(\textbackslash m1' . n12' (S m1'))}, variable \texttt{n2} is instantiated
by \texttt{(\textbackslash m1' . n23' (s4' m1'))}, variable \texttt{s} is instantiated by \texttt{(\textbackslash m1' . s4' (s4' m1'))}.

\end{document}